\documentclass[11pt]{article}
\usepackage{graphicx}
\usepackage{latexsym}
\usepackage{amssymb}
\usepackage{amsmath}
\usepackage{amsthm}
\usepackage{thmtools}
\usepackage{mathtools}
\usepackage{forloop}
\usepackage[paper=letterpaper,margin=1in]{geometry}
\usepackage[ruled,vlined,linesnumbered]{algorithm2e}
\usepackage{paralist}
\usepackage{xcolor}\definecolor{lightgray}{gray}{0.9} 
\usepackage{todonotes}
\usepackage{tikz}
\usepackage[font=small]{caption}
\usepackage{hyperref}

\definecolor{darkgreen}{rgb}{0,0.5,0}
\hypersetup{
    unicode=false,            
    colorlinks=true,          
    linkcolor=blue!70!black,  
    citecolor=darkgreen,      
    filecolor=magenta,        
    urlcolor=blue!70!black    
}

\newtheorem{theorem}{Theorem}[section]
\newtheorem{lemma}[theorem]{Lemma}

\newtheorem{corollary}[theorem]{Corollary}
\newtheorem{definition}{Definition}[section]
\newtheorem{proposition}[theorem]{Proposition}
\newtheorem{observation}[theorem]{Observation}

\newcommand{\defcal}[1]{\expandafter\newcommand\csname c#1\endcsname{{\mathcal{#1}}}}
\newcommand{\defbb}[1]{\expandafter\newcommand\csname b#1\endcsname{{\mathbb{#1}}}}
\newcommand{\defvec}[1]{\expandafter\newcommand\csname v#1\endcsname{{\mathbf{#1}}}}
\newcounter{calBbCounter}
\forLoop{1}{26}{calBbCounter}{
    \edef\letter{\alph{calBbCounter}}
		\edef\Letter{\Alph{calBbCounter}}
    \expandafter\defcal\Letter
		\expandafter\defbb\Letter
		\expandafter\defvec\letter
}

\newcommand{\eps}{\varepsilon}
\newcommand{\nnR}{{\bR_{\geq 0}}}
\DeclareMathOperator{\trunc}{trunc}
\DeclareMathOperator{\rank}{rank}
\DeclareMathOperator*{\expect}{\mathbb{E}}
\newcommand{\mo}{m^{(o)}}

\newcommand{\barM}{\overline{\cM}}
\newcommand{\barv}{\overline{v}}
\newcommand{\barI}{\overline{\cI}}
\newcommand{\email}[1]{{\href{mailto:#1}{#1}}}
\newcommand{\ow}{{\bar{w}}}

\DeclareMathOperator{\Poly}{Poly}

\SetKwComment{Comment}{$\triangleright$\ }{}
\SetCommentSty{itshape}
\DontPrintSemicolon

\begin{document}

\title{Submodular Maximization over a Matroid $k$-Intersection: Multiplicative Improvement over Greedy}
\author{Moran Feldman\thanks{Department of Computer Science, University of Haifa. This work was done while the author was visiting Queen Mary University of London. E-mail: \email{moranfe@cs.haifa.ac.il}} \and Justin Ward\thanks{School of Mathematics, Queen Mary University of London. E-mail: \email{justin.ward@qmul.ac.uk}}}

\maketitle
\thispagestyle{empty}
\pagenumbering{Alph}
\begin{abstract}
We study the problem of maximizing a non-negative monotone submodular objective $f$ subject to the intersection of $k$ arbitrary matroid constraints. The natural greedy algorithm guarantees $(k + 1)$-approximation for this problem, and the state-of-the-art algorithm only improves this approximation ratio to $k$. We give a $\frac{2k \ln 2}{1 + \ln 2} + O(\sqrt{k}) < 0.819 k + O(\sqrt{k})$ approximation algorithm for this problem. Our result is the first multiplicative improvement over the approximation ratio of the greedy algorithm for general $k$. We further show that our algorithm can be used to obtain roughly the same approximation ratio also for the more general problem in which the objective is not guaranteed to be monotone (the sublinear term in the approximation ratio becomes $O(k^{2/3})$ rather than $O(\sqrt{k})$ in this case). 

All of our results hold also when the $k$-matroid intersection constraint is replaced with a more general matroid $k$-parity constraint. Furthermore, unlike the case in many of the previous works, our algorithms run in time that is independent of $k$ and polynomial in the size of the ground set. Our algorithms are based on a hybrid greedy local search approach recently introduced by Singer and Thiery~\cite{singer2025better} for the weighted matroid $k$-intersection problem, which is a special case of the problem we consider. Leveraging their approach in the submodular setting requires several non-trivial insights and algorithmic modifications since the marginals of a submodular function $f$, which correspond to the weights in the weighted case, are not independent of the algorithm's internal randomness. In the special weighted case studied by~\cite{singer2025better}, our algorithms reduce to a variant of the algorithm of~\cite{singer2025better} with an improved approximation ratio of $(k + 1) \ln 2 + O(\eps) < 0.694k + 0.694 + O(\eps)$, compared to an approximation ratio of $\frac{k+1}{2\ln 2} \approx 0.722k + 0.722$ guaranteed by Singer and Thiery~\cite{singer2025better}.

\medskip

\noindent \textbf{keywords:} submodular function, matroid $k$-parity, matroid intersection, local search, greedy
\end{abstract}


\newpage
\pagenumbering{arabic}
\section{Introduction} \label{sec:introduction}

An important branch of combinatorial optimization studies the maximization of linear and submodular functions subject to natural families of combinatorial constraints. One large such family is \emph{matroid $k$-intersection}---the family of constraints that can be represented as the intersection of $k$ matroid constraints. In $1978$, Fisher, Nemhauser and Wolsey~\cite{fisher1978analysis} showed that the natural greedy algorithm guarantees $(k + 1)$-approximation for the problem of maximizing a monotone submodular function subject to a matroid $k$-intersection constraint, and for linear functions the above approximation ratio improves to $k$~\cite{korte1978greedy,jenkyns1976efficacy}.

Improving over the above guarantees of the greedy algorithm has been an open question for many years, and until recently success has mostly been limited to special cases such as \emph{$k$-dimensional matching} constraints. Here, we are given a hypergraph $H=(V,E)$ such that the set $V$ of vertices can be partitioned into $k$ disjoint subsets (or parts) $V_1, V_2 \dotsc, V_k$, and each hyperedge $e \in E$ contains at most $1$ vertex from each one of the subsets $V_i$. Such hypergraphs are known as $k$-partite hypergraphs. The $k$-dimensional matching constraint requires that we select a set of vertex disjoint hyperedges from $E$ (i.e., a matching in $H$). This constraint can be captured by an intersection of $k$ partition matroid constraints, each enforcing that no vertex of a particular set $V_i$ is contained in two hyperedges.

Following a long line of works (including~\cite{arkin1998local,chandra2001greedy,berman2000approximation,berman2003optimizing,neuwohner2021improved,neuwohner2024limits,thiery2023thesis}), Neuwohner~\cite{neuwohner2023passing} obtained an approximation ratio of $0.4986k + O(1)$ for weighted $k$-dimensional matching (i.e., the problem of maximizing a linear function subject to a $k$-dimensional matching constraint).\footnote{We note that Thiery~\cite{thiery2023improved} derived better approximation results for small to moderate values of $k$, but for large $k$ values he only obtains a weaker guarantee of $0.5k$.} In the more general case of monotone submodular objectives, Ward~\cite{ward2012approximation} obtained a slightly weaker approximation ratio of $k/2 + O(1)$. In the less general case of a cardinality objective (i.e., when the goal is to maximize the number of elements in the solution), the best-known approximation is $k/3 + 
O(1)$~\cite{sviridenko2013large,cygan2013improved,furer2014approximating}. On the hardness side, Lee, Svensson and Thiery~\cite{lee2025asymptotically} have shown that even in the unweighted case, no polynomial time algorithm can obtain better than $k/12$-approximation, improving on a previous lower bound of $\Omega(k/\log k)$ due to Hazan, Safra and Schwartz~\cite{hazan2006complexity}. 

All the above algorithmic results apply even for the more general case of \emph{$k$-set packing} constraints, which are similar to $k$-dimensional matching constraints, but allow the graph $H$ to be a general $k$-uniform hypergraph rather than a $k$-partite hypergraph. Unlike $k$-dimensional matching constraints, $k$-set packing constraints are not captured by matroid $k$-intersection, but they have a different crucial property, namely, that multiple individual exchanges between two sets can be carried out simultaneously as a single valid exchange. This property holds for partition matroids and, more generally, for \emph{strongly base orderable} matroids, but does not necessarily hold even for a single \emph{general} matroid constraint. 

For general matroid $k$-intersection constraints, Lee, Sviridenko and Vondr{\'{a}}k~\cite{lee10submodular} gave a local search algorithm with an approximation ratio of $k - 1$ for linear functions and $k$ for monotone submodular functions. In the special case of a cardinality objective, the same authors~\cite{lee2013matroid} gave a different local search algorithm with an approximation ratio of 
$k/2 + \eps$. We note that while the improvement in approximation for the cardinality objective has been by a multiplicative factor of $2$, the improvements for submodular and linear functions have, until recently, been only by an additive term. This gap stems from the fact that the analysis presented in~\cite{lee2013matroid} is based on global arguments that are appropriate for the unweighted case in which all elements are equally valuable, but obtaining a result for a weighted problem requires more local arguments.

Recently, Singer and Thiery~\cite{singer2025better} have overcome this obstacle to give a $\frac{k}{2\ln2} + O(1) \approx 0.722k + O(1)$-approximation for linear objective functions by combining greedy and local search techniques. Like the result of Lee, Sviridenko and Vondr{\'{a}}k~\cite{lee2013matroid} for unweighted matroid $k$-intersection, the algorithm of~\cite{singer2025better} works in fact even for the more general class of \emph{matroid $k$-parity constraints}. This class may be viewed as a common generalization of matroid $k$-intersection and $k$-set packing constraints. See Figure~\ref{fig:results} for a graphical depiction of the relationships between the different families of constraints we have discussed. The figure also states the state-of-the-art result for each combination of constraint family and class of objective functions.

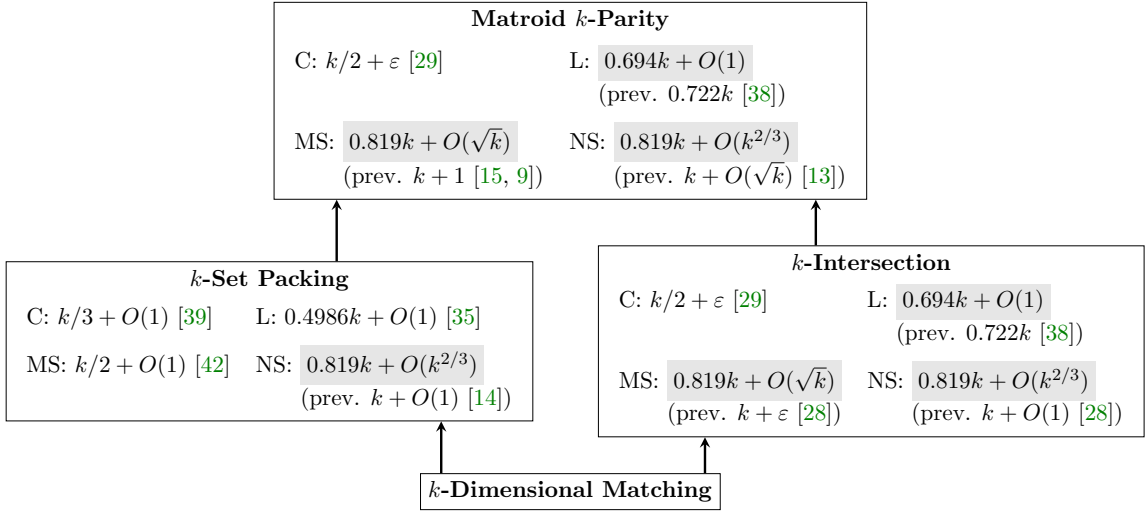
\begin{figure}
\begin{center}
\scalebox{0.8}{\begin{tikzpicture}
\node (k-DM) [draw, align=center] at (0, 0) {\textbf{$k$-Dimensional Matching}};

\node (k-SP) [draw, align=center] at (-5, 2.5) {\textbf{$k$-Set Packing} \\[2mm] \begin{tabular}{ll} C: $k/3 + O(1)$ \cite{sviridenko2013large} & L: $0.4986k + O(1)$ \cite{neuwohner2023passing} \\[2mm] MS: $k/2 + O(1)$ \cite{ward2012approximation} & NS: \colorbox{lightgray}{$0.819k + O(k^{2/3})$} \\ & \phantom{NS:} (prev. $k + O(1)$ \cite{feldman2011improved}) \end{tabular}};

\node (k-Intersect) [draw, align=center] at (5, 2.5) {\textbf{$k$-Intersection} \\[2mm] \begin{tabular}{ll} C: $k/2 + \eps$ \cite{lee2013matroid} & L: \colorbox{lightgray}{$0.694k + O(1)$} \\ & \phantom{L:} (prev. $0.722k$ \cite{singer2025better}) \\[2mm] MS: \colorbox{lightgray}{$0.819k + O(\sqrt{k})$} & NS: \colorbox{lightgray}{$0.819k + O(k^{2/3})$} \\ \phantom{MS:} (prev. $k + \eps$ \cite{lee10submodular}) & \phantom{NS:} (prev. $k + O(1)$ \cite{lee10submodular}) \end{tabular}};

\node (k-MP) [draw, align=center] at (0, 6.5) {\textbf{Matroid $k$-Parity} \\[2mm] \begin{tabular}{ll} C: $k/2 + \eps$ \cite{lee2013matroid} & L: \colorbox{lightgray}{$0.694k + O(1)$} \\ & \phantom{L:} (prev. $0.722k$ \cite{singer2025better}) \\[2mm] MS: \colorbox{lightgray}{$0.819k + O(\sqrt{k})$} & NS: \colorbox{lightgray}{$0.819k + O(k^{2/3})$} \\ \phantom{MS:} (prev. $k + 1$ \cite{fisher1978analysis,calinescu2011maximizing}) &  \phantom{NS:} (prev. $k + O(\sqrt{k})$ \cite{feldman2023how}) \end{tabular}};

\draw[-stealth,line width=0.4mm] (k-DM.north -| k-SP.335) -- (k-SP.335);
\draw[-stealth,line width=0.4mm] (k-DM.north -| k-Intersect.210) -- (k-Intersect.210);
\draw[stealth-,line width=0.4mm] (k-MP.south -| k-SP.50) -- (k-SP.50);
\draw[stealth-,line width=0.4mm] (k-MP.south -| k-Intersect.120) -- (k-Intersect.120);
\end{tikzpicture}}
\end{center}
\caption{The four families of constraints discussed in Section~\ref{sec:introduction}. An arrow from family A of constraints to family B indicates that family B generalizes family A. Below each constraint family, we list the state-of-the-art approximation ratios for maximizing cardinality (C), linear (L), monotone submodular (MS) and/or non-monotone submodular (NS) objectives subject to constraints of this family. New results of this paper are marked with a grey background (and the results they improve over appear below them in parentheses). To avoid repetition, the approximation ratios for $k$-Dimensional Matching constraints are omitted as they are identical to the ratios for the more general $k$-Set Packing constraints.} \label{fig:results}
\end{figure}

\subsection{Our Results} \label{ssc:our_results}
In this paper, we obtain the first result that multiplicatively improves over the approximation guarantee of the greedy algorithm for the problem of maximizing a monotone \emph{submodular} objective function subject to a \emph{general} matroid $k$-intersection constraint. Our technique is 
based on the one of Singer and Thiery~\cite{singer2025better}, and like theirs, works also for the more general family of matroid $k$-parity constraints. 

Let us briefly review the algorithm of Thiery and Singer.
Given some $\delta > 0$,\footnote{In our algorithm, we fix $\delta = 1/2$, but we leave it as a parameter here for consistency with~\cite{singer2025better}.} Thiery and Singer define a sequence of \emph{markers} (or thresholds), of the form $m_i = W(1-\delta)^{-i}$, where $W$ is the maximum weight of any element and $i \in \bZ_{\geq 0}$. These are used to partition the elements into classes according to their weight: an element $e$ appears in the $i$-th weight class if $w(e) \in [m_{i}, m_{i-1})$. The algorithm processes weight classes in descending order of weight and maintains a solution $A$, which is initially empty. When processing a weight class $C$, the algorithm applies local search to find a feasible approximately maximum weight extension $X \subseteq C$ to the current solution $A$. The elements of $X$ are then permanently added to $A$, and the algorithm continues to the next weight class. Intuitively, since all elements in $C$ have similar weights, one mainly cares about maximizing the number of elements in $X$. Let $X^*$ denote the largest possible extension $X$. Guarantees for the local search procedure for the unweighted matroid $k$-parity problem can be leveraged to show that $|X| \gtrsim (2/k) \cdot |X^*|$. However, the size of $X^*$ might be smaller than the number of elements of the optimal solution within the class $C$ because some elements of the optimal solution might be ``blocked'' by elements from heavier weight classes that are already in $A$. In order to bound the weight of the blocked elements, Thiery and Singer note that each selected element $e$ can block at most $k$ elements in total. Thus, if each blocked element has weight of at most $c \cdot w(e)$ for some $c < 1$, then one obtains $c\cdot k$ approximation for blocked elements. Unfortunately, $c$ might be very nearly 1, in which case the guarantee degenerates to $k$. However, this can happen only when the blocking element $e$ falls near the boundary of its weight class. To avoid this, Singer and Thiery multiply all markers by a single uniformly random shift $\tau$. They then show that for each element, the probability some marker $m_i$ lands near its weight is small, and thus the expected approximation for all blocked elements can indeed be bounded by $c \cdot k$ for some $c < 1$. 

A major obstacle for adapting the algorithm of Singer and Thiery to submodular functions is that, under such functions, there is no longer a static notion of weight. Previous works~\cite{lee10submodular,ward2012approximation,chakrabarti2015submodular,chekuri2015streaming} have adapted local search algorithms for weighted maximization to submodular objectives by using the marginal or incremental contributions of an element $e$ as a surrogate weight. In other words, if $f$ is the objective function, then the adaptations implicitly define the weight of $e$ to be $w(e) \triangleq f(A \cup \{e\}) - f(A)$, where $A$ is the current solution.\footnote{In some cases, $A$ is in fact some appropriate subset of the current solution, but this distinction is unimportant for the sake of this high-level discussion.} However, applying this idea to the algorithm of Singer and Thiery~\cite{singer2025better} breaks their solution for the problem of bounding the weight of the blocked elements. Recall that Singer and Thiery obtain this bound by applying a random shift, which guarantees that the probability that the weight of an element $e$ is close to the boundary of its weight class is small. However surrogate weights $w(e)$ defined as above depend on the solution $A$ constructed by the algorithm, and indirectly also on the random shift. This creates correlations between the weights and the shift, which make it difficult to bound the probability that the weight of an element is close to the boundary of its weight class.

Our solution for the above hurdle is to introduce an additional auxiliary weight $u(e)$ for every element $e$ in the optimal solution $O$. Unlike the surrogate weights $w(e)$, the weights $u(e)$ depend only on $O$, and not on the solution set $A$, and thus are independent of the random shift of the algorithm. However, as these weights depend on the unknown set $O$, they are unavailable to the algorithm, which must continue to use the weights $w(e)$. This limitation of the algorithm is not important when $w(e)$ and $u(e)$ roughly agree, and thus, the output of the algorithm is good (in expectation) in this case. In order to obtain a result that holds in general, we further show that whenever the two weights disagree significantly, this discrepancy can be used to obtain an alternative bound on the value of the output of the algorithm.

Along the way, we observe that since the weight classes are defined using exponentially decreasing thresholds, it is natural to draw the random shift used by the algorithm to set these thresholds from an exponential distribution. Using this distribution for the shift instead of the uniform distribution employed by~\cite{singer2025better} leads to some improvement also for linear objective functions compared to the approximation ratio of roughly $0.722k + O(1)$ obtained by~\cite{singer2025better}. More formally, we prove the following theorem.%
\begin{restatable}{theorem}{thmMonotoneResult}\label{thm:monotone_result}
For every $\eps > 0$, there exists an algorithm that runs in $\Poly(|E|, \eps^{-1})$ time and guarantees an approximation ratio of $\frac{2k \ln 2}{1 + \ln 2} + O(\sqrt{k}) \leq 0.819k + O(\sqrt{k})$ for maximizing a non-negative monotone submodular function $f\colon 2^E \to \nnR$ under a matroid $k$-parity constraint $(E, \cI)$. Moreover, if $f$ is linear, the approximation ratio of the algorithm improves to $(k + 1) \ln 2 + O(\eps) \leq 0.694k + 0.694 + O(\eps)$.
\end{restatable}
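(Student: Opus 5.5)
We would build the algorithm from the Singer--Thiery greedy/local-search hybrid, with two changes: the random shift of the markers is exponential instead of uniform, and weights are replaced by marginal contributions. Fix $\delta = 1/2$, so the markers are $m_i = \tau\cdot W 2^{-i}$. The shift is $\tau = 2^{-Z}$, where $Z$ is drawn so that $\ln$ of the distance from any fixed value to the marker just below it is (close to) uniform on a period. The weight classes are processed in decreasing order. When class $i$ is processed, the surrogate weight of an element $e$ is $w(e)=f(e\mid A)$, where $A$ is the current solution, and $e$ belongs to the class if $w(e)\in[m_i,m_{i-1})$. We then run the unweighted matroid $k$-parity local search of Lee, Sviridenko and Vondr\'ak (with swap size $O(1/\eps)$) to extend $A$ by an independent set $X_i$ of elements of the class, using the guarantee $|X_i| \geq (\frac{2}{k}-\eps)|X_i^*|$, where $X_i^*$ is a maximum feasible extension inside the class. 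Since there are only $O(\log(|E|/\eps))$ relevant classes and the local search is polynomial, the running time is $\Poly(|E|,\eps^{-1})$.

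\textbf{Linear case.} Here $w$ is static. We fix an optimal solution $O$ and charge its elements. Each $o\in O$ either (a) is still available to the local search in its own class, or (b) is blocked by previously chosen elements. For (a), the local-search guarantee gives $w(X_i)\ge \frac{2}{k}\cdot\frac{m_i}{m_{i-1}} w(O_i^{\text{avail}})$, up to a factor $2$ loss from the class width. For (b), each chosen element $e$ blocks at most $k$ elements of $O$ (the $k$-parity exchange property), and each of them has weight below the upper marker of the class in which $e$ was chosen. Writing $r(e)=m_{\text{class}(e)-1}/w(e)\in(1,2]$, the blocked weight is at most $k\sum_e r(e)w(e)$. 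With the exponential shift, $\ln r(e)/\ln 2$ is uniform on $[0,1]$ independently of $e$'s weight, so $\expect[r(e)]=\int_0^1 2^t\,dt = 1/\ln 2$, and similarly for the available part. The remaining step is to combine the two charges with the best convex weighting and optimise the split between blocked and available elements. I expect this to give $(k+1)\ln 2+O(\eps)$, the $+1$ coming from each element blocking itself or being counted once.

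\textbf{Submodular case.} This is the main obstacle: $w(e)$ depends on $A$, and therefore on $\tau$, so the uniformity argument fails. Following the introduction, we would define, for $o\in O$ in a fixed order, the auxiliary weight $u(o)=f(o\mid O_{<o})$. It is independent of $\tau$ and satisfies $\sum_o u(o)=f(O)$. We then split $O$ according to whether $w(o)$ (at the time $o$ is considered or blocked) is at least $u(o)/(1+\gamma)$ or not, with $\gamma\approx k^{-1/2}$. For elements where the two weights roughly agree, the linear-case argument applies with $u$ in place of $w$. The probability bound on markers landing near $u(o)$ is valid because $u$ is $\tau$-independent, and the agreement costs only a $(1+\gamma)$ factor. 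For elements where $w(o)<u(o)/(1+\gamma)$, submodularity gives $f(o\mid A)$ much smaller than $f(o\mid O_{<o})$. Summing these gaps, using monotonicity via $f(A\cup O)-f(A)\le\sum_o f(o\mid A)$, bounds their total $u$-weight by $O(1/\gamma)\cdot f(A)$ plus the part already accounted for.

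The final step is to balance the two regimes. The greedy-style guarantee $f(O)\le (k+1)f(A)$ applies to the discrepant part, while the shifted-marker bound with $\expect[r]=1/\ln 2$ applies to the agreeing part. Optimising the mixing coefficient yields the ratio $\frac{2k\ln 2}{1+\ln 2}$. The error terms $O(\gamma k)+O(1/\gamma)$ are minimised at $\gamma=k^{-1/2}$, giving the additive $O(\sqrt{k})$. I would expect the delicate part to be the correct accounting of the discrepancy set, making sure each $o$ is charged once even though its $w$-weight changes over the run.
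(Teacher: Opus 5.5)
\textbf{There are genuine gaps, and the main one is in the linear case, which is where the whole improvement has to come from.}

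\emph{Linear case.} In charge (b) you bound each blocked element of $O$ by the \emph{upper} marker of its blocker's class. That makes the blocked weight at most $k\sum_e r(e)w(e)$ with $r(e)>1$, which is a loss, not a gain: even granting $\bE[r(e)]=1/\ln 2$, it gives about $1.44k\cdot w(A)$, worse than greedy. Charge (a), after its factor-$2$ class-width loss, only gives $w(O_i^{\mathrm{avail}})\le k\,w(X_i)$. No weighting of these two charges yields $(k+1)\ln 2$, so ``I expect this to give'' is exactly where the proof is missing. Separately, averaging $r(e)$ over $e\in A$ is not legitimate, because which elements end up in $A$ depends on $\tau$.

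The paper puts the randomness on the side of $O$. Let $m^{(o)}$ be the smallest marker $\ge u(o)$ and $r_o=m^{(o)}/u(o)$; this depends on $\tau$ only through the markers, since $O$ and $u$ are fixed. The paper proves the \emph{deterministic} inequality $r_o u(o)\le w(N_o)$ for every $o\in O_{\le L}\setminus O^{(s)}$. Here $N_o\subseteq A_i$ is the blocker set from Lemma~\ref{lem:weights-main}, applied to $A_{\le i}$ and $A_{\le i-1}\cup(O\setminus O_{\le i-1})$. The argument is two cases:
\begin{itemize}
\item either $\ow(o)\le m_i\le w(N_o)$;
\item or $|N_o|\ge 2$, and then $w(N_o)\ge 2m_i=m_{i-1}\ge\ow(o)$.
\end{itemize}
Either way $w(N_o)$ is at least some marker $m_j\ge\ow(o)=u(o)$, hence at least $m^{(o)}$. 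The exceptional elements $O^{(s)}$ (one blocker, $\ow(o)>m_i$) are handled by local optimality. Optimality under 1-for-1 swaps gives $\ow(o)\le(1+\eps)w(N_o)$, and optimality under 2-for-1 swaps makes their blockers pairwise distinct, so they can be charged a second time. This gives $\sum_o r_o u(o)\le(k+1+2\eps)f(A)$, and $\bE[r_o]=1/\ln 2$ for each fixed $o$ finishes the argument. No $2/k$ local-search guarantee is used at all.

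\emph{Running time.} That last point is also why the paper meets the running-time claim and your proposal does not. The Lee--Sviridenko--Vondr\'ak local search needs exchanges whose size $p$ grows with $1/\eps$ (and, as used by Singer--Thiery, with $k$). The natural implementation then takes $|E|^{O(p)}$ time, which is not $\Poly(|E|,\eps^{-1})$. The paper's improvements (Definition~\ref{def:improvement}) add at most two elements and remove at most one, giving $O(\eps^{-1}|E|^4)$.

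\emph{Submodular case.} Your auxiliary weights $u$ and the discrepancy bound are the right ingredients; they are the paper's $u$ and Lemma~\ref{lem:alternative_bound}. However, you apply the $\tau$-independence of $u$ to the ``agreeing'' set. That set is defined through $w$ and therefore depends on $\tau$, so the correlation problem you set out to avoid reappears. What is needed is a per-element inequality valid for every $\tau$, namely Lemma~\ref{lem:O-other-bound}: $\rho_{o,d}u(o)\le w(N_o)+d[u(o)-\ow(o)]$. After that, $\bE[\rho_{o,d}]$ is computed with $u(o)$ fixed, and the loss is absorbed by $\sum_o[u(o)-\ow(o)]\le f(A)-f(A\mid O)$.

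Your closing ``mixing with the greedy bound'' is asserted, not derived. It also does not fit your preceding paragraph, which, taken at face value, would need no greedy bound at all. In the paper the constant is $k/d'$ with $d'=\bE[\rho_{o,d}]=\frac{1-1/d}{2\ln 2}+\frac{d+1}{2d}$ and $d=2\sqrt{k}$.
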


We note that the algorithm of~\cite{singer2025better} considers exchanges of size proportional to $k$ in its local search, which leads to a time complexity that is polynomial only when $k$ is constant. In contrast, the algorithm of Theorem~\ref{thm:monotone_result} considers only exchanges of constant size, and therefore, runs in polynomial time regardless of the value of $k$.

We also consider the maximization of a non-monotone submodular function subject to a matroid $k$-parity constraint. Algorithms for maximizing monotone submodular functions usually still produce for a non-monotone submodular function $f$ a guarantee of the form $f(A) \geq \alpha \cdot f(A \cup O)$, where $A$ is the output set of the algorithm, $O$ is the optimal solution and $\alpha$ is positive number. Gupta, Roth, Schoenebeck and Talwar~\cite{gupta2010constrained} described a general framework for translating such a guarantee into a real approximation ratio, and their framework was improved by subsequent works~\cite{mirzasoleiman2016fast,feldman2023how}. Unfortunately, because of the use of our novel auxiliary weights, when our algorithm from Theorem~\ref{thm:monotone_result} is analyzed in the context of a non-monotone submodular function $f$, it yields a guarantee of the more involved form $f(A) \geq \alpha \cdot f(A \cup O) - \beta \cdot f(O)$ for some $\beta < \alpha$. Nevertheless, we are able to adapt the version of the framework used by~\cite{feldman2023how} to transform guarantees of our more involved form into standard approximation ratios, which yields the following theorem.
\begin{restatable}{theorem}{thmNonMonotoneResult}\label{thm:non-monotone_result}
There exists an algorithm that runs in $\Poly(|E|)$ time and guarantees an approximation ratio of $\frac{2k \ln 2}{1 + \ln 2} + O(k^{2/3}) \leq 0.819k + O(k^{2/3})$ for maximizing a non-negative (not necessarily monotone) submodular function $f\colon 2^E \to \nnR$ under a matroid $k$-parity constraint $(E, \cI)$.
\end{restatable}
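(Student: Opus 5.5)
The plan is to reduce Theorem~\ref{thm:non-monotone_result} to the monotone machinery behind Theorem~\ref{thm:monotone_result}, combined with a random-sampling framework in the spirit of Gupta et al.~\cite{gupta2010constrained} and its refinement in~\cite{feldman2023how}. Concretely, I would first show that the algorithm of Theorem~\ref{thm:monotone_result}, run on an arbitrary non-negative submodular $f$, outputs a set $A$ satisfying
\[
\expect[f(A)] \geq \alpha \cdot \expect[f(A \cup O)] - \beta \cdot f(O), \qquad \alpha = \frac{1+\ln 2}{2k\ln 2} - O(k^{-3/2}),
\]
for some $\beta < \alpha$ (with $\beta = O(\alpha/\sqrt{k})$ or so, reflecting the sublinear term). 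The monotone analysis should go through verbatim up to the very last step. There, monotonicity is used to replace $f(A\cup O)$ by $f(O)$. The auxiliary weights $u(e)$ are defined from $O$ alone, so the $u$-side of the argument produces an $f(O)$ term rather than an $f(A\cup O)$ term, and this is the source of the $-\beta f(O)$.

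Next I would modify the algorithm so that, before running it, every element of $E$ is kept independently with probability $p$. The algorithm is then run on the sampled ground set $E_p$, and the optimal solution is replaced by $O_p = O \cap E_p$, which is still feasible. Since $A \subseteq E_p$ and each element appears in $A$ with probability at most $p$, the standard lemma (a submodular $g(S)=f(S\cup O)$ evaluated at a random set containing each element with probability at most $p$) gives $\expect[f(A\cup O)] \geq (1-p) f(O)$. Applying the guarantee above to $O_p$ requires $\expect[f(A\cup O_p)]$ rather than $f(A\cup O)$. I would handle this as in~\cite{feldman2023how}: by submodularity $f(A\cup O_p) \ge f(A\cup O) - \sum_{e\in O\setminus E_p}$ of marginals, and independence of the sampling from membership of $O$ elements gives $\expect[f(A\cup O_p)] \ge \expect[f(A\cup O)]$ up to a correction. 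Alternatively, I would redo the analysis directly against $O_p$, using $\expect[f(O_p)] \ge p f(O)$ and $f(O_p)\le f(O)$ for the $\beta$ term.

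Combining the pieces gives
\[
\expect[f(A)] \ge \bigl(\alpha p(1-p) - \beta\bigr) f(O)
\]
or a similar expression. I would then optimize $p$ with $p \approx k^{-1/3}$. The loss $(1-p)$ costs a factor $1+O(k^{-1/3})$, which translates to an additive $O(k^{2/3})$ in the ratio. The $\beta$ term contributes at most $O(\sqrt{k})$ extra after accounting for $p$: dividing $\beta$ by $\alpha p$ gives $O(k^{-1/2}/k^{-1/3})=O(k^{-1/6})$, so I may need to balance $p$ with the $\eps$-type parameters of the monotone algorithm to keep everything within $O(k^{2/3})$. The polynomial running time independent of $\eps$ comes from fixing $\eps$ to a suitable function of $k$, say $\eps=k^{-1/3}$, which is still polynomial since the local search uses constant-size exchanges.

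The main obstacle I expect is the first step. The non-monotone version of the auxiliary-weight analysis must actually yield $\beta$ small relative to $\alpha$, and the interaction between the sampling and $O$-dependent weights $u$ must be handled correctly: $u$ must now be defined with respect to $O_p$, which is random but independent of the algorithm's shift. I would first try defining $u$ via $O_p$ conditioned on $E_p$, so that the random-shift independence argument still applies conditionally.
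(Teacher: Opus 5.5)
Your route has a genuine gap, in two places.

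\textbf{The shape of the non-monotone guarantee.} It is not what you expect. Proposition~\ref{prop:main-guarantee} does hold for every non-negative submodular $f$, so you are right that the analysis carries over. But it reads, equivalently, $(k+d+1+2\eps)\,\bE[f(A)] \ge d\,\bE[f(A\cup O)]-(d-d')\,f(O)$. In your notation this means $\alpha=d/(k+d+1+2\eps)$ and $\beta=(d-d')/(k+d+1+2\eps)$. The $0.819k$ ratio lives in $\alpha-\beta=d'/(k+d+1+2\eps)$. Meanwhile $\beta/\alpha=1-d'/d$ is at least about $0.44$ for $d\ge 2$, and it tends to $1$, because $d$ must grow for $d'=\frac{1+\ln 2}{2\ln 2}-d^{-1}\cdot\frac{1-\ln 2}{2\ln 2}$ to approach its limit. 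The $f(A\cup O)$ term enters through Lemma~\ref{lem:alternative_bound}, and the $u$-versus-$\ow$ discrepancy it controls carries weight $d$. So any damage $f(O)-f(A\cup O)$ is amplified by $d=\omega(1)$, and no version with $\beta=O(\alpha/\sqrt{k})$ is available.

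\textbf{The sampling step.} Independently of the above, black-box subsampling costs a multiplicative factor $p$. The analysis can only be run against $O_p=O\cap E_p$, so the main term is $d'\,\bE[f(O_p)]\ge d'p\,f(O)$. Your own final expression $(\alpha p(1-p)-\beta)f(O)$ contains this factor. Since $p(1-p)\le 1/4$, it cannot beat $4\cdot 0.819k$ even with $\beta=0$, and at $p\approx k^{-1/3}$ it gives a ratio of order $k^{4/3}$. Your remark that only the factor $1-p$ costs anything overlooks the $p$. With the true $\alpha,\beta$ it is worse. Using $\bE[f(A\cup O_p)]\ge(1-p)\bE[f(O_p)]$, your chain of inequalities gives at best $\bE[f(A)]\ge p(d'-dp)\,f(O)/(k+d+1+2\eps)$. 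The optimum of this over $p$ and $d\ge 2$ is a ratio of roughly $6.5k$.

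\textbf{The missing idea: repetition instead of sampling.} The frameworks of \cite{gupta2010constrained} and \cite{feldman2023how} are repetition-based, not sampling-based, and the paper follows them. Algorithm~\ref{alg:repeat} runs Algorithm~\ref{alg:1} $\ell$ times on the shrinking ground sets $E_{i-1}=E\setminus(B_1\cup\dots\cup B_{i-1})$, analyzing run $i$ against $O\cap E_{i-1}$, which is still strictly down-monotone. It also runs Double Greedy on each $B_i$ to get $B'_i$ with $\bE[f(B'_i)]\ge\frac12\bE[f(O\cap B_i)]$. Since $f(O\cap E_{i-1})\ge f(O)-\sum_{j<i}f(O\cap B_j)$, the value lost by deleting earlier outputs is recovered through the sets $B'_j$. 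The price is an extra $2d'(i-1)$ in the coefficient of $\bE[f(B^*)]$.

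The decisive point is that the $B_i$ are disjoint, so submodularity and non-negativity give $\sum_{i=1}^{\ell}f(B_i\mid O)\ge f(\cup_{i=1}^{\ell}B_i\mid O)\ge -f(O)$. Hence the $d$-amplified damage is paid once in total, i.e., $d/\ell$ per run on average, rather than once per run. Summing over $i$ gives $\ell\,(k+d+d'(\ell-1)+1+2\eps)\,\bE[f(B^*)]\ge(d'\ell-d)\,f(O)$. Choosing $\ell=\Theta(k^{2/3})$ and $d=\Theta(k^{1/3})$ yields $\frac{2k\ln 2}{1+\ln 2}+O(k^{2/3})$ with a constant $\eps$. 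There is no need to tie $\eps$ to $k$. The dependence of the running time on $k$ through $\ell$ is removed by noting that for $k>2|E|$ the best feasible singleton already suffices.
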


\subsection{Additional Related Work} \label{ssc:related_work}
When $k=2$, one can exactly optimize a linear function over a matroid $k$-intersection constraint via the classical algorithms of Edmonds~\cite{edmonds2003submodular} or Cunningham~\cite{cunningham1986improved}. Similarly, a $2$-set packing constraint is simply a standard matching constraint, and one can exactly optimize linear functions over it in polynomial time.  Lawler~\cite{lawler1976combinatorial} introduced the family of matroid parity constraints as a common generalization of both these families of constraints. Here, we are given a set $E$ of disjoint pairs of elements from $V$ and a matroid $\cM$ on $V$, and the goal is to find a collection of pairs from $E$ whose union is independent for $\cM$. In contrast to $2$-set packing and matroid $2$-intersection, it is impossible to maximize even a cardinality objective under a matroid parity constraint in polynomial time in the setting in which the matroid is given as an independence  oracle~\cite{lovasz1981matroid,jensen1982complexity}. Maximizing a cardinality objective under a matroid parity constraint is also known to be NP-hard (see, e.g.,~\cite[\S 43.9]{schrijver2003combinatorial}) and Lee, Sviridenko and Vondr\'{a}k~\cite{lee2013matroid} showed that it admits a PTAS.

When $k \geq 3$, optimizing over both matroid $k$-intersection and $k$-set packing constraints is NP-hard as both these constraint families generalize $k$-dimensional matching constraints, and optimizing over the last family of constraints is NP-hard even for cardinality objectives~\cite{karp72reducibility}. All these types of constraints are generalized by the family of matroid $k$-parity constraints, and this family also generalizes the above-mentioned family of matroid parity constraints in the sense that matroid parity constraints are equivalent to matroid $2$-parity constraints. The formal definition of matroid $k$-parity constraints can be found in Section~\ref{sec:preliminaries}, and we refer the reader to~\cite{lee2013matroid} for a more detailed discussion of the relationship between these and other families of constraints.

Finally, we note that improved results are known in several special cases. For matroid $2$-intersection constraints, Huang and Sellier~\cite{huang2023matroid} give a $\frac{3}{2}$-approximation in the special case that the objective is a vertex cover function. For matroid $k$-parity constraints, there are several results for the special case that the matroid is linear and a representation is given. In this setting, when $k=2$ an exact polynomial time algorithm for cardinality objectives was given by Lov\'{a}sz~\cite{lovasz1980matroid}, and 
Iwata and Kobayashi~\cite{iwata2022weighted} later gave an exact polynomial time algorithm for maximizing linear objectives. For larger $k$ values, in the same setting, exact FPT algorithms are known for both cardinality and linear objectives, with exponential dependence on $k$ and the rank of the matroid or the number of blocks in the solution~\cite{barvinok1995new,marx2009parameterized}. In contrast, in this paper, we give approximation algorithms that require no assumptions on the structure of the underlying matroid and have a running time that is independent of $k$.

\paragraph{Paper Structure.} Section~\ref{sec:preliminaries} describes the definitions and notation that we use, as well as some known results. Our main algorithm is then presented in Section~\ref{sec:local-search-algor}, and Section~\ref{sec:analysis} shows that this algorithm has the approximation guarantees stated by Theorem~\ref{thm:monotone_result} for linear and monotone submodular functions. Finally, Section~\ref{sec:non-monotone} explains how to use the main algorithm to prove also our result for non-monotone submodular functions (Theorem~\ref{thm:non-monotone_result}).


\section{Preliminaries}
\label{sec:preliminaries}

In this section, we present preliminaries consisting of definitions, notations and known results used in this paper. The preliminaries are loosely grouped based on whether they are more related to the objective function or to the constraint of the problem we consider.

\subsection{Constraint Preliminaries}

\paragraph{Matroids.} A combinatorial constraint is given by a pair $(E, \cI)$, where $E$ is a ground set of elements and $\cI \subseteq 2^E$ is the collection of all subsets of $E$ that are feasible according to the constraint. The constraint $\cM = (E, \cI)$ is a \emph{matroid} if $\cI$ is non-empty and also obeys the following two properties.
\begin{compactitem}
	\item Down-closedness: for every two sets $S \subseteq T \subseteq E$, if $T \in \cI$ then also $S \in \cI$.
	\item Augmentation: for every two sets $S, T \in \cI$, if $|S| < |T|$, then there exists an element $e \in T \setminus S$ such that $S \cup \{e\} \in \cI$.
\end{compactitem}

Like in the last definition, we often need in this paper to add or remove a single element from a set. For this purpose, given a set $S$ and an element $e$, it is convenient to use $S + e$ and $S - e$ as shorthands denoting $S \cup \{e\} $ and $S \setminus \{e\}$, respectively. We also need to introduce some concepts from matroid theory. It is customary to refer to the sets in the collection $\cI$ as the \emph{independent sets} of $\cM$. Notice that the augmentation property of matroids implies that all the inclusion-wise maximal independent sets of $\cM$ have the same size. This size is known as the \emph{rank} of $\cM$, and the (inclusion-wise maximal) independent sets of this size are the \emph{bases} of $\cM$. It is also useful to define for every set $S \subseteq E$ a value $\rank_\cM(S)$ that is equal to the largest size of an independent subset $S'$ of $S$. Notice that $\rank_\cM(E)$ is equal to the rank of the matroid $\cM$.

Given a matroid $\cM = (E, \cI)$, there are a few standard methods to produce a new matroid from it. In the following, we survey three of these methods.
\begin{itemize}
	\item Given a set $S \subseteq E$, the \emph{restriction} of $\cM$ to $S$ is a matroid over the ground set $S$ that is denoted by $\cM|_S$. The independent sets of $\cM|_S$ are all the independent sets $\cM$ that are subsets of this ground set. More formally, $\cM|_S = (S, 2^S \cap \cI)$.
	\item Consider a set $S \subseteq E$. \emph{Contracting} $S$ in the matroid $\cM$ produces a matroid over the ground set $E \setminus S$ that is denoted by $\cM / S$. Let $B$ be a maximal independent subset of $S$. Then, a set $T \subseteq E \setminus S$ is independent in $\cM / S$ if and only if $T \cup B \in \cI$.\footnote{There might be multiple maximal independent subsets of $S$, and therefore, multiple options for choosing $B$. It is known that $\cM / S$ is the same matroid for every such choice of $B$.}
	\item Given an integer $0 \leq r \leq \rank_\cM(E)$, the \emph{truncation} of $\cM$ to the rank $r$ is a matroid over the same ground set denoted by $\trunc(\cM, r)$. A set is independent in $\trunc(\cM, r)$ if it is independent in $\cM$ and its size is at most $r$. Notice that this implies, in particular, that the rank of $\trunc(\cM, r)$ is $r$.
\end{itemize}

We end the introduction of matroids by presenting the following known result about them, which is known as the Greene–Magnanti Theorem~\cite{greene1975some}.
\begin{lemma} \label{lem:partition}
Let $S$ and $T$ be bases of a matroid $\cM$. Then, for every partition $S_1, S_2, \dotsc, S_k$ of $S$, there exists a partition $T_1, T_2, \dotsc, T_k$ of $T$ such that $(S \setminus S_i) \cup T_i$ is a base of $\cM$ for every $i \in [k]$.
\end{lemma}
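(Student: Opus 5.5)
We would prove the Greene–Magnanti statement by a matroid union / rank argument, reducing it to Rado's theorem (equivalently, a bipartite-matching type condition). The goal is to assign each element of $T$ to one of $k$ "slots" so that $T_i$ exactly replaces $S_i$.

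First we reformulate. For each $i\in[k]$, let $\cM_i = (\cM / (S\setminus S_i))|_T$. This is the matroid on $T$ obtained by contracting the independent set $S\setminus S_i$ and restricting to $T$; elements of $T\cap(S\setminus S_i)$ are loops. Its rank is $|S_i|$, because $S$ is a base. A set $T_i\subseteq T$ with $|T_i| = |S_i|$ makes $(S\setminus S_i)\cup T_i$ a base exactly when $T_i$ is a base of $\cM_i$. So it suffices to partition $T$ into sets $T_1,\dotsc,T_k$ with $T_i$ independent in $\cM_i$. Cardinalities are automatic: $\sum_i |S_i| = |S| = |T|$ and $|T_i|\le |S_i|$, so equality holds throughout. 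Hence we need $T$ to be independent in the matroid union $\cM_1\vee\dotsb\vee\cM_k$ (restricted to $T$).

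By the matroid union theorem (Nash-Williams), $T$ is independent in the union iff for every $X\subseteq T$ we have $|X| \le \sum_i \rank_{\cM_i}(X)$. Here $\rank_{\cM_i}(X) = \rank_\cM(X\cup (S\setminus S_i)) - |S\setminus S_i|$. So we must show
\[
|X| \le \sum_{i=1}^k \bigl(\rank_\cM(X\cup(S\setminus S_i)) - |S| + |S_i|\bigr)\qquad \forall X\subseteq T .
\]
We prove this by submodularity of $\rank_\cM$, inducting on $k$. Write $R_i = S\setminus S_i$, and note that $\bigcap_i (X\cup R_i) = X$ (the $R_i$ have empty common intersection as the $S_i$ cover $S$) and $\bigcup_i (X\cup R_i)$ contains $S$, hence has rank $|S|$. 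Repeated submodularity, applied to the chain of partial intersections $X\cup (R_1\cap\dotsb\cap R_j)$, gives
\[
\sum_{i=1}^k \rank(X\cup R_i) \ge \rank(X) + \sum_{j=1}^{k-1} \rank\bigl(X\cup(R_1\cap\dotsb\cap R_j)\cup R_{j+1}\bigr),
\]
and each union on the right contains $S$, so it has rank $|S|$. Since $\rank(X)=|X|$ because $X\subseteq T$ is independent, we get $\sum_i\rank(X\cup R_i)\ge |X| + (k-1)|S|$. This is exactly the required inequality, because $\sum_i(|S|-|S_i|) = (k-1)|S|$.

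The main obstacle is getting this submodularity telescoping right. We must check that $(X\cup R_1\cap\dots\cap R_j)\cup R_{j+1}\supseteq S$: an element of $S$ outside $R_{j+1}$ lies in $S_{j+1}$, hence in every $R_i$ with $i\le j$. The rest (identifying contraction ranks and invoking matroid union) is routine; the loops coming from $T\cap S$ are handled automatically by the rank formula.
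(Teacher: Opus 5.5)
Your proof is correct. The paper does not prove this lemma at all: it quotes it as the Greene--Magnanti theorem with a citation to \cite{greene1975some} and uses it as a black box. So yours is a self-contained route where the paper gives none.

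You reduce the statement to the matroid union (Edmonds/Nash-Williams) theorem applied to the $k$ matroids on $T$ with rank functions $X \mapsto \rank_\cM(X \cup R_i) - |R_i|$. You then verify the union rank condition by chaining submodularity. I checked the chain:
\begin{itemize}
\item $(X\cup (R_1\cap\dots\cap R_j))\cap(X\cup R_{j+1}) = X \cup (R_1\cap\dots\cap R_{j+1})$;
\item the last term collapses to $X$ because the $R_i$ have empty common intersection;
\item each union $X\cup(R_1\cap\dots\cap R_j)\cup R_{j+1}$ contains $S$, by exactly the argument you give.
\end{itemize}
Together these yield $\sum_i \rank_\cM(X\cup R_i)\ge |X|+(k-1)|S|$. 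The cardinality step ($|T_i|\le|S_i|$ together with $\sum_i|T_i| = |T| = |S| = \sum_i |S_i|$) correctly upgrades independence in each $\cM_i$ to the base property.

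Two small presentational points. First, under the paper's definition, the ground set of $\cM/(S\setminus S_i)$ excludes $S\setminus S_i$, so $(\cM/(S\setminus S_i))|_T$ is not literally defined when $T$ meets $S\setminus S_i$. It is cleaner to define $\cM_i$ directly by the rank function above, which makes the elements of $T\cap R_i$ loops, as you intend. Second, what you carry out is a telescoping chain, not an induction on $k$.

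Compared with the citation, your route makes explicit that $|T_i| = |S_i|$ and $T_i\cap(S\setminus S_i)=\emptyset$. The paper silently relies on this in the proof of Lemma~\ref{lem:weights-main} when it asserts $|\pi_a| = |\barv(a)|$, though it also follows from the lemma's statement by counting. The price is that your argument rests on the matroid union theorem, itself a substantial result, so it is self-contained only modulo that theorem.
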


\paragraph{Matroid $k$-Parity.} A matroid $k$-parity constraint is given by a matroid $\cM = (V,\cI_{\cM})$ together with a partition $E$ of the elements of $V$ into disjoint sets of size at most $k$. Borrowing terminology from hypergraphs, we refer to the elements of $E$ as edges and the elements of $V$ as vertices. The ground set of the matroid $k$-parity constraint is the set $E$ of edges and a subset $S \subseteq E$ is feasible according to the constraint if the union of the edges in $S$ is independent in $\cM$. More formally, if we denote by $v(e)$ the set of up to $k$ vertices of $V$ appearing in $e$, then the matroid $k$-parity constraint is the combinatorial constraint $(E, \cI)$, where
\[
	\cI
	\triangleq
	\bigg\{S \subseteq E ~\bigg|~ \bigcup_{e \in S} v(e) \in \cI_\cM \bigg\}
	\enspace.
\]
It is often useful to extend the function $v$ to sets of edges in the natural way, namely, $v(S) \triangleq \cup_{e \in S}\; v(e)$ for every set $S \subseteq E$. Note that every matroid $k$-parity constraint is \emph{down-closed}; i.e., if $A \in \cI$ then $B \in \cI$ for all $B \subseteq A$. This follows immediately from the fact that the matroid $\cM$ is down-closed by definition.

In Section~\ref{sec:introduction}, it was mentioned that matroid $k$-parity constraints generalize the intersection of $k$ matroid constraints. Let us now briefly explain why that is the case. Consider an arbitrary combinatorial constraint $(X, \cI)$ that can be obtained as the intersection of $k$ matroids, i.e., there exist $k$ matroids $(X, \cI_1), (X, \cI_2), \dotsc,\allowbreak (X, \cI_k)$ such that $\cI = \cap_{i = 1}^k \cI_k$. We can represent this as a matroid $k$-parity constraint as follows. Define an edge for each element $x \in X$, and let the $k$ vertices of this edge be denoted by $(x,1), (x,2), \dotsc, (x,k)$. Intuitively, the element $(x, i)$ represents the element $x$ in the matroid $(X, \cI_i)$. Based on this intuition, we construct a matroid $\cM$ over the resulting set $V$ of vertices by defining that a subset $S \subseteq V$ is independent in $\cM$ if
\[
	\{e \in E \mid (e, i) \in S\} \in \cI_i \quad \forall\; i \in [k]
	\enspace.
\]
One can verify that $\cM$ is a matroid, and so $(E, \cI)$ is a matroid $k$-parity constraint given by $E$ and $\cM$ that is equivalent to the original matroid intersection constraint.

We use in this paper the exchange property for matroid $k$-parity constraints given by Lemma~\ref{lem:weights-main}. This property was implicitly proved by~\cite{singer2025better}, but we provide a different, and arguably simpler, proof of it. Lemma~\ref{lem:weights-main} proves that given two feasible sets $A$ and $B$, one can construct for every element $b \in B$ a set $N_b$ consisting of $A$ elements that intuitively block the addition of $b$ to $A$. Claims~\ref{item:emptyN} and~\ref{item:singletonN} of Lemma~\ref{lem:weights-main} show that for some sets $B' \subseteq B$, adding the elements of $B'$ to $A$ and removing the elements of $\cup_{b \in B'} N_b$ is an exchange that preserves feasibility. In general, designing sets $N_b$ that allow for such exchanges is not very difficult. For example, one could set $N_b = A \setminus B$ for all $b \in B$, which would allow an exchange for every $B' \subseteq B$. Thus, to make Lemma~\ref{lem:weights-main} interesting (and useful), it is necessary to have constraints on the structure of the sets $N_b$ guaranteeing that these sets are not ``too rich''. Claims~\ref{item:intersection} and~\ref{item:kN} of Lemma~\ref{lem:weights-main} play the role of these constraints.
\begin{lemma}
  \label{lem:weights-main}
Consider a matroid $k$-parity constraint $(E,\cI)$, and let $A, B \in \cI$ be two sets of edges that are feasible for this constraint. Then, there exists a collection of sets $\{N_b \subseteq A \mid b \in B\}$ such that
\begin{compactenum}
\item For every element $b \in B \cap A$, $N_b = \{b\}$; and for every element $b \in B \setminus A$, $N_b \subseteq A \setminus B$.\label{item:intersection}
\item $A \cup \{b \in B: N_b = \emptyset\} \in \cI$.\label{item:emptyN}
\item For every $a \in A$, $(A - a) \cup \{b \in B: N_b = \{a\}\} \in \cI$.\label{item:singletonN}
\item For every $a \in A$, there are at most $k$ distinct $b \in B$ such that $a \in N_b$.\label{item:kN}
\end{compactenum}
\end{lemma}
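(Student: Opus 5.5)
We can reduce to the case $A \cap B = \emptyset$ and then apply Lemma~\ref{lem:partition} once, in a suitably contracted matroid, to the vertex sets. For $b \in A \cap B$ we set $N_b = \{b\}$, as Claim~\ref{item:intersection} requires. Let $A_0 = A \setminus B$ and $B_0 = B \setminus A$, and let $\cN = (\cM / v(A\cap B))|_{v(A_0 \cup B_0)}$. Since $A, B \in \cI$, both $v(A_0)$ and $v(B_0)$ are independent in $\cN$. These vertex sets are disjoint, because the edges partition $V$. Moreover, a set of edges $X \subseteq A_0 \cup B_0$ satisfies $(A \cap B) \cup X \in \cI$ whenever $v(X)$ is independent in $\cN$. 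So it suffices to construct sets $N_b \subseteq A_0$ for $b \in B_0$ with the required properties relative to $\cN$. Note that Claims~\ref{item:emptyN} and~\ref{item:singletonN} also involve the edges $b \in A \cap B$. For Claim~\ref{item:emptyN} these edges do not appear, since their $N_b$ is nonempty. For Claim~\ref{item:singletonN} with $a \in A\cap B$, the set is just $A$ itself, which is feasible.

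Next, extend $v(A_0)$ to a base $S$ of $\cN$ by adding a set $D \subseteq v(B_0)$, and extend $v(B_0)$ to a base $T$ by adding a set $F \subseteq v(A_0)$. Let $C = S \cap T$, which contains $D \cup F$. In $\cN / C$, the sets $S \setminus C = v(A_0) \setminus F$ and $T \setminus C = v(B_0) \setminus D$ are disjoint bases. Partition $S \setminus C$ by edges, with parts $S_a = v(a) \setminus F$ for $a \in A_0$, and apply Lemma~\ref{lem:partition}. This gives a partition $\{T_a\}_{a\in A_0}$ of $T\setminus C$ such that $((S\setminus C)\setminus S_a) \cup T_a$ is a base of $\cN/C$. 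Equivalently, $(S \setminus S_a) \cup T_a$ is independent in $\cN$. Since bases have equal size, $|T_a| = |S_a| \le k$. For $b \in B_0$ we then define
\[
N_b = \{a \in A_0 \mid v(b) \cap T_a \neq \emptyset\}.
\]

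We then verify the claims.
\begin{compactitem}
\item Claim~\ref{item:intersection} holds by construction.
\item Claim~\ref{item:kN}: a vertex of $T_a$ lies in exactly one edge, so $a$ belongs to at most $|T_a| \le k$ sets $N_b$.
\item Claim~\ref{item:emptyN}: if $N_b = \emptyset$, then $v(b)$ misses every $T_a$, so $v(b) \subseteq D$. Hence $v(A_0) \cup \bigcup_{N_b = \emptyset} v(b) \subseteq S$, which is independent in $\cN$.
\item Claim~\ref{item:singletonN}: fix $a \in A_0$. Every $b$ with $N_b = \{a\}$ has $v(b) \subseteq T_a \cup D \subseteq T_a \cup C$. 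Also $v(A_0) \setminus v(a) \subseteq (S\setminus C \setminus S_a) \cup F$ with $F \subseteq C$. Therefore the vertex set of $(A_0 - a) \cup \{b : N_b = \{a\}\}$ is contained in $(S \setminus S_a) \cup T_a$, which is independent in $\cN$.
\end{compactitem}

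The main obstacle is Claim~\ref{item:singletonN}. A naive application of Lemma~\ref{lem:partition} to $S$ and $T$ leaves an unassigned part, consisting of the extra vertices $D$ and their images. Edges of $B$ touching that part would need to be added alongside every single-element exchange, and that is not independent in general. Contracting the common part $C = S\cap T$ before applying the lemma is what removes this issue. So the careful step is checking that all vertices of $D$ and $F$ really lie in $C$, and that the contraction preserves independence in the way used above.
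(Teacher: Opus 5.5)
Your proof is correct and follows essentially the same route as the paper. Both arguments contract $v(A\cap B)$ to reduce to disjoint sets, contract a padding set so that the remaining vertex sets of $A$ and $B$ become disjoint bases of equal size, apply the Greene--Magnanti theorem (Lemma~\ref{lem:partition}) to the edge-induced partition of $A$'s remaining vertices, and set $N_b=\{a : T_a\cap v(b)\neq\emptyset\}$. The only difference is technical: you obtain the bases by extending each side to a full base of $\cN$ and contracting $C=S\cap T=D\cup F$, whereas the paper augments only the smaller side and then truncates the contracted matroid to rank $|v(A)\setminus C|$.
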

\begin{proof}
Let $\cM = (V, \cI_\cM)$ and $E \subseteq 2^V$ be the matroid and partition defining the given matroid $k$-parity constraint. Suppose first that $A \cap B \neq \emptyset$. Then, we can define a new matroid $k$-parity constraint based on the matroid $\cM / v(A \cap B)$ and the set of edges $E \setminus A \cap B$. Clearly, $A \setminus B$ and $B \setminus A$ are disjoint and feasible for this new matroid $k$-parity constraint. By applying the present lemma to these disjoint sets and this new constraint, we can obtain a collection of sets $\{N_b \subseteq A \setminus B \mid b \in B \setminus A\}$. Furthermore, adding to this collection the set $N_b = \{b\}$ for every $b \in A \cap B$ yields an extended collection of sets obeying all the properties required by the present lemma for the original sets $A$ and $B$ and the original matroid $k$-parity constraint. Thus, to prove the present lemma for general sets $A$ and $B$, it suffices to prove it for disjoint sets $A$ and $B$, which is our goal in the rest of this proof.

If $|v(A)| \leq |v(B)|$, then by repeated applications of the augmentation property of matroids, there is some set $C \subseteq v(B) \setminus v(A)$ with $|C| = |v(B)| - |v(A)|$ such that $v(A) \cup C \in \cI_\cM$. Similarly, if $|v(A)| \geq |v(B)|$, then there is some set $C \subseteq v(A) \setminus v(B)$ with $|C| = |v(A)| - |v(B)|$ such that $v(B) \cup C \in \cI_\cM$. Thus, in either case, there is a set $C$ such that $|v(A) \setminus C| = |v(B) \setminus C|$, $v(A) \cup C \in \cI_\cM$ and $v(B) \cup C \in \cI_\cM$. 

Define now the matroid $\barM \triangleq \trunc((\cM |_{v(A \cup B)}) / C, |v(A) \setminus C|) $, and let $\barI \subseteq 2^{V \setminus C}$ be the collection of independent sets of $\barM$. For each element $e \in A \cup B$, let $\barv(e) \triangleq v(e) \setminus C$, and similarly let $\barv(S) \triangleq v(S) \setminus C = \bigcup_{e \in S}\barv(e)$ for any $S \subseteq A \cup B$. Then, $\barv(A)$ and $\barv(B)$ are both bases of $\barM$---because they are independent in $\barM$ and their common size matches the rank of the truncation operation in the definition of $\barM$.  The sets $\{\barv(a) \mid a \in A\}$ form a partition of the base $\barv(A)$, and thus, by Lemma~\ref{lem:partition}, there exists a partition $\{\pi_a \subseteq \barv(B) \mid a \in A\}$ of the base $\barv(B)$ such that $|\pi_a| = |\barv(a)|$ and $(\barv(A) \setminus \barv(a)) \cup \pi_a \in \barI$ for every $a \in A$. This allows us to define $N_b \triangleq \{a \in A \mid \pi_a \cap \barv(b) \neq \emptyset\}$ for each $b \in B$. In the following, we prove that this collection of sets obeys all the claims stated by the lemma.

Claim~\ref{item:intersection} follows immediately from our assumption that $A$ and $B$ are disjoint. Let us prove next claim~\ref{item:emptyN}. Define $S = \{b \in B \mid N_b = \emptyset\}$. For every $b \in S$, it holds by definition that $\pi_a \cap \barv(b) = \emptyset$ for every $a \in A$. Since the sets $\pi_a$ are a partition of $\barv(B)$ and $\barv(b) \subseteq \barv(B)$, we must have $\barv(b) = \emptyset$ for every such $b \in S$. Thus,
\[
	v(A \cup S)
	\subseteq
	\barv(A \cup S) \cup C
	=
	\barv(A) \cup C
	=
	v(A) \cup C
	\in
	\cI_\cM
	\enspace.
\]

For the proof of claim~\ref{item:singletonN}, fix $a \in A$ and let $S = \{b \in B \mid N_b = \{a\}\}$. For all $b \in S$, $\barv(b) \subseteq \pi_a$ because otherwise $\barv(b)$ would have intersected additional sets in the partition of $\barv(B)$ beside $\pi_a$, which would have resulted in additional elements of $A$ appearing in $N_b$. Thus,
\begin{equation*}
\barv((A - a) \cup S) = (\barv(A) \setminus \barv(a)) \cup \barv(S) \subseteq (\barv(A) \setminus \barv(a)) \cup \pi_a \in \barI \enspace,
\end{equation*}
and therefore, $v((A - a) \cup S) \subseteq \barv((A - a) \cup S) \cup C \in \cI_\cM$.

Finally, for the proof of claim~\ref{item:kN}, fix $a \in A$ and recall that $v(b)$ and $v(b')$ are disjoint for every two distinct elements $b, b' \in B$. Therefore, $\pi_a \cap \barv(b) \subseteq v(b)$ and $\pi_a \cap \barv(b') \subseteq v(b')$ must also be disjoint for every two such elements. Thus,
\begin{equation*}
	| \{ b \in B \mid a \in N_b\}|
	=
	| \{ b \in B \mid \pi_a \cap \barv(b) \neq \emptyset \}|
	\leq
	\sum_{b \in B} |\pi_a \cap \barv(b)|
	=
	|\pi_a|
	=
	|\barv(a)|
	\leq
	|v(a)|
	\leq k
	\enspace.
	\qedhere
\end{equation*}
\end{proof}

\paragraph{Independence/Feasibility Oracle.} We conclude the discussion of combinatorial constraints by noticing that in most of their applications it is useful to have algorithms whose time complexities are polynomial in the size of the ground set $E$ rather than the size of the collection $\cI$ of feasible sets, which might be exponential in $|E|$. Therefore, algorithms that operate on such constraints usually assume that $\cI$ is not part of the input, but can be accessed through an oracle, known as \emph{independence} or \emph{feasibility oracle}, that takes as input a set $S \subseteq E$ and answers whether $S \in \cI$. The algorithms in this paper implicitly follow this standard assumption.

\subsection{Objective Function Preliminaries}

A set function $f\colon 2^E \to \bR$ is a function that assigns a value to every subset of a given ground set $E$. Given a set $S$ and an element $e$, we denote by $f(e \mid S) \triangleq f(S + e) - f(S)$ the marginal contribution of $e$ to $S$. Similarly, given an additional set $T$, we denote by $f(T \mid S) \triangleq f(S \cup T) - f(S)$ the marginal contribution of $T$ to $S$. The set function $f$ is called \emph{monotone} if $f(e \mid S) \geq 0$ for every set $S \subseteq E$ and element $e \in E$. The set function $f$ is called \emph{submodular} if $f(e \mid S) \geq f(e \mid T)$ for every two sets $S \subseteq T$ and element $e \in E \setminus T$. In other words, $f$ is submodular if the marginal contribution of an element $e$ to a set $S$ can only diminish when additional elements are added to the set $S$. An alternative, equivalent, characterization of submodularity states that $f$ is submodular if and only if $f(S) + f(T) \geq f(S \cup T) + f(S \cap T)$ for every two sets $S, T \subseteq E$.

In the case that $f(S) + f(T) = f(S \cup T) + f(S \cap T)$ for all $S,T \subseteq E$, we say that $f$ is \emph{modular}. Equivalently, a modular set function can be defined as a function of the form $f(S) = w_0 + \sum_{e \in S} w_e$, where $w_0$ and $\{w_e \mid e \in E\}$ are arbitrary real values. In the special case that $w_0 = 0$ (and thus, $f(\emptyset) = 0$), we say that $f$ is \emph{linear}.

Our algorithms optimize a submodular set function $f$. However, as in the case of combinatorial constraints, it often does not make sense to assume that the algorithm has an explicit description of $f$ because this description can be exponential in the size of the ground set $E$. Instead, our algorithms implicitly make the standard assumption that $f$ is accessed through a \emph{value oracle} that given a set $S \subseteq E$ returns $f(S)$.


\section{Main Algorithm}
\label{sec:local-search-algor}

In this section, we present our main algorithm, given below as Algorithm~\ref{alg:1}. In Section~\ref{sec:analysis}, we prove that Algorithm~\ref{alg:1} provides the approximation guarantees for linear and monotone submodular functions given in Theorem~\ref{thm:monotone_result}. Algorithm~\ref{alg:1} is also a central component in the proof of Theorem~\ref{thm:non-monotone_result} for non-monotone submodular objectives. As stated, Algorithm~\ref{alg:1} is not necessarily a polynomial time algorithm as it might have a large number of iterations. However, in Appendix~\ref{app:time_complexity} we describe a polynomial time implementation of this algorithm that skips over unnecessary iterations.

Algorithm~\ref{alg:1} gets as input a ground set $E$, a submodular function $f\colon 2^E \to \bR$ obeying $f(\emptyset) \geq 0$, the collection $\cI$ of the feasible sets of some matroid $k$-parity constraint over the ground set $E$ and an error parameter $\eps \in (0, 1)$. It then iteratively constructs sets $A_1, A_2, \dotsc$ whose union becomes the algorithm's output. For ease of notation, Algorithm~\ref{alg:1} uses $A_{\leq i}$ to denote the union $A_1 \cup A_2\cup \dotso \cup A_i$. When thinking in terms of the intuition given in Section~\ref{sec:introduction}, $A_1$ is the part of the solution of the algorithm chosen from the highest ``weight class'', $A_2$ is the part of the solution of the algorithm chosen from the second highest ``weight class'', and so on. Accordingly, Algorithm~\ref{alg:1} adds to the set $A_i$ only elements with large enough marginal values. More specifically, Algorithm~\ref{alg:1} defines a series of exponentially decreasing thresholds $m_i$, and adds to $A_i$ only elements whose marginal contribution is at least $m_i$.

To understand the details of the construction of the sets $A_i$ by Algorithm~\ref{alg:1}, we need the following definition.
\begin{definition}[$(\theta, \eps)$-improvement]
  \label{def:improvement}
Let $\theta > 0$ and $\eps \in (0,1)$ be two given values, and $A \in \cI$ be a feasible solution. We say that a pair of sets $S \subseteq E \setminus A$ and $N \subseteq A$ is a \emph{$(\theta, \eps)$-improvement for $A$} if $(A \cup S) \setminus N \in \cI$ and one of the following holds:
\begin{compactenum}
\item $S = \{x\}, N = \emptyset$, and $f(x \mid A) \geq \theta$.
\item $S = \{x\}, N = \{y\}$, $f(x \mid A) \geq \theta$, and $f(A + x - y) \geq f(A) + \eps \theta$.
\item $|S| = 2, N = \{y\}$ and $f(x_1 \mid A) \geq \theta$, $f(x_2 \mid A + x_1) \geq \theta$ for at least one of the two possible ways to assign the labels $x_1,x_2$ to the elements of $S$.
\end{compactenum}
\end{definition}
At the beginning of iteration $i$ of Algorithm~\ref{alg:1}, the sets $A_1, A_2, \dotsc, A_{i - 1}$ are already finalized. Algorithm~\ref{alg:1} then tries to construct a set $A_i$ that is a good addition to these sets. It does so by initializing $A_i$ to be the empty set, and then improving $A_{\leq i - 1} \cup A_i$ via a series of $(m_i,\eps)$-improvements. Notice that, by definition, the elements added by these improvements to $A_i$ have marginals of at least $m_i$, as promised above. Since the sets $A_1, A_2, \dotsc, A_{i - 1}$ are finalized before the $i$-th iteration of Algorithm~\ref{alg:1}, and must not be modified at this iteration, the algorithm considers only $(m_i, \eps)$-improvements whose set $N$ of removed elements is a subset of the current set $A_i$. Whenever an improvement of this kind is found, it is applied by setting $A_i \gets (A_i \cup S) \setminus N$, and once no more such $(m_i, \eps)$-improvements can be found, the $i$-th iteration of Algorithm~\ref{alg:1} terminates.


\begin{algorithm}
  Let $W \gets \max_{e \in E} f(e \mid \emptyset)$.\;
  Let $\alpha$ be a uniformly at random value from $(0,1]$, and let $\tau \gets 2^{\alpha}$.\;
  Define $m_i \triangleq W \tau 2^{-i}$ for every integer $i \geq 0$.\;
	\BlankLine

  Let $i \gets 0$ and $A_{\leq 0} \gets \emptyset$.\;
  \While{there exists $e \in E \setminus A_{\leq i}$ such that $f(e \mid A_{\leq i}) > 0$ and $A_{\leq i} + e \in \cI$\label{line:main_loop}}
  {
		Update $i \gets i + 1$.\;
		Let $A_{i} \gets \emptyset$.\;
    \While{\label{li:while}there is some $(m_{i}, \eps)$-improvement $(S,N)$ for $A_{\leq i - 1} \cup A_i$ with $N \subseteq A_i$}
    {
      $A_i \gets (A_i \setminus N) \cup S$.\label{li:whileend}\;
    }
    Let $A_{\leq i} \gets A_{\leq i - 1} \cup A_i$.\;
  }
  \Return{$A_{\leq i}$}.
\caption{\textsc{Greedy/Local-Search Hybrid Algorithm }$(E, f, \cI, \eps)$}
\label{alg:1}
\end{algorithm}

In Appendix~\ref{app:time_complexity}, we show that Algorithm~\ref{alg:1} can in fact be simulated in polynomial time. However, in order to derive the approximation guarantees of Theorem~\ref{thm:monotone_result}, it is sufficient to prove here the weaker claim that Algorithm~\ref{alg:1} terminates. In each iteration $i$, Algorithm~\ref{alg:1} repeatedly searches for and applies $(m_i, \eps)$-improvements for some current solution $A$. Each such improvement either increases $|A|$, or leaves $|A|$ unchanged and increases $f(A)$. Since there are only finitely many possible distinct values for $|A|$ and $f(A)$, it follows that each iteration of  Algorithm~\ref{alg:1} must terminate. The following observation implies that Algorithm~\ref{alg:1} can perform at most finitely many iterations before terminating. 
\begin{observation}
\label{obs:final-set-L}
For any input to Algorithm~\ref{alg:1}, there is an integer $L \geq 0$ such that $f(e \mid A_{\leq L}) \leq 0$ for every $e \in E \setminus A_{\leq L}$ with $A_{\leq L} + e \in \cI$, and furthermore, if $L \neq 0$, then $A_L \neq \emptyset$.
\end{observation}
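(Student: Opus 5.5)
The plan is to show that the outer loop of Algorithm~\ref{alg:1} cannot run forever. Once this is established, we take $L$ to be the index of the last completed iteration; $L = 0$ if the loop never executes. At that point the condition on Line~\ref{line:main_loop} fails, and this is exactly the first claim of Observation~\ref{obs:final-set-L}. Termination of each individual inner loop was already argued just before the observation.

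\textbf{Step 1: the last set $A_L$ is non-empty.} Suppose the main loop executes iteration $i$. Then at its start there is some $e \notin A_{\leq i-1}$ with $f(e \mid A_{\leq i-1}) > 0$ and $A_{\leq i-1} + e \in \cI$. Whenever the inner loop stops with $A_i = \emptyset$, the pair $(\{e\}, \emptyset)$ is not an $(m_i,\eps)$-improvement. Since it is feasible, this forces $f(e \mid A_{\leq i-1}) < m_i$. The key point is that the marginals relative to $A_{\leq i-1}$ do not change while consecutive iterations leave $A_i$ empty. Meanwhile $m_i = W\tau 2^{-i} \to 0$. So after finitely many empty iterations, specifically once $m_i \le f(e \mid A_{\leq i-1})$, the singleton $\{e\}$ becomes an improvement and the inner loop adds something. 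I would also note that an inner loop can never end with $A_i = \emptyset$ after having added an element: every improvement keeps $|A_i|$ the same or increases it. Consequently, the loop only stops at an iteration $L$ with $A_L \neq \emptyset$ (or at $L = 0$), which gives the ``furthermore'' part.

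\textbf{Step 2: only finitely many non-empty iterations.} Each non-empty iteration strictly enlarges the feasible set $A_{\leq i}$, because the $A_j$ are disjoint from earlier sets, as improvements use $S \subseteq E \setminus A$. The size of a feasible set is at most $|E|$, so there are at most $|E|$ non-empty iterations. By Step 1, between consecutive non-empty iterations there are finitely many empty ones. Hence the whole loop terminates, and $L$ is well defined.

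The main obstacle is justifying that the candidate $e$ persists across the empty iterations. This holds because empty iterations leave $A_{\leq i}$ unchanged, so the condition on Line~\ref{line:main_loop} keeps witnessing the same $e$ with the same positive marginal. The geometric decay of $m_i$ then guarantees that $e$ is eventually added.
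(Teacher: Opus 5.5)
Your proof is correct and uses the same ingredients as the paper. These are: improvements never shrink $A_i$, disjointness of the $A_i$ caps the number of non-empty iterations at $|E|$, and the geometric decay of $m_i$ eventually turns $(\{e\},\emptyset)$ into an improvement for any witness $e$ that persists through empty iterations. The only difference is organizational: the paper takes $L$ to be the last iteration in which an improvement is found and derives the first claim by contradiction, whereas you prove directly that the outer loop halts and take $L$ to be its final iteration.
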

\begin{proof}
First, we note that any iteration $i$ in which at least one $(m_i, \eps)$-improvement is found, must end with $|A_i| > 0$. This follows directly from the structure of $(\theta, \eps)$-improvements, which never decrease the size of the set $A$ that is being improved. Since the sets $A_i$ produced across different iterations are  disjoint by construction, there can be at most $|E|$ iterations in which the algorithm finds at least one $(\theta, \eps)$-improvement. Let $L$ be the index of the last such iteration, or $0$ if there are no such iterations. Suppose, for the sake of contradiction, that there is some $e \in E \setminus A_{\leq L}$ with $A_{\leq L} + e \in \cI$ and $f(e \mid A_{\leq L}) > 0$. Since no $(\theta, \eps)$-improvements are found after iteration $L$, we must have $A_{\leq i} = A_{\leq L}$ for all $i > L$. But, for sufficiently large $i \geq 1 + \log_2 W - \log_2 f(e \mid A_{\leq L})$, we then have $e \in E \setminus A_{\leq i}$ with $f(e \mid A_{\leq i}) = f(e \mid A_{\leq L}) \geq m_i$ and $A_{\leq i} + e = A_{\leq L} + e \in \cI$. Thus, $(\{e\}, \emptyset)$ would be a valid $(m_{i}, \eps)$-improvement in iteration $i > L$, contradicting the fact that no such improvements are found by Algorithm~\ref{alg:1}.
\end{proof}

Observation~\ref{obs:final-set-L} implies that Algorithm~\ref{alg:1} terminates after $L$ iterations. Therefore, in the rest of our analysis we use $A_{\leq L}$ to denote the output set of Algorithm~\ref{alg:1}. Note that the condition on Line~\ref{li:while} of the algorithm implies that the set $A_{\leq i} = A_{\leq i-1} \cup A_i$ finalized at the end of each iteration $i$ is locally optimal with respect to $(m_i,\eps)$-improvements with $N \subseteq A_i$. That is, for every $i \in [L]$, there is no $(m_i, \eps)$ improvement $(S,N)$ for $A_{\leq i}$ with such $N$. In particular, considering $(m_i,\eps)$-improvements of the first kind gives the following observation.
\begin{observation}
\label{obs:local-optimaly}
For every $i \in [L]$, the inequality $f(e \mid A_{\leq i}) < m_i$ holds for every $e \in E \setminus A_{\leq i}$ that obeys $A_{\leq i} + e \in \cI$ at the end of iteration $i$.
\end{observation}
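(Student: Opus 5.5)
This observation follows from the termination condition of the inner while loop on Line~\ref{li:while}, specialized to improvements of the first kind in Definition~\ref{def:improvement}. Fix $i \in [L]$ and consider the moment iteration $i$ ends. At that moment the inner loop has exited, so no $(m_i,\eps)$-improvement $(S,N)$ for $A_{\leq i-1} \cup A_i = A_{\leq i}$ with $N \subseteq A_i$ exists.

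I will argue by contradiction. Suppose some $e \in E \setminus A_{\leq i}$ satisfies $A_{\leq i} + e \in \cI$ and $f(e \mid A_{\leq i}) \geq m_i$. Take $S = \{e\}$ and $N = \emptyset$. Then $S \subseteq E \setminus A_{\leq i}$, and $N = \emptyset \subseteq A_i$ trivially. Feasibility also holds: $(A_{\leq i} \cup S) \setminus N = A_{\leq i} + e \in \cI$. Since $f(e \mid A_{\leq i}) \geq m_i$, the pair $(S,N)$ meets the first case of Definition~\ref{def:improvement} with $\theta = m_i$. It is therefore an $(m_i,\eps)$-improvement for $A_{\leq i}$ with $N \subseteq A_i$, so the inner loop could not have terminated. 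This contradiction gives $f(e \mid A_{\leq i}) < m_i$.

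The only point needing care is bookkeeping rather than mathematics. The set $A_{\leq i}$ referred to in the statement is exactly the set on which the loop condition was last evaluated, namely $A_{\leq i-1} \cup A_i$ with $A_i$ in its final state. Also, $m_i$ is well defined and positive because $W > 0$ whenever the outer loop runs, so the definition applies with $\theta = m_i > 0$. I expect no real obstacle here.
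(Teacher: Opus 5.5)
Your proof is correct and matches the paper's argument: it too notes that $(\{e\},\emptyset)$ would be a valid $(m_i,\eps)$-improvement of the first kind with $N \subseteq A_i$, contradicting the termination of the loop on Line~\ref{li:while}. Your extra check that $m_i > 0$ (because $W > 0$ whenever the outer loop runs) is a harmless addition that the paper leaves implicit.
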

\begin{proof}
Suppose $A_{\leq i}+e \in \cI$ for some $e \in E \setminus A_{\leq i}$ at the end of iteration $i$. If $f(e \mid A_{\leq i}) \geq m_i$, then $(
\{e\}, \emptyset)$ would be a valid $(m_i, \eps)$-improvement, contradicting the fact that the loop on Line~\ref{li:while} of Algorithm~\ref{alg:1} terminated.
\end{proof}

Before concluding this section, let us intuitively explain two aspects of Algorithm~\ref{alg:1}. The first aspect is the choice of powers of $2$ to define the thresholds $m_i$. This choice guarantees that it is always beneficial (or at least non-detrimental) to replace a single element of $A_i$ with two, which is a property that naturally holds in the unweighted version of the problem, and is necessarily for borrowing arguments from this version. The second aspect is that unlike many local-search-based algorithms for submodular maximization, Algorithm~\ref{alg:1} considers a very restricted set of possible improvements, namely, improvements that only involve the removal of up to a single element and the addition of up to two elements. The restriction to add at most two elements is a natural consequence of the restriction to remove at most a single element because, as discussed above, if removing a single element enables the addition of two or more elements, then this is already beneficial (or at least non-detrimental) to add two of them. Thus, it only remains to understand why Algorithm~\ref{alg:1} only considers improvements removing at most a single element. The reason for that is that the exchange property that we use (Lemma~\ref{lem:weights-main}) only guarantees the existence of such improvements, and therefore, taking advantage in the analysis of improvements removing multiple elements would require proving a stronger exchange property.


\section{Analysis of the Main Algorithm}
\label{sec:analysis}

In this section, we bound the approximation guarantee of Algorithm~\ref{alg:1} when $f\colon 2^E \to \bR$ is a general submodular function with $f(\emptyset) \geq 0$. We then use this bound to prove Theorem~\ref{thm:monotone_result}. Fix a matroid $k$-parity constraint $(E, \cI)$. In the following, we consider an execution of Algorithm~\ref{alg:1} on this matroid $k$-parity constraint and the above function $f$. 
Recall that $A_{\leq L}$ is the output of this algorithm, and thus our objective is to lower bound $f(A_{\leq L})$ in terms of $f(O)$, where $O$ is an optimal solution for our given instance (i.e., a set in $\cI$ maximizing $f$). Without loss of generality, we may restrict our analysis to optimal solutions that satisfy the following property.
\begin{definition}
A feasible set $O$ is \emph{strictly down-monotone (with respect to $f$)} if $f(x \mid O - x) > 0$ for every $x \in O$.
\end{definition}
To see that such a restriction is without loss of generality, note that as long as $f(x \mid O - x) \leq 0$ for some element $x \in O$, removing this element from $O$ gives a new feasible solution that is smaller, but has at least the same value as $O$ and thus is also optimal. Hence, any minimal size optimal solution is strictly down-monotone.

Our analysis below does not use any property of $O$ beside its strict down-monotonicity. In particular, it does not use the optimality of $O$, and therefore, applies also when $O$ is a strictly down-monotone set that is not necessarily optimal. This property of the analysis is used in Section~\ref{sec:non-monotone}. 
The analysis consists of a few steps. First, in Section~\ref{sec:part-elem-o}, we partition the elements of $O$ into sets $\{O_i \mid i \in [L]\}$. Intuitively, each $O_i$ is a subset of $O$ whose contribution is bounded in terms of the contribution of the elements of $A_i$, i.e., the elements selected in iteration $i$ of Algorithm~\ref{alg:1}. We quantify these contributions in Section~\ref{sec:defin-weights-charg} by defining weights $w$ and $\ow$ for the elements of $A_{\leq L}$ and $O$, respectively. Section~\ref{sec:setting-up-charges} then classifies the elements of $O$ into three different types of elements, and Section~\ref{sec:main-charging-scheme} separately bounds the total weight of each type in terms of the weight of $A_{\leq L}$ via appropriate charging arguments. 

In the linear case, the weights $w$ and $\ow$ of elements are identical and independent of the sets Algorithm~\ref{alg:1} constructs. This makes our charging scheme essentially the same as the one used in~\cite{singer2025better}. However, in the submodular case, the weights depend on the final set $A_{\leq L}$ returned by the algorithm. In order to analyze the effect of the random shift $\tau$ selected by Algorithm~\ref{alg:1}, we must thus introduce, in Section~\ref{sec:defin-auxil-weights}, a set of auxiliary weights $u$ quantifying the value of $O$ in a fashion that is independent of $A_{\leq L}$. Our analysis then explicitly considers how these weights $u$ differ from $\ow$. Specifically, we show that if there is a large discrepancy between $u$ and $\ow$, then we can handle the resulting loss in the analysis by balancing with an alternative lower bound on $f(A_{\leq L})$ that gains from such a discrepancy.

\subsection{Partitioning the Elements of \texorpdfstring{$O$}{O}}
\label{sec:part-elem-o}

This section describes a partition of $O$ into disjoint sets $O_1,O_2,\dotsc,O_{L}$. Analogously to the sets $A_{\leq i}$, for every integer $0 \leq i \leq L$, we use $O_{\leq i}$ to denote the union $\cup_{j = 1}^{i} O_j$. Intuitively, each set $O_i$ represents elements from $O$ whose values can be bounded in terms of the elements selected as $A_i$ in iteration $i$ of Algorithm~\ref{alg:1} and thus need not be considered when later iterations of this algorithm are analyzed. In line with this intuition, we prove below that the elements of $O \setminus O_{\leq i}$, whose values needs to be bounded in terms of elements selected by later iterations, can all be feasibly added to the solution $A_{\leq i}$ that Algorithm~\ref{alg:1} has after iteration $i$.

We can now formally define the partition of $O$. Recall that $O_{\leq 0} = \emptyset$ by definition. For every $i \in [L]$, the set $O_i$ is given by the recursive formula
\[
	O_i
	\triangleq
	\{o \in O \setminus O_{\leq i - 1} \mid N^{(i)}_o \neq \emptyset\}
	\enspace,
\]
where $\{N_o^{(i)} \subseteq A_{\leq i} \mid o \in A_{\leq i - 1} \cup (O \setminus O_{\leq {i - 1}})\}$ is the collection of sets obtained by applying Lemma~\ref{lem:weights-main} with the sets $A_{\leq i}$ and $A_{\leq i - 1} \cup (O \setminus O_{\leq i - 1})$ as the sets $A$ and $B$ of the lemma, respectively. Recall that applying Lemma~\ref{lem:weights-main} with these sets requires them to be feasible. The feasibility of $A_{\leq i}$ is clear. The feasibility of $A_{\leq i - 1} \cup (O \setminus O_{\leq i - 1})$ follows from the next observation.
\begin{observation} \label{obs:construction_properties}
For every integer $0 \leq i \leq L$, $A_{\leq i} \cup (O \setminus O_{\leq i})$ is feasible and $A_{\leq i} \cap (O \setminus O_{\leq i}) = \emptyset$.
\end{observation}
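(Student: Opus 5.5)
The plan is induction on $i$.

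\emph{Base case $i = 0$.} Here $A_{\leq 0} = \emptyset$ and $O_{\leq 0} = \emptyset$. The set $A_{\leq 0} \cup (O \setminus O_{\leq 0}) = O$ is feasible, and the intersection is trivially empty.

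\emph{Inductive step.} Fix $i \in [L]$ and assume the claim holds for $i-1$. The induction hypothesis makes $B \triangleq A_{\leq i-1} \cup (O \setminus O_{\leq i-1})$ feasible, and $A \triangleq A_{\leq i}$ is feasible because the algorithm only applies improvements that preserve feasibility. Hence Lemma~\ref{lem:weights-main} applies to this pair, and the sets $N^{(i)}_o$ are well defined.

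\emph{Disjointness.} By the definition of $O_i$, an element $o \in O \setminus O_{\leq i}$ belongs to $O \setminus O_{\leq i-1}$ and satisfies $N^{(i)}_o = \emptyset$. If such an $o$ were in $A_{\leq i}$, then $o \in A \cap B$, and claim~\ref{item:intersection} of Lemma~\ref{lem:weights-main} would force $N^{(i)}_o = \{o\} \neq \emptyset$, a contradiction. So $A_{\leq i} \cap (O \setminus O_{\leq i}) = \emptyset$.

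\emph{Feasibility.} Claim~\ref{item:emptyN} gives
\[
A_{\leq i} \cup \{b \in B : N^{(i)}_b = \emptyset\} \in \cI.
\]
Every $b \in A_{\leq i-1} \subseteq A \cap B$ has $N^{(i)}_b = \{b\}$, so none of these elements appears in the added set. Therefore
\[
\{b \in B : N^{(i)}_b = \emptyset\} = \{o \in O \setminus O_{\leq i-1} : N^{(i)}_o = \emptyset\} = O \setminus O_{\leq i},
\]
and we conclude $A_{\leq i} \cup (O \setminus O_{\leq i}) \in \cI$.

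The only point needing care is that this equality does not depend on how elements of $O \cap A_{\leq i-1}$ behave. By the induction hypothesis, $O \setminus O_{\leq i-1}$ is disjoint from $A_{\leq i-1}$, so the labels in the collection are unambiguous and each element is treated exactly once.
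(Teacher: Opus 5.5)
Your proposal is correct and follows essentially the same route as the paper: induction on $i$, with the disjointness part from claim~\ref{item:intersection} of Lemma~\ref{lem:weights-main} (elements of $A_{\leq i} \cap (O \setminus O_{\leq i-1})$ have $N^{(i)}_o = \{o\}$ and hence fall into $O_i$), and the feasibility part from claim~\ref{item:emptyN} via the identity $O \setminus O_{\leq i} = \{o \in O \setminus O_{\leq i-1} \mid N^{(i)}_o = \emptyset\}$. Your closing remark about disjointness making the labels unambiguous is not actually needed, since elements of $A_{\leq i-1}$ have nonempty $N^{(i)}$ sets regardless, but it does no harm.
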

\begin{proof}
We prove the observation by induction on $i$. First, note that the observation is trivial for $i = 0$ because $A_{\leq 0} \cup (O \setminus O_{\leq 0}) = O$ and $A_{\leq 0} = \emptyset$. Assume now that the observation holds for every integer $0 \leq i' < i$, and let us prove it for $i$.

The induction hypothesis implies that $A_{\leq i'} \cup (O \setminus O_{\leq i'})$ is feasible for every $0 \leq i' < i$, and therefore, the consruction of $O_i$ via the above process is well-defined. The first part of the observation now follows from claim~\ref{item:emptyN} of Lemma~\ref{lem:weights-main} because $O \setminus O_{\leq i} = \{o \in O \setminus O_{\leq i - 1} \mid N^{(i)}_o = \emptyset\}$, and the second part of the observation holds because claim~\ref{item:intersection} of Lemma~\ref{lem:weights-main} guarantees that every element $o$ that belongs both to $A_{\leq i}$ and $O \setminus O_{\leq i - 1}$ has $N^{(i)}_o = \{o\}$, and therefore, belongs to $O_i$.
\end{proof}

\subsection{Defining Weights for the Charging Argument}
\label{sec:defin-weights-charg}
We now define the weights we use to quantify the values of elements in $O$ and $A_{\leq L}$. An element's weight is based on the \emph{incremental value} it adds to a given set, and in the linear case it corresponds exactly to the weight of this element in the definition the objective function $f$ (as long as this weight is non-negative). In order to define the incremental values, we order the elements of $A_{\leq L}$ as $a_1,a_2, \dotsc,a_{|A_{\leq L}|}$, in the order in which they were last added by Algorithm~\ref{alg:1} to any of the sets $A_1, A_2, \dotsc, A_L$. Applying $(\theta,\eps)$-improvements of the third kind involves adding $2$ elements to $A_i$ at the same time, which creates an ambiguity. To resolve the ambiguity, recall that in such improvements, Definition~\ref{def:improvement} specifies that there must be a way to assign the two added elements the labels $x_1$ and $x_2$ so that $f(x_1 \mid A) \geq \theta$ and $f(x_2 \mid A + x_1) \geq \theta$, where $A$ is the set improved by the $(\theta, \eps)$-improvement. In our ordering of $A_{\leq L}$, we suppose that the element corresponding to $x_1$ is added to $A_i$ immediately before the element corresponding to $x_2$.\footnote{If both ways to label the added elements obey the above inequalities, then we arbitrary select one of them.} Given our ordering $a_1,\dotsc,a_{|A_{\leq L}|}$, we define the weight of each element $a_j \in A_{\leq L}$ as
\begin{equation*}
w(a_j) \triangleq f(a_j \mid \{a_1,a_2,\dotsc,a_{j-1}\})\enspace.
\end{equation*}

We proceed similarly for elements in $O$. Let us denote the elements of $O$ by $o_1,o_2\dotsc,o_{|O|}$ in an arbitrary order.\footnote{While the order $o_1,o_2\dotsc,o_{|O|}$ is arbitrary, it is essential that this order is not chosen in a way that depends on the random bits of Algorithm~\ref{alg:1}.} Then, for any $o_j \in O$, we define
\begin{equation*}
\ow(o_j) \triangleq \max\{0, f(o_j \mid (A_{\leq L} - o_j) \cup \{o_1,o_2,\dotsc,o_{j-1}\})\}
\enspace.
\end{equation*}
Note that we ensure that the $\ow$ weights are always non-negative. Let us briefly provide some intuition for the second expression in the maximum. For any $o_j \in O \setminus A_{\leq L}$, this term is equal to $f(o_j \mid A_{\leq L} \cup \{o_1,\dotsc,o_{j-1}\})$, and thus the weights $\ow$ capture the incremental values obtained when adding all such elements sequentially to the set $A_{\leq L}$. For any element $o_j \in O \cap A_{\leq L}$, we could define $\ow(o_j) = 0$, since it provides no such incremental value. However, it is useful in the linear case to ensure that all the weights we are defining are equal to the natural notion of an element's weight (i.e., $f(\{x\})$) whenever this weight is non-negative. The chosen expression ensures that this is the case even for $x \in O \cap A_{\leq L}$.

It is convenient to extend the above-defined weights to sets. We do this in the natural way, i.e., for any set $T \subseteq A_{\leq L}$, we define $w(T) \triangleq \sum_{e \in T}w(e)$, and for any set $T \subseteq O$, we define $\ow(T) \triangleq \sum_{e \in T} \ow(e)$.
The following lemma proves some basic properties of the weights $w$ and $\ow$. The first three claims of the lemma show that an element's weight must fall into a particular range defined by the values $m_i$, depending on which (if any) set $A_i$ or $O_i$ this element appears in. The final claim relates the two weights $w$ and $\ow$ given to elements that appear in both $A_{\leq L}$ and $O$.


\begin{lemma} \label{lem:weights_properties}
The weights $w$ and $\ow$ satisfy the following.
\begin{compactenum}
\item For every $i \in [L]$ and element $a \in A_i$, $0 < m_i \leq w(a) \leq m_{i-1}$.
\label{item:w(a)_bound}
\item For every $i \in [L]$ and element $o \in O_i$, $\ow(o) \leq m_{i-1}$.\label{item:w(o)_bound}
\item For all $o \in O \setminus O_{\leq L}$, $\ow(o) = 0$.\label{item:o_L+1_bound}
\item For every element $o \in O \cap A_{\leq L}$, $\ow(o) \leq w(o)$.\label{item:intersection-w-bound}
\end{compactenum}
\end{lemma}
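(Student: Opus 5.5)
All four claims follow from submodularity, by comparing each weight with a marginal with respect to a suitable sub- or superset, and then invoking Observations~\ref{obs:final-set-L}, \ref{obs:local-optimaly} and~\ref{obs:construction_properties}. Throughout, I use the convention $m_0 = W\tau \geq W$, which holds since $\tau = 2^\alpha > 1$.

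\emph{Claim~\ref{item:w(a)_bound}.} Let $a = a_j \in A_i$, and let $P = \{a_1,\dotsc,a_{j-1}\}$.

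For the lower bound, consider the moment during iteration $i$ at which $a_j$ was last added. Let $A$ be the set $A_{\leq i-1} \cup A_i$ that the $(m_i,\eps)$-improvement was applied to. Every element of the final output that precedes $a_j$ in the ordering was last added before $a_j$ and was never removed afterwards, so it already belonged to $A$ at that moment. The one exception is an element $x_1$ added in the same improvement of the third kind, and it is placed right before $a_j = x_2$. Hence $P \subseteq A$, or $P \subseteq A + x_1$ in the exception case. In either case Definition~\ref{def:improvement} guarantees that the marginal of $a_j$ with respect to that set is at least $m_i$. Submodularity then gives $w(a_j) \geq m_i > 0$.

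For the upper bound, all elements of $A_{\leq i-1}$ were finalized before iteration $i$, so they precede $a_j$ in the ordering. Thus $A_{\leq i-1} \subseteq P$, and submodularity gives $w(a_j) \leq f(a_j \mid A_{\leq i-1})$.
\begin{itemize}
\item If $i=1$, this is $f(a_j \mid \emptyset) \leq W \leq m_0$.
\item If $i \geq 2$, note that $A_{\leq i-1} + a_j \subseteq A_{\leq L} \in \cI$, so $A_{\leq i-1} + a_j$ is feasible by down-closedness, and $a_j \notin A_{\leq i-1}$. Observation~\ref{obs:local-optimaly} for iteration $i-1$ then gives $f(a_j \mid A_{\leq i-1}) < m_{i-1}$.
\end{itemize}
I expect this claim to be the main obstacle. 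The delicate point is that the prefix $P$ is contained in the set the improvement was applied to, which requires care with elements that were removed and later re-added, and with the labelling used in improvements of the third kind.

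\emph{Claims~\ref{item:w(o)_bound} and~\ref{item:o_L+1_bound}.} Let $o \in O_i$. Since $m_{i-1} > 0$, it suffices to bound the second term inside the maximum defining $\ow(o)$.

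By Observation~\ref{obs:construction_properties} applied to $i-1$, the set $A_{\leq i-1} + o$ is feasible and $o \notin A_{\leq i-1}$. Moreover, the set conditioned on in the definition of $\ow(o)$, namely $(A_{\leq L} - o) \cup \{o_1,\dotsc\}$ minus $o$, contains $A_{\leq i-1}$. This holds whether or not $o \in A_{\leq L}$. Submodularity therefore bounds the term by $f(o \mid A_{\leq i-1})$. The same case split as in Claim~\ref{item:w(a)_bound} ($i = 1$ versus $i \geq 2$ via Observation~\ref{obs:local-optimaly}) bounds this by $m_{i-1}$.

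For Claim~\ref{item:o_L+1_bound}, let $o \in O \setminus O_{\leq L}$. Observation~\ref{obs:construction_properties} with $i = L$ shows that $A_{\leq L} + o$ is feasible and $o \notin A_{\leq L}$. Observation~\ref{obs:final-set-L} then gives $f(o \mid A_{\leq L}) \leq 0$. By submodularity the marginal with respect to the larger set $A_{\leq L} \cup \{o_1,\dotsc,o_{j-1}\}$ is also non-positive, so $\ow(o) = 0$.

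\emph{Claim~\ref{item:intersection-w-bound}.} Let $o \in O \cap A_{\leq L}$, with $o = a_j$. Then $\{a_1,\dotsc,a_{j-1}\} \subseteq A_{\leq L} - o$, and the latter is contained in the set conditioned on in $\ow(o)$. Submodularity gives that the second term inside the maximum is at most $w(o)$. Since $w(o) > 0$ by Claim~\ref{item:w(a)_bound}, taking the maximum with $0$ preserves the inequality, so $\ow(o) \leq w(o)$.
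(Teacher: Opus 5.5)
Your proposal is correct and follows essentially the same argument as the paper: the same prefix-containment argument (including the $x_1$/$x_2$ labelling) for the lower bound in Claim~1, the same reduction of each upper bound to a marginal with respect to $A_{\leq i-1}$ (resp.\ $A_{\leq L}$) via Observations~\ref{obs:local-optimaly}, \ref{obs:final-set-L} and~\ref{obs:construction_properties}, and the same submodularity comparison for Claim~4. The only cosmetic difference is that you derive feasibility of $A_{\leq i-1} + a_j$ from $A_{\leq L} \in \cI$ and down-closedness, whereas the paper uses the current set at the moment $a_j$ was added; both are valid.
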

\begin{proof}
We prove each claim in turn.
\paragraph{Claim~\ref*{item:w(a)_bound}.}
Consider the moment that $a \in A_i$ was last added to $A_i$ by Algorithm~\ref{alg:1}, and let $\bar{A}_i$ denote the value of this set just before the element $a$ was added by some $(m_i,\eps)$-improvement (if the $(m_i,\eps)$-improvement adds two elements, and $a$ is the one labeled $x_2$ in this improvement, then $\bar{A}_i$ is assumed to already contain the element labeled as $x_1$ by the improvement). Since $a$ is added by an $(m_i,\eps)$-improvement for the set $\bar{A}_i \cup A_{\leq i - 1}$,
\[
	m_i
	\leq
	f(a \mid \bar{A}_i \cup A_{\leq i - 1})
	\leq
	w(a)
	\enspace,
\]
where the second inequality follows from the submodularity of $f$ because all the elements of $A_{\leq L}$ that were last added to the sets $A_1, A_2, \dotsc, A_L$ before the last time in which $a$ was added to $A_i$ must also belong to $\bar{A}_i \cup A_{\leq i - 1}$.

For the upper bound of the claim, first suppose that $a \in A_1$. Then, 
\begin{equation*}
w(a) \leq f(a \mid \emptyset) \leq W \leq W\tau = m_0
\end{equation*}
by submodularity and the definition of $W$. In the general case, suppose $a \in A_i$ for some $i > 1$. Then, $a \in E \setminus A_{\leq i-1}$ with $\bar{A}_i \cup A_{\leq i - 1} + a \in \cI$. Since $\cI$ is down-closed, $A_{\leq i-1} + a \in \cI$. Observation~\ref{obs:local-optimaly} then implies that $m_{i-1} > f(a \mid A_{\leq i-1}) \geq w(a)$, where the final inequality holds by submodularity since all the elements of $A_{\leq i-1}$ were added to the algorithm's solution before $a$ was added to $A_i$.

\paragraph{Claim~\ref*{item:w(o)_bound}.} We have $o \in O_i \subseteq O \setminus O_{\leq i - 1}$. By Observation~\ref{obs:construction_properties}, $(O \setminus O_{\leq i-1}) \cap A_{\leq i-1} = \emptyset$ and $(O \setminus O_{\leq i-1}) \cup A_{\leq i-1} \in \cI$; and thus, $o \in E \setminus A_{\leq i-1}$ with $A_{\leq i-1} + o \in \cI$. If $i > 1$, then by Observation~\ref{obs:local-optimaly}, we must then have $f(o \mid A_{\leq i-1}) < m_{i-1}$. If $i = 1$, we get the slightly weaker inequality $f(o \mid A_{\leq i-1}) \leq f(o \mid \varnothing) \leq W \leq m_0 = m_{i - 1}$ by the submodularity of $f$. In either case, since $m_{i-1}$ is positive, $\max\{0, f(o \mid A_{\leq i - 1})\} \leq m_{i - 1}$. Suppose that $o$ is the $j$-th element $o_j$ in our ordering of $O$. Then, by submodularity,
\begin{align*}
	\ow(o)
	=
	\ow(o_j) ={} & \max\{0, f(o_j \mid (A_{\leq L} - o_j) \cup \{o_1,o_2,\dotsc,o_{j-1}\})\} \\\leq{} & \max\{0, f(o_{j} \mid A_{\leq {i-1}})\} \leq m_{i - 1}\enspace.
\end{align*}

\paragraph{Claim~\ref*{item:o_L+1_bound}.} By Observation~\ref{obs:construction_properties}, $O \setminus O_{\leq L}$ and $A_{\leq L}$ are disjoint and $(O \setminus O_{\leq L}) \cup A_{\leq L} \in \cI$. Thus, for every $o_j \in O \setminus O_{\leq L}$ we have $o_j \in E \setminus A_{\leq L}$ and $A_{\leq L} + o_j \in \cI$. Observation~\ref{obs:final-set-L} then implies that $f(o_j \mid A_{\leq L}) \leq 0$. Thus, by the submodularity of $f$,
\begin{equation*}
\ow(o_j) = \max\{0, f(o_j \mid (A_{\leq L} - o_j) \cup \{o_1,o_2,\dotsc,o_{j - 1}\})\} \leq \max\{0, f(o_j \mid A_{\leq L})\} = 0\enspace.
\end{equation*}

\paragraph{Claim~\ref*{item:intersection-w-bound}}
Suppose $o = o_j \in A_{\leq L} \cap O$. Then, we must have $o = a_{j'}$ for some $a_{j'} \in A_{\leq L}$. The first claim of the present lemma yields $w(a_{j'}) > 0$ since $a_{j'}$ must appear in some set $A_i$. Thus, by submodularity,
\begin{align*}
w(o)
= 
w(a_{j'}) 
= 
\max\{0, w(a_{j'})\}
&=
\max\{0, f(a_{j'} \mid \{a_1,\dotsc,a_{j'-1}\})\} \\
&\geq 
\max\{0, f(a_{j'} \mid (A_{\leq L} - a_{j'}) \cup \{o_1,\dotsc,o_{j-1}\} )\} \\
&=
\max\{0, f(o_j \mid (A_{\leq L} - o_j) \cup \{o_1,\dotsc,o_{j-1}\})\}
= \ow(o)\enspace.\qedhere
\end{align*}
\end{proof}

\subsection{Classifying the Elements of \texorpdfstring{$O$}{O} for Charging}
\label{sec:setting-up-charges}
In our charging scheme, we consider three different types of elements of $O$. The first is the elements that belong to some $O_i$, have weights approximately equal to the weights of the elements in $A_i$ (up to a factor of $m_{i-1}/m_i = 2$) and can charge to only a single element of $A_i$. We denote below by $O^{(s)}$ the set of these elements. Intuitively, our analysis uses an argument borrowed from the $(k/2 + \eps)$-approximation algorithm for the unweighted matroid $k$-parity problem~\cite{lee2013matroid} to show that the elements of $O^{(s)}$ only contribute an additive constant to the approximation guarantee. The next type is the elements of $O \setminus O_{\leq L}$, which are ``left over'' at the end of Algorithm~\ref{alg:1}. The contributions of these elements are small, and can be essentially ignored. The final type is all remaining elements, i.e., $O_{\leq L} \setminus O^{(s)}$.

To formally define the set $O^{(s)}$ of the first type of elements of $O$, we define a set $N_o$ for every element $o \in O_{\leq L}$ as follows. For any $o \in O_{\leq L}$, let $i$ be the single integer in $[L]$ for which $o \in O_i$, and define $N_o \triangleq N^{(i)}_o$ (recall that $N^{(i)}_o$ is defined in Section~\ref{sec:part-elem-o}). Intuitively, one can think of $N_o$ as the set of elements of $A_{\leq L}$ that are responsible for the inclusion of $o$ in $O_i$. We then define for every $i \in [L]$ a subset $O_i^{(s)} \subseteq O_i$ given by
\begin{align*}
	O_i^{(s)} &\triangleq \{o \in O_i \mid \ow(o) > m_i \text{ and } |N_o| = 1\}\enspace.
\end{align*}
Then, we can define $O^{(s)} \triangleq \bigcup_{i = 1}^L O_i^{(s)}$. Notice that this definition is consistent with the above intuition in the following senses. First, each element $o \in O^{(s)}$ has indeed only a single element in $N_o$ to ``blame'' for its inclusion in $O_i$ and thus only a single element to charge to. Second, the weight of $o$ is comparable to the weight of this single element because $N_o \subseteq A_i$ (see Observation~\ref{obs:N_o_inclusion} below), which implies by claim~\ref{item:w(a)_bound} of Lemma~\ref{lem:weights_properties} that the weight of every element in $N_o$ is in the range $[m_{i}, m_{i-1}]$.

\begin{observation} \label{obs:N_o_inclusion}
For every $i \in [L]$, $N_o = N^{(i)}_o \subseteq A_i$ for every $o \in O_i$.
\end{observation}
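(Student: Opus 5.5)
We need to show that for $o \in O_i$, the set $N^{(i)}_o$ lies inside $A_i$, and not merely inside $A_{\leq i}$. The plan is to use claim~\ref{item:intersection} of Lemma~\ref{lem:weights-main}, applied with $A = A_{\leq i}$ and $B = A_{\leq i-1} \cup (O \setminus O_{\leq i-1})$. We then split into two cases according to whether $o \in A_{\leq i}$.

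\textbf{Case $o \in A_{\leq i}$.} Since $o \in O_i \subseteq O \setminus O_{\leq i-1}$, we have $o \in B \cap A$. Claim~\ref{item:intersection} then gives $N^{(i)}_o = \{o\}$. By Observation~\ref{obs:construction_properties} applied with index $i-1$, $A_{\leq i-1} \cap (O \setminus O_{\leq i-1}) = \emptyset$, so $o \notin A_{\leq i-1}$. Hence $o \in A_{\leq i} \setminus A_{\leq i-1} = A_i$, and therefore $N^{(i)}_o \subseteq A_i$.

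\textbf{Case $o \notin A_{\leq i}$.} Then $o \in B \setminus A$, and claim~\ref{item:intersection} gives
\[
	N^{(i)}_o \subseteq A \setminus B = A_{\leq i} \setminus \bigl(A_{\leq i-1} \cup (O \setminus O_{\leq i-1})\bigr) \subseteq A_{\leq i} \setminus A_{\leq i-1} = A_i
	\enspace.
\]

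In both cases $N_o = N^{(i)}_o \subseteq A_i$, which is the claim; the equality $N_o = N^{(i)}_o$ holds by the definition of $N_o$, since $o \in O_i$. No step is a real obstacle. The only point needing care is the disjointness of $A_{\leq i-1}$ and $O \setminus O_{\leq i-1}$ used in the first case, and this is exactly what Observation~\ref{obs:construction_properties} supplies.
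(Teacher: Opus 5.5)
Your proof is correct and follows essentially the same route as the paper. Both arguments use claim~\ref{item:intersection} of Lemma~\ref{lem:weights-main} to see that $N^{(i)}_o$ can meet $A_{\leq i-1}$ only at $o$ itself, and Observation~\ref{obs:construction_properties} (at index $i-1$) to rule out $o \in A_{\leq i-1}$; you simply write out the two cases of that claim explicitly, where the paper compresses them into one sentence.
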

\begin{proof}
Claim~\ref{item:intersection} of Lemma~\ref{lem:weights-main} guarantees that $N^{(i)}_o$ can intersect $A_{\leq i - 1}$ only on the element $o$ itself. However, $o \in O_i$ and thus does not belong to $A_{\leq i - 1}$ by Observation~\ref{obs:construction_properties}.
\end{proof}

The following lemma states some properties of the sets defined above. The first two claims of the lemma show that every element $o \in O^{(s)}$ ``blames'' a different element of $A_{\leq L}$ for its inclusion in $O_{\leq L}$, and furthermore, they show that this element is almost as valuable as $o$.  Claim~\ref{item:at_most_k_N_o} of the lemma shows that every element of $A_{\leq L}$ is ``blamed'' by at most $k$ elements of $O_{\leq L}$.

\begin{samepage}
\begin{lemma} \label{lem:set_weights_properties} The above defined sets obey the following. 
\begin{compactenum}
\item For every $o \in O^{(s)}$, $\ow(o) \leq (1+\eps) \cdot w(N_o)$.\label{item:O_i^s}
\item For every two elements $o, o' \in O^{(s)}$, $N_o \cap N_{o'} = \emptyset$.\label{item:empty_intersection}
\item Every $a \in A_{\leq L}$ appears in $N_o$ for at most $k$ distinct $o \in O_{\leq L}$.\label{item:at_most_k_N_o}
\end{compactenum}
\end{lemma}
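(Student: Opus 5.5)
The plan is to prove each claim separately, working inside the single iteration $i$ with $o \in O_i$ and using three facts. First, $N_o = N^{(i)}_o \subseteq A_i$ (Observation~\ref{obs:N_o_inclusion}). Second, $A_{\leq i}$ is locally optimal with respect to $(m_i,\eps)$-improvements whose removed set lies in $A_i$. Third, the exchange claims~\ref{item:intersection}--\ref{item:kN} of Lemma~\ref{lem:weights-main} hold for $A = A_{\leq i}$ and $B = A_{\leq i-1} \cup (O \setminus O_{\leq i-1})$.

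\textbf{Claim~\ref{item:O_i^s}.} Let $o \in O^{(s)}_i$ with $N_o = \{a\}$, where $a \in A_i$.
\begin{itemize}
\item If $o = a$, the claim follows directly from claim~\ref{item:intersection-w-bound} of Lemma~\ref{lem:weights_properties}.
\item Otherwise, claim~\ref{item:intersection} of Lemma~\ref{lem:weights-main} gives $o \notin A_{\leq i}$. Claim~\ref{item:singletonN} of that lemma, together with down-closedness, gives $A_{\leq i} - a + o \in \cI$.
\item The conditioning set $(A_{\leq L} - o) \cup \{o_1,\dots\}$ in the definition of $\ow(o)$ contains $A_{\leq i}$, since $o \notin A_{\leq i}$. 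Hence submodularity gives $f(o \mid A_{\leq i}) \geq \ow(o) > m_i$, and $(\{o\},\{a\})$ satisfies the first condition of a type-2 improvement.
\item Local optimality then forces $f(A_{\leq i} - a + o) < f(A_{\leq i}) + \eps m_i$. Write the left-hand side minus $f(A_{\leq i})$ as $f(o \mid A_{\leq i} - a) - f(a \mid A_{\leq i} - a)$.
\item For the first term, $f(o \mid A_{\leq i} - a) \geq f(o \mid A_{\leq i}) \geq \ow(o)$.
\item For the second term I need $f(a \mid A_{\leq i} - a) \leq w(a)$. This requires that every element preceding $a$ in the ordering $a_1, a_2, \dots$ lies in $A_{\leq i} - a$. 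Such predecessors were last added either in an earlier iteration or earlier in iteration $i$, and they survive in the final solution. Elements of later iterations are added after $a$. So the inclusion holds and submodularity applies.
\item Combining, $\ow(o) - w(a) < \eps m_i \leq \eps w(a)$, using claim~\ref{item:w(a)_bound} of Lemma~\ref{lem:weights_properties}.
\end{itemize}

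\textbf{Claim~\ref{item:empty_intersection}.} Suppose $o \neq o'$ in $O^{(s)}$ share an element of their (singleton) $N$ sets.
\begin{itemize}
\item Since $N_o \subseteq A_i$ for $o \in O_i$ and the sets $A_i$ are disjoint, both elements lie in the same $O_i$, and $N_o = N_{o'} = \{a\}$.
\item If $o = a$, then $a \in B \cap A$. Claim~\ref{item:intersection} of Lemma~\ref{lem:weights-main} then forces $N_{o'} \subseteq A \setminus B$, which excludes $a$, a contradiction. The same argument rules out $o' = a$.
\item Hence $o, o' \notin A_{\leq i}$, and claim~\ref{item:singletonN} gives $(A_{\leq i} - a) + o + o' \in \cI$. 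Order the two so that $o$ precedes $o'$ in the fixed ordering of $O$.
\item Then $f(o \mid A_{\leq i}) \geq \ow(o) > m_i$ as before. Also, the conditioning set for $\ow(o')$ contains $A_{\leq i} + o$, so $f(o' \mid A_{\leq i} + o) \geq \ow(o') > m_i$.
\item Thus $(\{o,o'\},\{a\})$ is a type-3 $(m_i,\eps)$-improvement with $N \subseteq A_i$, contradicting the termination of the loop on Line~\ref{li:while}.
\end{itemize}

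\textbf{Claim~\ref{item:at_most_k_N_o}.} Fix $a \in A_i$. Every $o \in O_j$ has $N_o \subseteq A_j$, so $a \in N_o$ is possible only for $o \in O_i$, where $N_o = N^{(i)}_o$. All these sets come from a single application of Lemma~\ref{lem:weights-main}, and its claim~\ref{item:kN} bounds their number by $k$.

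The main delicate point is the inequality $f(a \mid A_{\leq i} - a) \leq w(a)$ in Claim~\ref{item:O_i^s}. It relies on the ordering of $A_{\leq L}$ by last insertion time, and on the fact that elements added to $A_i$ at a moment before $a$, and still present at the end, were not removed afterwards.
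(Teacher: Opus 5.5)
Your proof is correct and follows the paper's argument essentially step by step. Claim~1 uses the same failed type-2 improvement inequality, Claim~2 the same type-3 improvement contradiction (ordering the two elements by the fixed ordering of $O$), and Claim~3 the same appeal to claim~4 of Lemma~\ref{lem:weights-main} within a single iteration $i$; your explicit argument that all predecessors of $a$ lie in $A_{\leq i} - a$, giving $f(a \mid A_{\leq i} - a) \leq w(a)$, spells out what the paper summarizes as ``submodularity and the definition of $w$.''
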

\end{samepage}
\begin{proof}
First, we show that all $o \in O_i^{(s)} \setminus A_{\leq i}$ must have relatively large marginal contribution with respect to the set $A_{\leq i}$. By definition, any $o_j \in O_i^{(s)}$ has $0 < m_i < \ow(o_j) = f(o_j \mid (A_{\leq L} - o_j) \cup \{o_1,\dotsc,o_{j-1}\})$. If $o_j \not\in A_{\leq i}$, then $(A_{\leq L} - o_j) \cup \{o_1,\dotsc,o_{j-1}\}) \supseteq A_{\leq i} \cup O'$ for any $O' \subseteq \{o_1,\dotsc,o_{j-1}\}$ and thus, by submodularity,
\begin{equation}
\label{eq:ow-1}
m_i < \ow(o_j) \leq f(o_j \mid A_{\leq i})\,,
\end{equation}
and also
\begin{equation}
\label{eq:ow-2}
m_i < \ow(o_j) \leq f(o_j \mid A_{\leq i} + o_{j'})\,,
\end{equation}
for any $j' < j$. Intuitively, these inequalities show that any such $o_j$ has marginal contribution large enough to be part of an $(m_i, \eps)$-exchange at the end of the iteration $i$. Using these observations, we now prove each claim of the lemma in turn.
\paragraph{Claim~\ref*{item:O_i^s}.} 
Consider any $o \in O^{(s)}$ and let $i \in [L]$ be the (unique) value such that $o \in O^{(s)}_i$. Then, since $o \in O_i \subseteq O \setminus O_{\leq i-1}$, we must have $o \not \in A_{\leq i - 1}$ by Observation~\ref{obs:construction_properties}. If $o \in A_i \cap O_i^{(s)} \subseteq A_i \cap O_i$, then by claim~\ref{item:intersection} of Lemma~\ref{lem:weights-main}, we are guaranteed that $N_o = \{o\}$, which implies $w(o) = w(N_o)$, and claim \ref{item:intersection-w-bound} of Lemma~\ref{lem:weights_properties} further gives $\ow(o) \leq w(o)$. Thus, $\ow(o) \leq (1 + \eps) \cdot \ow(o) \leq (1 + \eps) \cdot w(o) = (1 + \eps) \cdot w(N_o)$, as required.

Now, suppose that $o \in O_i^{(s)} \setminus A_i$. Then, $o \not \in A_{\leq i}$, and $N_o = N_o^{(i)} = \{a\}$ for some element $a \in A_i$. By the properties of $N_o^{(i)}$ (specifically, Lemma~\ref{lem:weights-main}, claim~\ref{item:singletonN}), we have $A_{\leq i} - a + o \in \cI$. As shown above in \eqref{eq:ow-1}, $f(o \mid A_{\leq i}) > m_i$. Thus, $(\{o\},\{a\})$ satisfies all the properties required of an $(m_i,\eps)$ improvement of the second type, except for the inequality $f(A_{\leq i} - a + o) - f(A_{\leq i}) \geq \eps m_i$. Since Algorithm~\ref{alg:1} terminated iteration $i$, and finalized the set $A_{\leq i}$, without applying this improvement, it must have been the case that this inequality was not satisfied, i.e.,
\begin{align*}
  \eps m_i 
  >{} &
  f(A_{\leq i} - a + o) - f(A_{\leq i})
	=
	f(o \mid A_{\leq i} - a) - f(a \mid A_{\leq i} - a)\\
  \geq{} &
  f(o \mid A_{\leq i}) - f(a \mid A_{\leq i} - a) 
	\geq
	\ow(o) - f(a \mid A_{\leq i} - a) 
  \geq 
  \ow(o) - w(a)\enspace,
\end{align*}
where the second inequality follows from submodularity, the third inequality holds by Inequality~\eqref{eq:ow-1}, and the last inequality follows from submodularity and the definition of $w$. Rearranging, we have
\begin{equation*}
  \ow(o) 
  \leq 
  w(a) + \eps m_i  
  \leq 
  (1+\eps) \cdot w(a) 
  = 
  (1+\eps) \cdot w(N_o)\enspace,
\end{equation*}
where the second inequality follows from claim \ref{item:w(a)_bound} of Lemma~\ref{lem:weights_properties} since $a \in A_i$.

\paragraph{Claim~\ref*{item:empty_intersection}.}
Consider any distinct $o_{j},o_{j'} \in O^{(s)}$, where $j' < j$. We show below that if $N_{o_{j'}} \cap N_{o_{j}} \neq \emptyset$, then an $(m_i,\eps)$-improvement of the third form exists for some $A_{\leq i}$, contradicting the fact that iteration $i$ of Algorithm~\ref{alg:1} terminated and finalized the set $A_{\leq i}$ without applying this improvement. 

To this end, suppose that $N_{o_{j'}} \cap N_{o_{j}} \neq \emptyset$. Since both $o_{j'}$ and $o_{j}$ are in $O^{(s)}$, we must have $N_{o_{j'}} = N_{o_{j}} = \{a\}$ for some single element $a$, and this element $a$ must belong to exactly one set $A_i$ (for $i \in [L]$) since the sets $A_i$ are disjoint by construction. Moreover, since $N_o^{(i')} \subseteq A_{i'}$ for all $i' \in [L]$, we further must have $N_{o_j} = N^{(i)}_{o_j} = \{a\}$, $N_{o_{j'}} = N_{o_{j'}}^{(i)} = \{a\}$, and $o_j, o_{j'} \in O^{(s)}_i$ for the same $i \in [L]$. Recall that the sets $\{N_o^{(i)} \mid o \in A_{\leq i-1} \cup (O \setminus O_{\leq i-1})\}$ were constructed to satisfy the properties of Lemma~\ref{lem:weights-main} with respect to $A_{\leq i}$ and $A_{\leq i - 1} \cup (O \setminus O_{\leq i-1})$. Since $N_{o_{j'}}^{(i)} = N_{o_{j}}^{(i)} = \{a\}$, the third such property implies that $(A_{\leq i} - a) \cup \{o_{j'}, o_{j}\} \in \cI$. Additionally, since $o_{j'} \neq o_{j}$ but $N_{o_{j'}}^{(i)} = N_{o_{j}}^{(i)}$, the first such property implies that $o_{j'}, o_j \in (A_{\leq i - 1} \cup (O \setminus O_{\leq i-1})) \setminus A_{\leq i}$. Hence, $o_{j'},o_{j} \in O_i^{(s)} \setminus A_{\leq i}$ and by \eqref{eq:ow-1} and \eqref{eq:ow-2}, respectively, they must obey $f(o_{j'} \mid A_{\leq i}) \geq m_i$ and $f(o_{j} \mid A_{\leq i} + o_{j'}) \geq m_i$. Thus, $(\{o_{j'}, o_{j}\}, \{a\})$ is a valid $(m_i, \eps)$-improvement for $A_{\leq i}$, yielding the desired contradiction.

\paragraph{Claim~\ref*{item:at_most_k_N_o}.} Fix any $a \in A_{\leq L}$ and let $O' = \{o \in O_{\leq L} \mid a \in N_o\}$. By the same argument as in the previous claim, since the sets $A_i$ are disjoint and $N_o = N_o^{(i)} \subseteq A_i$ for all $o \in O_i$, there must be some single $i \in [L]$ such that $a \in A_i$ and $N_o = N_o^{(i)}$ for all $o \in O'$. For this $i$, the sets $\{N_o^{(i)} \mid o \in A_{\leq i-1} \cup (O \setminus O_{\leq i-1})\}$ were constructed to satisfy the properties of Lemma~\ref{lem:weights-main}. By the fourth such property, $a$ appears in at most $k$ of them, and hence, $|O'| \leq k$.
\end{proof}

\subsection{Defining Auxiliary Weights Independent of \texorpdfstring{$A_{\leq L}$}{A\_\{≤L\}}}
\label{sec:defin-auxil-weights}

The weights, sets and properties given in the preceding subsections are sufficient to carry out a charging argument analogous to the one presented for the linear case in~\cite{singer2025better}. As in their proof, this charging argument would use the random shift $\tau$ to argue that, in expectation, an element $o \in O_{\leq L} \setminus O^{(s)}$ has value significantly less than all elements $a \in N_o$. The main difficulty in extending the argument to the submodular case is that the weights $w$ and $\ow$ depend crucially on the set $A_{\leq L}$ constructed by the algorithm, and hence, also on the random choice of $\tau$. To circumvent this difficulty, we consider the incremental contribution of each element $o \in O$ to $f(O)$ alone. Formally, we order the elements $o_1,\dotsc,o_{|O|}$ of $O$ as in the definition of $\ow$, and then, for every $o_j \in O$, define the auxiliary weight
\begin{equation*}
u(o_j) \triangleq f(o_j \mid \{o_1,\dotsc,o_{j-1}\})\enspace.
\end{equation*}
Note that $\sum_{o \in O}u(o) = f(O) - f(\emptyset)$. Moreover, the auxiliary weights $u$ are independent of the execution of Algorithm~\ref{alg:1}, and in particular, the random choice of $\tau$. The following observation relates the weights $\ow$ and $u$.
\begin{observation} \label{obs:positive_weights_diff}
For every $o_j \in O$, $u(o_j) > 0$ and $u(o_j) \geq \ow(o_j)$. When $f$ is linear, the last inequality holds as an equality.
\end{observation}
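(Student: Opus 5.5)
The plan is to derive all three claims of Observation~\ref{obs:positive_weights_diff} from submodularity and the strict down-monotonicity of $O$, comparing the definitions of $u$ and $\ow$ directly. The order of $O$ is the same in both definitions, so only the conditioning sets differ.

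\emph{Positivity of $u$.} Since $\{o_1,\dotsc,o_{j-1}\} \subseteq O - o_j$, submodularity gives
\[
	u(o_j) = f(o_j \mid \{o_1,\dotsc,o_{j-1}\}) \geq f(o_j \mid O - o_j) > 0\enspace,
\]
where the strict inequality is exactly the strict down-monotonicity of $O$.

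\emph{The inequality $u(o_j) \geq \ow(o_j)$.} Recall that $\ow(o_j) = \max\{0, f(o_j \mid (A_{\leq L} - o_j) \cup \{o_1,\dotsc,o_{j-1}\})\}$. We compare $u(o_j)$ with each term of this maximum.
\begin{itemize}
\item The zero term is handled by the previous step, since $u(o_j) > 0$.
\item For the other term, note that $\{o_1,\dotsc,o_{j-1}\} \subseteq (A_{\leq L} - o_j) \cup \{o_1,\dotsc,o_{j-1}\}$. Moreover, $o_j$ belongs to neither set, because $o_j$ is removed from $A_{\leq L}$ and differs from $o_1,\dotsc,o_{j-1}$.
\item Submodularity (the marginal form, which needs $o_j$ outside the larger set) therefore gives $u(o_j) \geq f(o_j \mid (A_{\leq L} - o_j) \cup \{o_1,\dotsc,o_{j-1}\})$.
\end{itemize}
Together these give $u(o_j) \geq \ow(o_j)$.

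\emph{Equality in the linear case.} If $f$ is linear with weights $w_e$, then every marginal of $o_j$ with respect to a set not containing it equals $w_{o_j}$. Hence both $u(o_j)$ and the non-zero term inside the maximum equal $w_{o_j}$. By the positivity shown above, $w_{o_j} = u(o_j) > 0$, so the maximum with $0$ has no effect and $\ow(o_j) = w_{o_j} = u(o_j)$.

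The only subtle point is the linear case: equality requires the weight to be positive so that the truncation at $0$ in the definition of $\ow$ is inactive, and this is exactly what strict down-monotonicity supplies.
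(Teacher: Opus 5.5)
Your proof is correct and follows essentially the same route as the paper: strict down-monotonicity plus submodularity gives $u(o_j) \geq f(o_j \mid O - o_j) > 0$, and submodularity applied inside the maximum then gives $u(o_j) = \max\{0, u(o_j)\} \geq \ow(o_j)$. In the linear case both sides reduce to the weight of $o_j$, which is positive, matching the paper's remark that both equal $\max\{0, f(\{o_j\})\}$; you also explicitly check that $o_j$ lies outside the larger conditioning set, which the paper leaves implicit.
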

\begin{proof}
Consider any $o_j \in O$. Since we assume that $O$ is strictly down-monotone, we must have $f(o_j \mid O - o_j) > 0$ for all $o_j \in O$. The submodularity of $f$ then implies that $0 < f(o_j \mid O - o_j) \leq f(o_j \mid \{o_1,\dotsc,o_{j-1}\}) = u(o_j)$. Thus, $u(o_j) > 0$, and by using submodularity again, this further implies that
\begin{align*}
	u(o_j)
	={} 
	\max\{0, u(o_j)\}
	&=
	\max\{0, f(o_j \mid \{o_1,\dotsc,o_{j-1}\})\} \\
	&\geq{} 
	\max\{0, f(o_j \mid (A_{\leq L} - o_j) \cup \{o_1,\dotsc, o_{j-1}\}\}
	= \ow(o_j)
	\enspace.
\end{align*}
Furthermore, when $f$ is linear, the inequality in the above derivation holds as an equality since both its sides are equal to $\max\{0, f(\{o_j\})\}$.
\end{proof}

For general submodular functions, we may have a strict inequality $u(o) > \ow(o)$ for some $o \in O$. In the following section, this potential discrepancy results in a loss proportional to $u(o) - \ow(o)$ in our charging analysis. However, the following lemma shows that this discrepancy can also be used to derive an alternative lower bound on $f(A_{\leq L})$. To get our results, we balance this lower bound against the bounds obtained in Section~\ref{sec:main-charging-scheme} via our charging scheme.

\begin{lemma}
\label{lem:alternative_bound}
It holds that
\begin{equation*}
    \sum_{o \in O}[u(o) - \ow(o)] \leq f(A_{\leq L}) - f(A_{\leq L} \mid O) 
		\enspace.
  \end{equation*}
\end{lemma}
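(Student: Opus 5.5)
The plan is to compute both sides explicitly and compare them via a telescoping sum. Write $A \triangleq A_{\leq L}$. By definition, $f(A \mid O) = f(A \cup O) - f(O)$, so the right-hand side equals $f(A) + f(O) - f(A \cup O)$. On the left, the auxiliary weights telescope: $\sum_{o \in O} u(o) = f(O) - f(\emptyset)$, as noted right after their definition. Hence it suffices to show that
\[
	\sum_{o \in O} \ow(o) \geq f(A \cup O) - f(A) \enspace,
\]
because then the left-hand side is at most $f(O) - f(\emptyset) - f(A \cup O) + f(A)$, which is at most the right-hand side since $f(\emptyset) \geq 0$.

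To lower bound $\sum_{o \in O} \ow(o)$, I would split $O$ into $O \setminus A$ and $O \cap A$. For $o_j \in O \cap A$, I simply use $\ow(o_j) \geq 0$, which holds by the $\max\{0,\cdot\}$ in the definition; these terms are dropped. For $o_j \in O \setminus A$, we have $A - o_j = A$, so
\[
	\ow(o_j) \geq f(o_j \mid A \cup \{o_1, \dotsc, o_{j-1}\}) \enspace.
\]

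The key observation is that $A \cup \{o_1, \dotsc, o_{j-1}\} = A \cup \{o_{j'} \mid j' < j,\ o_{j'} \notin A\}$, since elements of $O \cap A$ already belong to $A$. So summing the above marginals over $o_j \in O \setminus A$ in increasing order of $j$ telescopes exactly to $f(A \cup (O \setminus A)) - f(A) = f(A \cup O) - f(A)$. This gives the required bound.

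The only slightly delicate point is this bookkeeping: checking that the elements of $O \cap A$ do not break the telescoping. They do not, because adding them to a set already containing $A$ changes nothing. Submodularity is not even needed beyond the definitions. I expect no real obstacle otherwise.
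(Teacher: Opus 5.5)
Your proposal is correct and follows essentially the same route as the paper's proof. Both drop the $\bar{w}$ terms of $O \cap A_{\leq L}$ by non-negativity, bound the remaining terms from below by $f(o_j \mid A_{\leq L} \cup \{o_1, \dotsc, o_{j-1}\})$, telescope these to $f(O \cup A_{\leq L}) - f(A_{\leq L})$, and finish using $f(\emptyset) \geq 0$.
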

\begin{proof}
By the definitions of the weights $u(o)$ and $\ow(o)$,
\begin{align*}
    \sum_{o \in O}[u(o) - \ow(o)]
    ={} &
    \sum_{j = 1}^{|O|}f(o_j \mid \{o_1,\dotsc,o_{j-1}\}) - \sum_{o_j \in O} \max\{0, f(o_j \mid (A_{\leq L} - o_j) \cup \{o_1,\dotsc,o_{j-1}\})\}
    \\
    \leq {} &
		\sum_{j = 1}^{|O|}f(o_j \mid \{o_1,\dotsc,o_{j-1}\}) - \sum_{o_j \in O \setminus A_{\leq L}} \mspace{-18mu} f(o_j \mid A_{\leq L} \cup \{o_1,\dotsc,o_{j-1}\})\\
		={} &
    \left[f(O) - f(\emptyset)\right] - \left[f(O \cup A_{\leq L}) - f(A_{\leq L})\right]\\
		\leq{} &
		f(A_{\leq L}) - f(A_{\leq L} \mid O) 
		\enspace.
		\qedhere
\end{align*}
\end{proof}

\subsection{The Charging Scheme}
\label{sec:main-charging-scheme}

Recall that Section~\ref{sec:setting-up-charges} classified the elements of $O$ into three types: the elements of $O^{(s)}$, the elements of $O \setminus O_{\leq L}$, and the remaining elements (i.e., the elements of $O_{\leq L} \setminus O^{(s)}$). Each one of the first three lemmata of this section upper bounds $u(o)$ for the elements $o \in O$ of one of these types. Since $u(o)$ is the marginal contribution of $o$ to $O$, later in the section we are able to derive a bound on $f(O)$ by combining the upper bounds on $u(o)$ proved by the three lemmata. As noted previously, all our upper bounds on $u(o)$ include a loss term that is proportional to $u(o) - \ow(o)$. This loss term arises due to the possible discrepancy between the weights $\ow$ and the auxiliary weights $u$ we are bounding. To get our final bound on $f(O)$, we need also Lemma~\ref{lem:alternative_bound}, which gives a bound on $f(O)$ that improves as $\sum_{o \in O} [u(o) - \ow(o)]$ grows and thus allows us to cancel the above-mentioned loss terms.

\begin{lemma}
\label{lem:Os-bound}
For each $o \in O^{(s)}$, 
$
u(o) \leq (1+\eps) \cdot w(N_o) + [u(o) - \ow(o)]
$.
Moreover, 
$
\sum_{o \in O^{(s)}} w(N_o) \leq f(A_{\leq L} \mid \emptyset)
$.
\end{lemma}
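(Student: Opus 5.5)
The first inequality is immediate. By claim~\ref{item:O_i^s} of Lemma~\ref{lem:set_weights_properties}, every $o \in O^{(s)}$ satisfies $\ow(o) \leq (1+\eps) \cdot w(N_o)$. Writing $u(o) = \ow(o) + [u(o) - \ow(o)]$ then gives $u(o) \leq (1+\eps)\cdot w(N_o) + [u(o) - \ow(o)]$. Note that the bracketed term is non-negative by Observation~\ref{obs:positive_weights_diff}, although this is not even needed here.

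For the second inequality, I would use claim~\ref{item:empty_intersection} of Lemma~\ref{lem:set_weights_properties}, which says the sets $N_o$ for $o \in O^{(s)}$ are pairwise disjoint. Each $N_o$ is a subset of $A_{\leq L}$ (indeed of some $A_i$, by Observation~\ref{obs:N_o_inclusion}). Hence
\[
	\sum_{o \in O^{(s)}} w(N_o) = w\Big(\bigcup_{o \in O^{(s)}} N_o\Big) \leq w(A_{\leq L}).
\]
The inequality holds because every element of $A_{\leq L}$ has positive weight by claim~\ref{item:w(a)_bound} of Lemma~\ref{lem:weights_properties}: each such element lies in some $A_i$ with $i \in [L]$, so its weight is at least $m_i > 0$. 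Dropping the elements not covered by the union therefore only decreases the sum.

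It remains to observe that $w(A_{\leq L})$ telescopes. By definition $w(a_j) = f(a_j \mid \{a_1,\dotsc,a_{j-1}\})$, so
\[
	\sum_{j} w(a_j) = f(A_{\leq L}) - f(\emptyset) = f(A_{\leq L} \mid \emptyset).
\]
Here the ordering $a_1, a_2, \dotsc$ lists exactly the elements of the final set $A_{\leq L}$, each appearing once according to the last time it was added.

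There is no real obstacle in this lemma; the substantive work was already done in Lemma~\ref{lem:set_weights_properties}. The only points needing care are the non-negativity of $w$, which justifies dropping the uncovered elements, and checking that the telescoping uses precisely the elements of $A_{\leq L}$ in the fixed order.
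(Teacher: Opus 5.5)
Your proposal is correct and follows essentially the same route as the paper. You obtain the first inequality from claim~\ref{item:O_i^s} of Lemma~\ref{lem:set_weights_properties}, and the second from the pairwise disjointness of the sets $N_o$ (claim~\ref{item:empty_intersection}), the positivity of $w$ on $A_{\leq L}$, and telescoping the weights $w(a_j)$ to $f(A_{\leq L} \mid \emptyset)$.
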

\begin{proof}
Claim~\ref{item:O_i^s} of Lemma~\ref{lem:set_weights_properties} immediately implies that for any $o \in O^{(s)}$, 
\begin{equation*}
u(o) = \ow(o) + [u(o) - \ow(o)] \leq (1+\eps)\cdot w(N_o) + [u(o) - \ow(o)]\enspace.
\end{equation*}

Claim~\ref{item:empty_intersection} of Lemma~\ref{lem:set_weights_properties} further implies that every element of $A_{\leq L}$ appears in $N_o$ for at most one element $o \in O^{(s)}$. Together with the fact that $w(a) > 0$ for every $a \in A_{\leq L}$ (claim~\ref{item:w(a)_bound} of Lemma~\ref{lem:weights_properties}), this implies that
\begin{equation*}
\sum_{o \in O^{(s)}}w(N_o) \leq \sum_{a \in A_{\leq L}}w(a) = f(A_{\leq L} \mid \emptyset) \enspace.\qedhere
\end{equation*}
\end{proof}

\begin{lemma}
\label{lem:O'-bound}
For each $o \in O \setminus O_{\leq L}$, 
$
u(o) = u(o) - \ow(o)
$.
\end{lemma}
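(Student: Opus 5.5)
This identity is immediate from claim~\ref{item:o_L+1_bound} of Lemma~\ref{lem:weights_properties}, so the plan is essentially a one-line verification. That claim states that $\ow(o) = 0$ for every $o \in O \setminus O_{\leq L}$. Substituting into $u(o) - \ow(o)$ gives $u(o) - 0 = u(o)$, which is exactly the asserted equality.

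For completeness, recall why claim~\ref{item:o_L+1_bound} holds.
\begin{itemize}
\item By Observation~\ref{obs:construction_properties}, every $o \in O \setminus O_{\leq L}$ lies outside $A_{\leq L}$ and obeys $A_{\leq L} + o \in \cI$.
\item Observation~\ref{obs:final-set-L} therefore gives $f(o \mid A_{\leq L}) \leq 0$.
\item By submodularity, the marginal in the definition of $\ow(o)$ is taken with respect to a superset of $A_{\leq L}$, so it is at most $f(o \mid A_{\leq L}) \leq 0$.
\item The outer $\max\{0,\cdot\}$ then forces $\ow(o) = 0$.
\end{itemize}

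There is no real obstacle: the lemma only rewrites $u(o)$ in the ``loss term'' form $u(o) - \ow(o)$ used by the other charging lemmata. This lets these leftover elements be absorbed by Lemma~\ref{lem:alternative_bound} when the bounds are combined.
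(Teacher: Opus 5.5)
Your proposal is correct and takes the same approach as the paper: the lemma follows immediately by substituting $\ow(o) = 0$ from claim~\ref{item:o_L+1_bound} of Lemma~\ref{lem:weights_properties}. Your recap of why that claim holds, via Observations~\ref{obs:construction_properties} and~\ref{obs:final-set-L} plus submodularity (using $A_{\leq L} - o = A_{\leq L}$ since $o \notin A_{\leq L}$), also matches the paper's argument.
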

\begin{proof}
This follows immediately from claim~\ref{item:o_L+1_bound} of Lemma~\ref{lem:weights_properties}, which states that $\ow(o) = 0$ for all $o \in O \setminus O_{\leq L}$. 
\end{proof}

Next, we would like to bound $u(o)$ for the elements in $O_{\leq L} \setminus O^{(s)}$. Unlike the bounds in the previous two lemmata, this bound depends on the random shift $\tau$ selected by the algorithm. To formulate this bound, let us define, for each $o \in O$, $\mo \triangleq \min \{ m_i \mid i \geq 0,  m_i \geq u(o) \}$ to be the value of the smallest threshold $m_i$ that is at least $u(o)$ (observe that $\mo$ is well-defined since $u(o) > 0$ by Observation~\ref{obs:positive_weights_diff} and $u(o) \leq W \leq m_0$ by submodularity). Additionally, we define the ratio $r_o \triangleq \mo/u(o)$. Notice that $r_o$ is always in the range $[1, 2)$ since $u(o) \leq m^{(o)}$ and $m_{i-1}/m_i = 2$ for all $i$. Given this notation, we can now state the promised bound as Lemma~\ref{lem:O-other-bound}. This lemma introduces a parameter $d \geq 1$, which is later used to control the balance between the bound of this lemma and the aforementioned bound from Lemma~\ref{lem:alternative_bound}.
\begin{lemma}
  \label{lem:O-other-bound}
  For every $o \in O_{\leq L} \setminus O^{(s)}$ and any $d \geq 1$,
\begin{equation*}
\rho_{o,d} \cdot u(o) \leq w(N_o) + d[u(o) - \ow(o)]\enspace,
\end{equation*}
where
$\rho_{o,d} \triangleq r_o$ when $f$ is linear, and $\rho_{o,d} \triangleq \min\Big\{r_o,   \frac{1-\frac{1}{2}(1-1/d)}{1 - r_o^{-1}(1-1/d)}\Big\}$ when $f$ is a general submodular function.
\end{lemma}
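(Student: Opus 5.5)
The plan is a case analysis that reduces both ways of failing to be in $O^{(s)}$ to one inequality. Fix $o \in O_{\leq L} \setminus O^{(s)}$ and let $i \in [L]$ be the index with $o \in O_i$. By the definition of $O_i$, $N_o = N_o^{(i)} \neq \emptyset$. By Observation~\ref{obs:N_o_inclusion}, $N_o \subseteq A_i$. By claim~\ref{item:w(a)_bound} of Lemma~\ref{lem:weights_properties}, every element of $N_o$ has weight at least $m_i > 0$.

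Since $o \notin O_i^{(s)}$, one of two things holds. Either $\ow(o) \leq m_i$, or $|N_o| \geq 2$ (and then $\ow(o) \leq m_{i-1}$ by claim~\ref{item:w(o)_bound} of Lemma~\ref{lem:weights_properties}). I will find a threshold $B \in \{m_i, m_{i-1}\}$ with
\[
	\ow(o) \leq B \leq w(N_o)\enspace.
\]
In the first case take $B = m_i$, using $w(N_o) \geq m_i$. In the second case take $B = m_{i-1} = 2m_i$, using $w(N_o) \geq 2m_i$. Either way $B$ is one of the markers $m_j$.

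Now split on how $u(o)$ compares with $B$.

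\emph{Case $u(o) \leq B$.} Since $B$ is a marker at least $u(o)$, minimality gives $\mo \leq B$. Hence $r_o u(o) = \mo \leq B \leq w(N_o)$. Using $\rho_{o,d} \leq r_o$ and $u(o) - \ow(o) \geq 0$ (Observation~\ref{obs:positive_weights_diff}), the claim follows. In the linear case $u(o) = \ow(o) \leq B$, so only this case can occur, which proves the linear statement with $\rho_{o,d} = r_o$.

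\emph{Case $u(o) > B$.} This is the step where the second term of the minimum in $\rho_{o,d}$ is needed. Since $\ow(o) \leq B$ and $w(N_o) \geq B$,
\[
	w(N_o) + d[u(o) - \ow(o)] \geq B + d[u(o) - B] = d\,u(o) - (d-1)B\enspace.
\]
Because $B$ is a marker strictly below $u(o)$ and markers halve, $B \leq \mo/2 = r_o u(o)/2$. Therefore the right-hand side is at least $u(o)\cdot d\bigl(1 - (1-1/d) r_o/2\bigr)$. It remains to show
\[
	\rho_{o,d} \leq d\bigl(1 - (1-1/d) r_o/2\bigr)\enspace.
\]

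For this last inequality, write $t = 1 - 1/d \in [0,1)$ and $r = r_o \in [1,2)$. It suffices to show that the second term of the minimum is at most the target:
\[
	\frac{1 - t/2}{1 - t/r} \leq \frac{1 - tr/2}{1 - t}\enspace.
\]
Cross-multiplying (both denominators are positive), this becomes
\[
	1 - \tfrac{3t}{2} + \tfrac{t^2}{2} \leq 1 - \tfrac{t}{r} - \tfrac{tr}{2} + \tfrac{t^2}{2}\enspace,
\]
that is, $t\bigl(\tfrac{1}{r} + \tfrac{r}{2} - \tfrac{3}{2}\bigr) \leq 0$. This is equivalent to $(r-1)(r-2) \leq 0$, which holds for $r \in [1,2]$. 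That completes the plan. The only delicate points are choosing $B$ uniformly across the two subcases and using the halving of markers to get $B \leq \mo/2$.
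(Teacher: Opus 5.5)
Your proof is correct, and its skeleton matches the paper's. You use the same marker $B=m_j$ with $\ow(o)\le B\le w(N_o)$, obtained from the two ways of failing to be in $O^{(s)}$. You make the same split on $u(o)$ versus $B$, and your first case is identical.

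The difference is in the case $u(o)>B$. The paper writes $\bigl(1-\tfrac12(1-1/d)\bigr)u(o)\le\bigl(1-r_o^{-1}(1-1/d)\bigr)m_j+[u(o)-m_j]$ and divides by $1-r_o^{-1}(1-1/d)$. It then bounds the resulting coefficient $\bigl(1-r_o^{-1}(1-1/d)\bigr)^{-1}$ of $u(o)-m_j$ by $d$. This lands directly on the second term of $\rho_{o,d}$ with no further algebra. You instead substitute $B\le \mo/2=r_o u(o)/2$ straight into $d\,u(o)-(d-1)B$, which gives the coefficient $d-(d-1)r_o/2$. You then need the extra comparison $(r_o-1)(r_o-2)\le 0$ to recover the stated form.

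So the paper's route is shorter for the lemma as stated, while yours proves more. Your argument shows the inequality holds with $\min\{r_o,\,d-(d-1)r_o/2\}$ in place of $\rho_{o,d}$, because the paper's step $\bigl(1-r_o^{-1}(1-1/d)\bigr)^{-1}\le d$ is tight only at $r_o=1$.

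This gap is not cosmetic. With $r_o=2^\beta$ for $\beta$ uniform on $[0,1)$, one computes $\bE\bigl[\min\{r_o,\,d-(d-1)r_o/2\}\bigr]=d\log_2(1+1/d)$, which tends to $1/\ln 2$. The paper's $d'$ tends only to $\frac{1}{2\ln 2}+\frac12$. Feeding your bound into Proposition~\ref{prop:main-guarantee} and Corollary~\ref{cor:monotone-submod-guarantee} would therefore appear to improve the leading constant for monotone submodular $f$ from $\frac{2\ln 2}{1+\ln 2}\approx 0.819$ to $\ln 2$, matching the linear case. That downstream consequence is worth re-checking independently.
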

\begin{proof}
We begin the proof by showing that for all $o \in O_{\leq L} \setminus O^{(s)}$, there exist some threshold $m_j$ such that $\ow(o) \leq m_j \leq w(N_o)$. Since $o \in O_{\leq L}$, we must have $o \in O_i$ for some unique $i \in [L]$, and thus $N_o$ is well-defined and obeys $|N_o| \geq 1$. Since $o \not\in O^{(s)}$, this implies that either $\ow(o) \leq m_i$ or $|N_o| \geq 2$.  In the first case,
\begin{equation*}
\ow(o) \leq m_{i} \leq |N_o| \cdot m_i  \leq w(N_o) \enspace,
\end{equation*}
where the last inequality holds since $N_o \subseteq A_i$ and $w(a) \geq m_{i}$ for every $a \in A_i$ (by claim~\ref{item:w(a)_bound} of Lemma~\ref{lem:weights_properties}).
In the second case,
\begin{equation*}
\ow(o) \leq m_{i-1} = 2m_i \leq |N_o|\cdot m_i 
\leq w(N_o) \enspace,
\end{equation*}
where the first inequality holds by claim~\ref{item:w(o)_bound} of Lemma~\ref{lem:weights_properties} because $o \in O_i$, and the last inequality again holds since $N_o \subseteq A_i$ and $w(a) \geq m_{i}$ for every $a \in A_i$ (claim~\ref{item:w(a)_bound} of Lemma~\ref{lem:weights_properties}). Thus, in all cases there is indeed always a threshold $m_j$ with $\ow(o) \leq m_j \leq w(N_o)$. 

We now consider two cases based on the relationship between $u(o)$ and the threshold $m_j$. If $u(o) \leq m_j$, then because $\mo$ is the smallest threshold that is at least $u(o)$, we get
\begin{equation*}
u(o) = \frac{\mo}{r_o} \leq \frac{m_j}{r_o} \leq \frac{w(N_o)}{r_o}\enspace.
\end{equation*}
Rearranging this inequality yields
\begin{equation}
\label{eq:u(o)-bound-1}
r_o \cdot u(o) \leq w(N_o) \leq w(N_o) + d[u(o) - \ow(o)]\enspace,
\end{equation}
where the last inequality follows from $u(o) \geq \ow(o)$ (Observation~\ref{obs:positive_weights_diff}). When the function $f$ is linear, this case is the only possible case because $u(o) = \ow(o) \leq m_j$ for such $f$ (recall that $u(o) > 0$ by Observation~\ref{obs:positive_weights_diff}). This completes the proof of the lemma for linear functions.

For general submodular functions, we may also have $u(o) > m_j$. In this case, $\mo$ must be at least $m_{j - 1}$, which implies $m_j = \frac{1}{2}m_{j - 1} \leq \frac{1}{2}\mo = \frac{u(o)\cdot r_o}{2}$. Thus,
\begin{align*}
\Biggl(1 - \frac{1-1/d}{2} \Biggr) \cdot u(o) &= m_j - \frac{1 - 1/d}{2} \cdot u(o) + [u(o) - m_j] \\
&\leq  m_j - \frac{(1 - 1/d)}{2} \cdot \frac{2m_j}{r_o} + [u(o) - m_j] \\
&=  (1 - r_o^{-1}(1 - 1/d))m_j + [u(o) - m_j] \enspace.
\end{align*}
Note that, since $r_o \geq 1$, it must hold that $1 - r_o^{-1}(1-1/d) \geq 1/d$. Thus, dividing both sides of the above inequality by $1 - r_o^{-1}(1-1/d)$ and simplifying yields
\begin{multline}
\label{eq:u(o)-bound-2}
\frac{1-\frac{1}{2}(1-1/d)}{1 - r_o^{-1}(1-1/d)}
 \cdot u(o) \leq m_j + \left(1 - r_o^{-1}(1-1/d)\right)^{-1}[u(o) - m_j]  \\
\leq m_j + d[u(o) - m_j] 
\leq w(N_o) + d[u(o) - \ow(o)]\enspace,
\end{multline}
where we have used the fact that $\ow(o) \leq m_j \leq w(N_o)$ in the last inequality. The guarantee of the lemma for general submodular functions now follows by taking the smaller of the two lower bounds on $w(N_o) + d[u(o) - \ow(o)]$ given by~\eqref{eq:u(o)-bound-1} and~\eqref{eq:u(o)-bound-2} for the two cases.
\end{proof}

Combining all the above bounds, we obtain the following proposition.
\begin{proposition}\label{prop:main-guarantee-1}
Assuming $f$ is a submodular function such that $f(\emptyset) \geq 0$, the output set $A_{\leq L}$ of Algorithm~\ref{alg:1} obeys, for every two values $d \geq 2$ and $d' \leq k + 1$ and any strictly down-monotone $O \in \cI$,
\[
	(k + 1 + 2\eps) \cdot \bE[f(A_{\leq L})]
	\geq
        d' \cdot f(O) - d\cdot \sum_{o \in O}\bE[u(o) - \ow(o)] - \sum_{o \in O} [d' - \expect[\rho_{o,d}]] \cdot u(o)
	\enspace,
\]
where $\rho_{o,d} \triangleq r_o$ when $f$ is linear and $\rho_{o,d} \triangleq \min\Big\{r_o,   \frac{1-\frac{1}{2}(1-1/d)}{1 - r_o^{-1}(1-1/d)}\Big\}$ when $f$ is a general submodular function.
\end{proposition}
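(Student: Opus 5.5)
Since $f(O)-f(\emptyset)=\sum_{o\in O}u(o)$, we write $d'\cdot f(O) \le d'\sum_{o\in O}u(o) + d' f(\emptyset)$. We then bound $\sum_o \rho_{o,d}u(o)$, together with the loss terms, via the three lemmata of this section, and finally absorb the loss terms using Lemma~\ref{lem:alternative_bound}. The argument is pointwise for every fixed $\tau$, with expectation taken only at the end. Note that $d' - \bE[\rho_{o,d}]$ multiplies the deterministic quantity $u(o)$, which is independent of $\tau$; this is what allows the expectation to pass inside.

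\textbf{Pointwise bound.} Fix $\tau$. We multiply Lemma~\ref{lem:Os-bound} by $\rho_{o,d}/(1+\eps)$ or use it directly. Simpler: for $o\in O^{(s)}$ we have $\rho_{o,d}\le r_o<2$. Also $d\ge 2\ge 1+\eps$, and $u(o)\le (1+\eps)w(N_o)+[u(o)-\ow(o)]$, so
\[
\rho_{o,d}u(o)\le 2(1+\eps)w(N_o)+d[u(o)-\ow(o)].
\]
For $o\in O\setminus O_{\le L}$, Lemma~\ref{lem:O'-bound} gives $\rho_{o,d}u(o)\le 2u(o)=2[u(o)-\ow(o)]\le d[u(o)-\ow(o)]$. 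For the remaining $o$, Lemma~\ref{lem:O-other-bound} gives $\rho_{o,d}u(o)\le w(N_o)+d[u(o)-\ow(o)]$. Summing over all $o$ yields
\[
\sum_o\rho_{o,d}u(o)\le 2(1+\eps)\sum_{o\in O^{(s)}}w(N_o)+\sum_{o\in O_{\le L}\setminus O^{(s)}}w(N_o)+d\sum_o[u(o)-\ow(o)].
\]

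\textbf{Charging to $A_{\le L}$.} By claim~\ref{item:at_most_k_N_o} of Lemma~\ref{lem:set_weights_properties}, each $a$ lies in at most $k$ sets $N_o$ with $o\in O_{\le L}$, and by claim~\ref{item:empty_intersection} it lies in at most one such set with $o\in O^{(s)}$. Using $w(a)>0$, the total coefficient of $w(a)$ is at most $2(1+\eps)+(k-1)=k+1+2\eps$. Hence the right-hand side is at most $(k+1+2\eps)f(A_{\le L}\mid\emptyset)+d\sum_o[u(o)-\ow(o)]$.

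\textbf{Assembling.} Write $d'\sum_o u(o)=\sum_o\rho_{o,d}u(o)+\sum_o(d'-\rho_{o,d})u(o)$. Then
\[
d'f(O)\le d'f(\emptyset)+(k+1+2\eps)[f(A_{\le L})-f(\emptyset)]+d\sum_o[u(o)-\ow(o)]+\sum_o(d'-\rho_{o,d})u(o).
\]
Since $d'\le k+1\le k+1+2\eps$ and $f(\emptyset)\ge0$, the $f(\emptyset)$ terms combine into something nonpositive. Taking expectations over $\tau$ gives exactly the stated inequality.

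The point needing care is the $O^{(s)}$ charge, specifically that coefficients do not exceed $k+1+2\eps$. This relies on $d\ge2$ to absorb the $(1+\eps)$-scaled loss term, and on $\rho_{o,d}<2$.

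Lemma~\ref{lem:alternative_bound} is apparently not needed here; it is presumably used afterwards to trade off the $d$-term.
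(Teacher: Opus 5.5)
Your proof is correct and follows essentially the paper's argument: you bound $\rho_{o,d}u(o)\le 2u(o)$ on $O^{(s)}$ and on $O\setminus O_{\le L}$, apply the three charging lemmata with $d\ge 2$ absorbing the loss terms, charge each $a\in A_{\le L}$ with total coefficient at most $k+1+2\eps$, and use $d'\le k+1+2\eps$ and $f(\emptyset)\ge 0$ before taking expectations over $\tau$. Your per-element count is equivalent to the paper's split $2(1+\eps)=1+(1+2\eps)$, and, as you eventually note, Lemma~\ref{lem:alternative_bound} is not used here (it enters only in Proposition~\ref{prop:main-guarantee}), so the reference to it in your opening outline should be removed.
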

\begin{proof}
Combining $\rho_{o, d}$ times the bounds from Lemmata~\ref{lem:Os-bound} and~\ref{lem:O'-bound} with the bound from Lemma~\ref{lem:O-other-bound}, we obtain
{\allowdisplaybreaks\begin{align*}
&{\sum_{o \in O}\rho_{o,d}\cdot u(o) \leq \sum_{\mathclap{o \in O^{(s)}}} 2 \cdot u(o) + \mspace{10mu}\sum_{\mathclap{o \in O \setminus O_{\leq L}}} 2 \cdot u(o)
+ \mspace{10mu}\sum_{\mathclap{o \in O_{\leq L} \setminus O^{(s)}}} \rho_{o,d} \cdot u(o)} \\
&\leq {2} \cdot \sum_{\mathclap{o \in O^{(s)}}} \bigl((1+\eps)w(N_o) + [u(o) - \ow(o)] \bigr)
+ {2} \cdot \sum_{\mathclap{o \in O \setminus O_{\leq L}}} [u(o) - \ow(o)]
  \mspace{5mu} + \mspace{5mu}\sum_{\mathclap{o \in O_{\leq L} \setminus O^{(s)}}} \bigl(w(N_o) + d[u(o) - \ow(o)]) \\
&\leq \sum_{\mathclap{o \in O_{\leq L}}}w(N_o) + (1 + 2\eps)\sum_{\mathclap{o \in O^{(s)}}}w(N_o) + d \cdot \sum_{o \in O}[u(o) - \ow(o)]\\
&\leq \sum_{\mathclap{a \in A_{\leq L}}}k \cdot w(a) + (1 + 2\eps)\sum_{\mathclap{o \in O^{(s)}}}w(N_o) + d \cdot \sum_{o \in O}[u(o) - \ow(o)] \\
&\leq (k + 1 + 2\eps) \cdot f(A_{\leq L} \mid \emptyset) + d 
\cdot \sum_{o \in O}[u(o) - \ow(o)]
\enspace,
\end{align*}}%
where the first inequality uses the facts that $\rho_{o, d} \leq r_o \leq 2$ and $u(o) > 0$ (by Observation~\ref{obs:positive_weights_diff}); the third inequality follows from $d \geq 2$ and the inequality $u(o) - \ow(o) \geq 0$ (which holds by Observation~\ref{obs:positive_weights_diff}); the fourth inequality follows from 
the fact that each $a \in A_{\leq L}$ appears in at most $k$ sets $N_o$ (claim~\ref{item:at_most_k_N_o} of Lemma~\ref{lem:set_weights_properties}) and
obeys $w(a) \geq 0$ (claim~\ref{item:w(a)_bound} of Lemma~\ref{lem:weights_properties}); and the final inequality follows from the inequality $\sum_{o \in O^{(s)}}w(N_o) \leq f(A_{\leq L}\mid \emptyset)$ proved by Lemma~\ref{lem:Os-bound}.

Taking expectations on both sides, and noting that $f(O)$ and $u(o)$ do not depend on the randomness of the algorithm, we obtain
{\allowdisplaybreaks\begin{align*}
	d' \cdot f(O)
	&=
	{d' \cdot f(\emptyset) + \bE\bigg[\sum_{\mathclap{o \in O}}\rho_{o,d} \cdot u(o) \bigg] + \bE\bigg[\sum_{o \in O} (d' - \rho_{o,d})\cdot u(o)\bigg]}\\
	&\leq
	d' \cdot f(\emptyset) \mspace{-1mu}+\mspace{-1mu} (k \mspace{-1mu}+\mspace{-1mu} 1 \mspace{-1mu}+\mspace{-1mu} 2\eps) \cdot \bE[f(A_{\leq L} \mid \emptyset)] \mspace{-1mu}+\mspace{-1mu} d \cdot \sum_{\mathclap{o \in O}}\bE[u(o) - \ow(o)] \mspace{5mu} + \mspace{5mu}\sum_{o \in O} [d' - \bE[\rho_{o,d}]] \cdot u(o)\\
	&\leq
	(k + 1 + 2\eps) \cdot \bE[f(A_{\leq L})] + d \cdot \sum_{\mathclap{o \in O}}\bE[u(o) - \ow(o)] \mspace{5mu} + \mspace{5mu}\sum_{o \in O} [d' - \bE[\rho_{o,d}]] \cdot u(o)
	\enspace.
\end{align*}}%
Rearranging this inequality gives the claimed result.
\end{proof}

Using Proposition~\ref{prop:main-guarantee-1} we can now prove our main results for linear and monotone submodular functions. We begin with the simpler case of linear functions.
\begin{theorem}\label{thm:linear-guarantee}
When $f$ is a linear function, the approximation guarantee of Algorithm~\ref{alg:1} is at most $(k + 1) \ln 2 + O(\eps)$.
\end{theorem}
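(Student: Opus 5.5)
We will apply Proposition~\ref{prop:main-guarantee-1} with $d = 2$ and a value $d'$ to be chosen, specializing it to a linear $f$. By Observation~\ref{obs:positive_weights_diff}, $u(o) = \ow(o)$ for every $o \in O$ when $f$ is linear, so the middle term $d \cdot \sum_{o}\bE[u(o) - \ow(o)]$ vanishes. Also $\rho_{o,d} = r_o$ in the linear case. The proposition then reads
\[
	(k+1+2\eps)\cdot \bE[f(A_{\leq L})] \geq d' \cdot f(O) - \sum_{o \in O}\bigl[d' - \bE[r_o]\bigr]\cdot u(o)
	\enspace.
\]
So it suffices to compute $\bE[r_o]$ for each fixed $o$ and choose $d' = \bE[r_o]$, provided this expectation is the same constant for every $o$ and at most $k+1$. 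The choice is legitimate because $u(o)$ depends only on $O$ and not on $\alpha$.

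The key step is computing $\bE[r_o]$. Fix $o$ with $u(o) \in (0, W]$. The thresholds are $m_i = W 2^{\alpha - i}$, so $\log_2 m_i = \log_2 W + \alpha - i$. Since $\alpha$ is uniform on $(0,1]$, the fractional position of the threshold grid relative to $\log_2 u(o)$ is uniform. Precisely, $\log_2 r_o = \log_2 \mo - \log_2 u(o)$ is the smallest non-negative value of $\log_2 W + \alpha - i - \log_2 u(o)$ over integers $i \geq 0$. Because $u(o) \leq W \leq m_0$, the restriction $i \geq 0$ is harmless. Hence $\log_2 r_o$ equals $(\alpha + c) \bmod 1$ for a constant $c$, possibly taking value $0$ rather than $1$ at a single point, which is a null event. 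It is therefore uniform on $[0,1)$.

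This gives
\[
	\bE[r_o] = \int_0^1 2^x\,dx = \frac{1}{\ln 2}\enspace.
\]
Taking $d' = 1/\ln 2 \leq k+1$ makes the last sum vanish, and we get
\[
	(k+1+2\eps)\,\bE[f(A_{\leq L})] \geq f(O)/\ln 2\enspace.
\]
The ratio is therefore $(k+1+2\eps)\ln 2 = (k+1)\ln 2 + O(\eps)$.

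We also need $O$ to be strictly down-monotone, which holds by taking a minimal optimal solution as discussed earlier. The hypothesis $f(\emptyset) \geq 0$ holds trivially since $f$ is linear.

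The only delicate point is verifying the uniformity of $\log_2 r_o$ given the exponential parametrization $\tau = 2^\alpha$. We also need to handle boundary cases such as $u(o) = W$ exactly, which affect only a measure-zero set of $\alpha$ values.
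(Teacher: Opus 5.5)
Your proposal is correct and follows essentially the same route as the paper. Like the paper, you specialize Proposition~\ref{prop:main-guarantee-1} to linear $f$ (so $u = \ow$ and $\rho_{o,d} = r_o$), take $d' = 1/\ln 2 \le k+1$, and show $\bE[r_o] = 1/\ln 2$ because $\log_2 r_o$ is a cyclic shift of $\alpha$ and hence uniform on $[0,1)$; this uniformity is exactly the paper's Lemma~\ref{lem:shift}. Your explicit choice $d = 2$ is harmless, since the proposition only requires $d \ge 2$ and $d$ plays no role once the discrepancy terms vanish.
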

\begin{proof}
For linear functions, $u(o) = \ow(o) = f(\{o\})$ for all $o \in O$, and hence, the terms $\bE[u(o) - \ow(o)]$  in Proposition~\ref{prop:main-guarantee-1} all vanish. Then, since $\rho_{o,d} = r_o$ for linear functions, we have
\begin{equation*}
(k +1 + 2\eps)\cdot \bE[f(A_{\leq L})] \geq d' \cdot f(O) - \sum_{o \in O} [d' - \bE[r_{o}]] \cdot u(o)
\enspace.
\end{equation*}
Let $d' = \ln^{-1}2$ (notice that $d' \leq 2 \leq k + 1$). Below, we prove that $\bE[r_o] = d'$ for every $o \in O$, which implies that the last term on the rightmost side of the above inequality is $0$. Rearranging the above inequality then yields
\[
	\bE[f(A_{\leq L})]
	\geq
	\frac{d'}{k + 1 + 2\eps} \cdot f(O)
	=
	\frac{\ln^{-1} 2}{k + 1 + 2\eps} \cdot f(O)
	=
	\frac{f(O)}{(k + 1) \ln 2 + O(\eps)} 
	\enspace.
\]
Recalling that there must exist at least one optimal solution $O$ that is also strictly down-monotone, the theorem follows from this inequality. The rest of the proof is devoted to proving that indeed $\bE[r_o] = d'$.

Recall that for each $o \in O$, $r_o$ is the ratio $m^{(o)}/u(o)$, where $m^{(o)} = \min\{m_i \mid i \geq 0, m_i \geq u(o)\}$ is the lowest value of a threshold $m_i$ that is at least $u(o)$. We also recall that, for each $i$, $m_i = W\tau 2^{-i} = W 2^{-i+\alpha}$, where $\alpha$ is chosen uniformly at random from $(0,1]$. Intuitively, this implies that $r_o = 2^\beta$ for some $\beta \in [0,1)$ that is obtained by cyclically shifting $\alpha$, and is thus, distributed uniformly at random from $[0,1)$. We give a formal proof of this fact in Lemma~\ref{lem:shift} in Appendix~\ref{apx:r_is_shift}. Given this fact,
\[
\bE\left[ r_o\right] 
= \int_{0}^1 2^\beta \,\mathrm{d}\beta
= \frac{2^\beta}{\ln 2}\biggr|_{\beta = 0}^{\beta = 1}
= \ln^{-1} 2
= d'
\enspace.
\qedhere
\]
\end{proof}



The following proposition gives a guarantee for Algorithm~\ref{alg:1} when the objective is a general submodular (rather than linear) function. Our main result for monotone submodular functions readily follows from this proposition (see Corollary~\ref{cor:monotone-submod-guarantee}), but the proposition itself applies also to non-monotone submodular functions, and is used in Setion~\ref{sec:non-monotone} to derive our main result for such functions.

\begin{proposition}\label{prop:main-guarantee}
When $f$ is a non-negative submodular function, the output set $A_{\leq L}$ of Algorithm~\ref{alg:1} obeys, for every $d \geq 2$ and any strictly down-monotone $O \in \cI$,
\[
	(k + d + 1 + 2\eps) \cdot \bE[f(A_{\leq L})]
	\geq
	d' \cdot f(O) + d \cdot \bE [f(A_{\leq L} \mid O)]
	\enspace,
\]
where $d' = \frac{1 - 1/d}{2 \ln 2} + \frac{d + 1}{2d}$.
\end{proposition}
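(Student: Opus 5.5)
The plan is to combine Proposition~\ref{prop:main-guarantee-1} with Lemma~\ref{lem:alternative_bound}, and to choose $d'$ so that the last sum in Proposition~\ref{prop:main-guarantee-1} vanishes in expectation. So I will apply Proposition~\ref{prop:main-guarantee-1} with the given $d \geq 2$ and with $d' = \frac{1-1/d}{2\ln 2} + \frac{d+1}{2d}$. This is admissible because $d' \leq \frac{1}{2\ln 2} + 1 < 2 \leq k+1$.

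\textbf{Step 1 (the discrepancy term).} Lemma~\ref{lem:alternative_bound} gives, after taking expectations,
\[
d\cdot\sum_{o\in O}\bE[u(o)-\ow(o)] \leq d\cdot\bE[f(A_{\leq L})] - d\cdot\bE[f(A_{\leq L}\mid O)].
\]
Substituting this into Proposition~\ref{prop:main-guarantee-1} and moving the $d\cdot\bE[f(A_{\leq L})]$ term to the left-hand side yields
\[
(k+d+1+2\eps)\cdot\bE[f(A_{\leq L})] \geq d'\cdot f(O) + d\cdot\bE[f(A_{\leq L}\mid O)] - \sum_{o\in O}\bigl(d'-\bE[\rho_{o,d}]\bigr)\cdot u(o).
\]
Since $u(o) > 0$ by Observation~\ref{obs:positive_weights_diff}, it remains to show that $\bE[\rho_{o,d}] \geq d'$ for every $o \in O$.

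\textbf{Step 2 (computing $\bE[\rho_{o,d}]$; the main step).} Write $c = 1 - 1/d \in [1/2, 1)$ and $g(r) = \frac{1-c/2}{1-c/r} = (1-c/2)\frac{r}{r-c}$. Then $\rho_{o,d} = \min\{r_o, g(r_o)\}$. Since $u(o)$ does not depend on $\tau$, Lemma~\ref{lem:shift} applies: $r_o = 2^\beta$ with $\beta$ uniform on $[0,1)$, exactly as in the proof of Theorem~\ref{thm:linear-guarantee}.

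Next I locate where the minimum switches. The function $r$ is increasing and $g$ is decreasing on $[1,2]$, and $r = g(r)$ holds exactly when $r - c = 1 - c/2$, that is, at $r^* = 1 + c/2 \in [1,2]$. Hence $\rho_{o,d} = r_o$ for $r_o \leq r^*$ and $\rho_{o,d} = g(r_o)$ for $r_o \geq r^*$.

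Changing variables $r = 2^\beta$, so that $d\beta = \frac{dr}{r\ln 2}$, gives
\[
\bE[\rho_{o,d}] = \int_1^{r^*}\frac{dr}{\ln 2} + \int_{r^*}^{2}\frac{1-c/2}{(r-c)\ln 2}\,dr = \frac{c/2}{\ln 2} + \frac{1-c/2}{\ln 2}\cdot\ln\frac{2-c}{1-c/2}.
\]
Because $2 - c = 2(1 - c/2)$, the logarithm equals $\ln 2$. The total is therefore $1 - c/2 + \frac{c}{2\ln 2} = d'$, with equality. So the correction sum vanishes and the proposition follows.

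The only real obstacle is Step 2: identifying the crossover point $r^*$ and recognizing that this particular $d'$ is exactly the expectation. Everything else is a direct substitution of earlier results. I would double-check that the hypothesis $d \geq 2$ (needed by Proposition~\ref{prop:main-guarantee-1}) is used only there, and that non-negativity of $f$ suffices for that proposition, since it requires only $f(\emptyset) \geq 0$.
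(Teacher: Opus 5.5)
Your proposal is correct and follows the paper's proof essentially step for step. You check $d' < 2 \le k+1$ so that Proposition~\ref{prop:main-guarantee-1} applies, absorb the discrepancy term via Lemma~\ref{lem:alternative_bound}, and show $\bE[\rho_{o,d}] = d'$ using $r_o = 2^\beta$ with the crossover at $r^* = 1 + \frac{1}{2}(1-1/d)$. The only difference is cosmetic: you integrate in $r$ rather than $\beta$, which makes the identity $\ln\frac{2-c}{1-c/2} = \ln 2$ slightly more transparent.
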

\begin{proof}
Note that $d' = \frac{1 - 1/d}{2 \ln 2} + \frac{d + 1}{2d} \leq \frac{1}{2 \ln 2} + \frac{3d/2}{2d} < 1 + 3/4 < 2$. Thus, Proposition~\ref{prop:main-guarantee-1} 
and Lemma~\ref{lem:alternative_bound} give together
\begin{align*}
(k + 1 + 2\eps) \cdot {}&\bE[f(A_{\leq L})] 
  \geq
    d' \cdot f(O) - d\cdot \bE\Bigl[\sum_{\mathclap{o \in O}}u(o) - \ow(o)\Bigr] -
    \sum_{o \in O} [d' - \bE[\rho_{o,d}]] \cdot u(o) \\
  &\geq 
    d' \cdot f(O) - d\cdot\bE[f(A_{\leq L}) - f(A_{\leq L} \mid O)] -
    \sum_{o \in O} [d' - \bE[\rho_{o,d}]] \cdot u(o)
		\enspace.
\end{align*}
Below we prove that $\bE[\rho_{o,d}] = d'$ for every $o \in O$, which implies that the last term on the rightmost side of the above inequality is $0$. Notice that, given this observation, the proposition follows by using the linearity of expectation and rearranging the resulting inequality. The rest of the proof is devoted to proving that indeed $\bE[\rho_{o,d}] = d'$.

Recall that Lemma~\ref{lem:shift} in Appendix~\ref{apx:r_is_shift} proves that $r_o = 2^\beta$ for $\beta$ that is distributed uniformly at random from $[0,1)$. This property of $r_o$ underlies the calculation below.
\begin{align*}
\bE[\rho_{o,d}] &= \bE\left[ \min\left\{ r_o, \frac{1 - \frac{1}{2}(1 - 1/d)}{1 - r_o^{-1}(1-1/d)} \right\} \right] \\
&= \int_{0}^1 \min\left\{ 2^\beta, \frac{1 - \frac{1}{2}(1 - 1/d)}{1 - 2^{-\beta}(1-1/d)} \right\}\,\mathrm{d}\beta
\\
&= \int_{0}^{\log_2(1+\frac{1}{2}(1-1/d))} 2^\beta\,\mathrm{d}\beta
+ \int_{\log_2(1+\frac{1}{2}(1-1/d))}^{1} \frac{1 - \frac{1}{2}(1 - 1/d)}{1 - 2^{-\beta}(1-1/d)} \mathrm{d}\beta \\
&= \frac{2^\beta}{\ln 2}\biggr|_{\beta = 0}^{\beta = \log_2(1+\frac{1}{2}(1-1/d))}
+
\frac{1-\frac{1}{2}(1-1/d)}{\ln 2}\cdot \ln\left(2^\beta - (1-1/d)\right)\biggr|_{\beta = \log_2(1+\frac{1}{2}(1-1/d))}^{\beta=1} \\
&= \frac{1-1/d}{2\ln 2} 
+ \frac{1 - \frac{1}{2}(1-1/d)}{\ln 2} \cdot \left(\ln(1 + 1/d) - \ln(\tfrac{1}{2}(1+1/d))\right)\\
&= \frac{1-1/d}{2\ln 2} + 1 - \tfrac{1}{2}(1-1/d)
= \frac{1-1/d}{2\ln 2} + \frac{d+1}{2d} = d'\enspace.
\qedhere
\end{align*}
\end{proof}


\begin{corollary}
\label{cor:monotone-submod-guarantee}
When $f$ is a non-negative monotone submodular function, the approximation guarantee of Algorithm~\ref{alg:1} is at most $\frac{2k \ln 2}{1+\ln2} + O(\sqrt{k}) \leq 0.819k + O(\sqrt{k}).$
\end{corollary}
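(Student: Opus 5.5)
The plan is to apply Proposition~\ref{prop:main-guarantee} with an optimal, strictly down-monotone $O$. Such an $O$ exists by the discussion following the definition of strict down-monotonicity. Monotonicity of $f$ gives $f(A_{\leq L} \mid O) = f(A_{\leq L} \cup O) - f(O) \geq 0$, so the term $d \cdot \bE[f(A_{\leq L} \mid O)]$ can be dropped. A sharper alternative is to keep it and use $f(A_{\leq L}\cup O) \geq f(O)$, which gives $f(A_{\leq L}\mid O) \geq f(O) - f(O) = 0$; this is the same bound, so dropping the term suffices. We then get
\[
	\bE[f(A_{\leq L})] \geq \frac{d'}{k + d + 1 + 2\eps} \cdot f(O), \qquad d' = \frac{1 - 1/d}{2\ln 2} + \frac{d+1}{2d},
\]
valid for every $d \geq 2$. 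The approximation ratio is therefore $(k + d + 1 + 2\eps)/d'$, and it remains to choose $d$ so as to minimize this expression asymptotically in $k$.

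Write $d' = c - \gamma/d$, where $c = \frac{1}{2\ln 2} + \frac12 = \frac{1 + \ln 2}{2\ln 2}$ and $\gamma = \frac{1}{2\ln 2} - \frac12 > 0$. Then
\[
	\frac{k + d + 1 + 2\eps}{c - \gamma/d}
	= \frac{k + d + O(1)}{c}\Bigl(1 + O(1/d)\Bigr)
	= \frac{k}{c} + O\Bigl(d + \frac{k}{d}\Bigr) + O(1).
\]
Choosing $d = \max\{2, \sqrt{k}\}$ balances the two error terms and gives the ratio $\frac{k}{c} + O(\sqrt{k}) = \frac{2k\ln 2}{1 + \ln 2} + O(\sqrt{k})$. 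Fixing $\eps$ to be any constant, say $\eps = 1/2$, absorbs the $2\eps$ into the $O(\sqrt{k})$ term.

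The only points needing care are bookkeeping. First, $d' \geq c - \gamma/2 > 0$ for $d \geq 2$, so the division is legitimate and $1/(c - \gamma/d) = \frac1c(1 + O(1/d))$ holds uniformly. Second, the expression $\frac{k}{c}(1 + O(1/d))$ produces the $k/d = \sqrt{k}$ term. Finally, the numerical constant should be checked: $\frac{2\ln 2}{1 + \ln 2} \approx 1.386/1.693 \approx 0.8188 \leq 0.819$. I do not expect any genuine obstacle, since the work is done by Proposition~\ref{prop:main-guarantee}.
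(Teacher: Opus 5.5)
Your proposal is correct and follows the paper's proof: monotonicity lets you drop $d \cdot \bE[f(A_{\leq L} \mid O)] \geq 0$, you rewrite $d' = \frac{1+\ln 2}{2\ln 2} - d^{-1}\cdot\frac{1-\ln 2}{2\ln 2}$, and you take $d = \Theta(\sqrt{k})$. The differences are cosmetic. The paper uses $d = 2\sqrt{k}$ and the explicit inequality $\frac{1}{a-b} \leq \frac{1+2b/a}{a}$ (valid for $2b \leq a$) where you use an $O(1/d)$ expansion; also, your ``sharper alternative'' sentence just restates the same inequality, and $2\eps < 2$ is already $O(1)$ since $\eps \in (0,1)$, so you do not need to fix its value.
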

\begin{proof}
In the case that $f$ is monotone, the expression $f(A_{\leq L} \mid O)$ in Proposition~\ref{prop:main-guarantee} is always non-negative. Thus, for any $d \geq 2$, we have
\begin{equation*}
(k + d + 1 + 2\eps) \cdot \bE[f(A_{\leq L})] \geq d' \cdot f(O)
\enspace,
\end{equation*}
where $d' = \frac{1 - 1/d}{2 \ln 2} + \frac{d + 1}{2d}$. Substituting $d'$ into this inequality, we get that the ratio $f(O)/\bE[f(A_{\leq L})]$ is at most
\[
	\frac{k + d + 1 + 2\eps}{\frac{1 - 1/d}{2 \ln 2} + \frac{d + 1}{2d}}
	=
	\frac{k + d + 1 + 2\eps}{\frac{1 + \ln 2}{2 \ln 2} - d^{-1} \cdot \frac{1 - \ln 2}{2 \ln 2}}
	\leq
	\frac{(k + d + 1 + 2\eps)(1 + 2d^{-1} \cdot \frac{1 - \ln 2}{1 + \ln 2})}{\frac{1 + \ln 2}{2 \ln 2}}
	\enspace,
\]
where the inequality uses the fact that for every two positive values $a, b$ obeying $2b \leq a$, we have $\frac{1}{a - b} \leq \frac{1 + 2b/a}{a}$. Selecting $d = 2\sqrt{k}$ (notice that $d \geq 2$) makes the above expression equal to $\frac{2k \ln 2}{1+\ln 2} + O(\sqrt{k})$, which proves the corollary since there must exist some optimal solution $O$ that is also strictly down-monotone.
\end{proof}

In Appendix~\ref{app:time_complexity}, we show that it is possible to simulate Algorithm~\ref{alg:1} in $O(\eps^{-1}|E|^4) = \Poly(|E|,\allowbreak \eps^{-1})$ time. Together with Theorems~\ref{thm:linear-guarantee} and Corollary~\ref{cor:monotone-submod-guarantee}, this immediately gives the following.
\thmMonotoneResult*


\section{Algorithm for Non-monotone Submodular Functions} \label{sec:non-monotone}

In this section, our goal is to prove Theorem~\ref{thm:non-monotone_result}. As mentioned in Section~\ref{ssc:our_results}, we do that using the framework of~\cite{feldman2023how} for adapting algorithms for maximizing non-negative monotone submodular functions to work also for functions that are not necessarily monotone. Such algorithms often have guarantees that depend on $f(B \cup O)$, where $B$ is their output set and $O$ is an optimal solution, which is a good thing when $f$ is monotone (since $f(B \cup O) \geq f(O)$ in this case), but problematic in the non-monotone case. Notice that the guarantee of Algorithm~\ref{alg:1} given by Proposition~\ref{prop:main-guarantee} indeed conforms to this general trend as it involves the term $f(A_\ell \mid O) = f(A_\ell \cup O) - f(O)$.

The framework of~\cite{feldman2023how} (and the earlier similar frameworks of~\cite{gupta2010constrained,mirzasoleiman2016fast}) is based on the observation that the dependence on the term $f(B \cup O)$ is only bad when $f(B \cup O)$ is significantly smaller than $f(O)$, or in other words, when the elements of $B$ significantly ``damage'' the optimal solution. This observation motivates the following meta-algorithm: execute the algorithm for monotone functions to get a solution $B_1$, remove the elements of $B_1$ from the ground set, then execute the algorithm for monotone functions again to get a second solution $B_2$ and then return the better among the two solutions. For the first run of the algorithm, we obtain a guarantee relating $f(B_1)$ to $f(B_1 \cup O)$, and for the second run we obtain a guarantee relating $f(B_2)$ to $f(B_2 \cup (O \setminus B_1))$. Since $B_1$ and $B_2$ are disjoint, submodularity implies that $f(B_1 \cup O) + f(B_2 \cup (O \setminus B_1)) \geq f(O \setminus B_1) + f(B_1 \cup B_2 \cup O)$, and so the better of $B_1 \cup O$ and $B_2 \cup (O \setminus B_1)$ is at least $\frac{1}{2}f(O \setminus B_1)$. If $f(O \setminus B_1)$ is comparable to $f(O)$, then we are done. 
In the remaining case, $f(O \setminus B_1)$ is significantly smaller than $f(O)$, which by submodularity implies that $f(O \cap B_1)$ must be large. To handle this case as well, we can apply an approximation algorithm for \emph{unconstrained} submodular maximization to the set $B_1$ to obtain a third solution $B'_1 \subseteq B_1$ whose value is at least $\frac{1}{2} \cdot f(B_1 \cap O)$. Here, we use the Double Greedy algorithm of Buchbinder et al.~\cite{buchbinder2015tight} for this purpose.\footnote{Formally, Double Greedy gets a non-negative submodular function $f \colon 2^E \to \nnR$ and returns a set $T \subseteq E$ such that $\bE[f(T)] \geq \frac{1}{2} \cdot \max_{S \subseteq E} f(S)$. To employ this algorithm for our purpose, we pass to it the restriction of $f$ to the ground set $B_1$, which guarantees that its output set $B'_1$ is a subset $B_1$ and that $\bE[f(B'_1)] \geq\frac{1}{2} \cdot \max_{S \subseteq B_1} f(S) \geq \frac{1}{2} \cdot f(B_1 \cap O)$.}

To get the best guarantee via this general framework, it is necessary to repeat the above steps multiple times, which generates a series of disjoint solutions $B_1, B_2, \dotsc$ and a series of additional solutions $B'_1 \subseteq B_1, B'_2 \subseteq B_2, \dotsc$. Algorithm~\ref{alg:repeat} implements this in our context. It receives as input a non-negative (not necessarily monotone) submodular function $f\colon 2^E \to \nnR$, a matroid $k$-parity constraint $(E, \cI)$, and a positive integer parameter $\ell$ determining the number of times that the algorithm for montone functions (Algorithm~\ref{alg:1}) is executed. As expected, the $i$-th execution of Algorithm~\ref{alg:1} (by Line~\ref{line:restriction} of Algorithm~\ref{alg:repeat}) is done with respect to a restriction of the input instance to the ground set $E_{i - 1}$ obtained by removing the elements of $\cup_{j = 1}^{i - 1} B_j$ from $E$. We can use Proposition~\ref{prop:main-guarantee} to analyze the guarantee of Algorithm~\ref{alg:1} on this restricted instance because its objective function $f|_{E_i}$ inherits the submodularity of $f$ and the restricted constraint $(E_{i-1},2^{E_{i - 1}} \cap \cI)$ remains a matroid $k$-parity constraint. 

\begin{algorithm}
Let $E_0 \gets E$.\\
\For{$i = 1$ \KwTo $\ell$}
{
Execute Algorithm~\ref{alg:1} on the function $f|_{E_i}$, the constraint $(E_{i-1}, 2^{E_{i - 1}} \cap \cI)$, and an arbitrary constant value $\eps \in (0, 1)$. Let $B_i$ be the output set of this algorithm. \label{line:restriction}
\\
	Execute the Double Greedy algorithm of~\cite{buchbinder2015tight} on the function $f|_{B_i}$, and let $B'_i$ be the output set of this algorithm.\label{line:double_greedy}\\
	Let $E_i \gets E_{i - 1} \setminus B_i$.
}
\Return the set $B^* \in \{B_i \mid i \in [\ell]\} \cup \{ B'_i \mid i \in [\ell]\}$ with maximum value $f(B^*)$.
\caption{\textsc{Repetitions Algorithm} $(E, f, \cI, \ell)$}
\label{alg:repeat}
\end{algorithm}



We begin the formal analysis of Algorithm~\ref{alg:repeat} with the following two lemmata, which lower bound the values of the individual sets $B_i$ and $B'_i$. As in our analysis of Algorithm~\ref{alg:1} in Section~\ref{sec:analysis}, we assume that $O$ is an arbitrary strictly down-monotone feasible solution, and let $B^*$ be the set produced by Algorithm~\ref{alg:repeat}.
\begin{lemma} \label{lem:basic_lower_bound_B}
For every $i \in [\ell]$ and $d \geq 2$,
\begin{align*}
	(k + d + 1 + 2\eps) \cdot \bE[f(B_i)]
	\geq{} &
	d' \cdot \bE[f(O \cap E_{i - 1})] + d \cdot \expect [f(B_i \mid O)] \enspace,
\end{align*}
where $d' = \frac{1 - 1/d}{2 \ln 2} + \frac{d + 1}{2d}$.
\end{lemma}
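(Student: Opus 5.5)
We condition on the randomness of the first $i-1$ iterations of Algorithm~\ref{alg:repeat} and apply Proposition~\ref{prop:main-guarantee} to the $i$-th execution of Algorithm~\ref{alg:1}. Once $B_1,\dotsc,B_{i-1}$ are fixed, so is $E_{i-1}$. The $i$-th execution then runs on the restricted instance, with objective $f$ restricted to $E_{i-1}$ and constraint $(E_{i-1}, 2^{E_{i-1}} \cap \cI)$. This restricted constraint is again a matroid $k$-parity constraint, and the restricted objective is still non-negative and submodular. Proposition~\ref{prop:main-guarantee} holds for every strictly down-monotone feasible set, whether or not it is optimal. So it suffices to exhibit a strictly down-monotone set $O'_{i} \subseteq O \cap E_{i-1}$ of the restricted instance with $f(O'_i) \geq f(O \cap E_{i-1})$ and $f(B_i \mid O'_i) \geq f(B_i \mid O)$.

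\textbf{Step 1: constructing $O'_i$.} Start from $O \cap E_{i-1}$, which is feasible by down-closedness. While some $x$ in the current set $T$ has $f(x \mid T - x) \leq 0$, remove it. Each removal does not decrease $f$, so the final set $O'_i$ satisfies $f(O'_i) \geq f(O \cap E_{i-1})$. It is strictly down-monotone and lies in $E_{i-1}$. Note that $O'_i$ depends only on $E_{i-1}$ and not on the randomness of the $i$-th execution. This matters, because Proposition~\ref{prop:main-guarantee} needs $O$ to be fixed independently of the algorithm's shift $\tau$.

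\textbf{Step 2: comparing $f(B_i \mid O'_i)$ with $f(B_i \mid O)$.} This is the step I expect to be the main obstacle, since it is the only place where the two sets must be related through $B_i$. Since $O'_i \subseteq O$, submodularity gives
\[
	f(B_i \mid O'_i) = f(B_i \cup O'_i) - f(O'_i) \geq f(B_i \cup O) - f(O) = f(B_i \mid O).
\]
Because $B_i \subseteq E_{i-1}$, the marginal $f(B_i \mid O'_i)$ is well defined under the restriction of $f$ to $E_{i-1}$. The restriction only limits the ground set, so the value $f(B_i \mid O'_i)$ is the same under $f$ and under its restriction, and the elements of $O \setminus E_{i-1}$ cause no trouble. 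One more check is that the coefficient $d$ of this term is nonnegative, which holds since $d \geq 2$.

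\textbf{Step 3: combining.} Proposition~\ref{prop:main-guarantee}, applied conditionally, gives
\[
	(k+d+1+2\eps)\,\bE[f(B_i) \mid E_{i-1}] \geq d' f(O'_i) + d\,\bE[f(B_i \mid O'_i) \mid E_{i-1}].
\]
By Steps 1 and 2, the right-hand side is at least
\[
	d' f(O \cap E_{i-1}) + d\,\bE[f(B_i \mid O) \mid E_{i-1}],
\]
using $d' > 0$. Taking expectation over $E_{i-1}$ by the law of total expectation yields the lemma.
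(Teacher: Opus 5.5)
\textbf{Gap in Step 2.} Your overall route matches the paper's: condition on $E_{i-1}$, apply Proposition~\ref{prop:main-guarantee} to a strictly down-monotone subset of $O \cap E_{i-1}$, then relax $f(B_i \mid O')$ to $f(B_i \mid O)$. But Step~2 is justified by a false general claim. For a non-monotone submodular $f$ and $S \subseteq T$, the inequality $f(B \mid S) \geq f(B \mid T)$ holds only when $B$ is disjoint from $T \setminus S$. Elements of $B \cap (T \setminus S)$ contribute $0$ to $f(B \mid T)$ but may contribute a negative amount to $f(B \mid S)$.

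In your setting, $T \setminus S = (O \setminus E_{i-1}) \cup \bigl((O \cap E_{i-1}) \setminus O'_i\bigr)$. The fact $B_i \subseteq E_{i-1}$ keeps $B_i$ away from the first part only. Nothing stops $B_i$ from containing an element $x$ you pruned in Step~1. For example, suppose $x$ was the last element pruned, so $f(x \mid O'_i) \leq 0$, possibly strictly. If $B_i = \{x\}$, then $f(B_i \mid O'_i) = f(x \mid O'_i)$ while $f(B_i \mid O) = 0$ because $x \in O$, so the inequality can go the wrong way. Your remark that the elements of $O \setminus E_{i-1}$ ``cause no trouble'' addresses well-definedness only, not this issue.

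\textbf{The fix.} You need the observation the paper uses, which your argument never invokes: $O$ is assumed strictly down-monotone throughout this section. By submodularity, every $x \in O \cap E_{i-1}$ satisfies $f(x \mid (O \cap E_{i-1}) - x) \geq f(x \mid O - x) > 0$. Hence $O \cap E_{i-1}$ is already strictly down-monotone and your pruning removes nothing, giving $O'_i = O \cap E_{i-1}$. Now $O \setminus O'_i = O \setminus E_{i-1}$ is disjoint from $B_i$, so the submodularity step $f(B_i \mid O \cap E_{i-1}) \geq f(B_i \mid O)$ is valid. This is exactly the paper's proof. With this observation in place, Step~1 becomes unnecessary and the rest of your argument goes through: the conditioning, the independence of $O'_i$ from the $i$-th run's randomness, and the signs of $d$ and $d'$.
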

\begin{proof}
Notice that the set $B_i$ is the output set of the execution of Algorithm~\ref{alg:1} invoked by Line~\ref{line:restriction} of Algorithm~\ref{alg:repeat}. The ground set of the matroid $k$-parity constraint passed to this execution is $E_{i - 1}$, and thus, $B_i \subseteq E_{i-1}$. Conditioned on $E_{i - 1}$, Proposition~\ref{prop:main-guarantee} implies that
\begin{align} \label{eq:sub-opt}
	(k + d + 1 + 2\eps) \cdot \bE[f(B_i)]
	\geq{} &
	d' \cdot f(O') + d \cdot \expect [f(B_i \mid O')]\\\nonumber
	\geq{} &
	d' \cdot f(O') + d \cdot \expect [f(B_i \mid O' \cup (O \setminus E_{i - 1}))]
\end{align}
for every set $O'$ that is a feasible and strictly down-monotone subset of $E_{i - 1}$. The second inequality in Inequality~\eqref{eq:sub-opt} follows from the submodularity of $f$ because $B_i \subseteq E_{i-1}$. Now, consider the set $O \cap E_{i-1} \subseteq O$. For all $x \in O \cap E_{i - 1}$, we have $f(x \mid O \cap E_{i-1} - x) \geq f(x \mid O - x) > 0$ due to the submodularity of $f$ and the strict down-monotonicity of $O$. Thus, $O \cap E_{i-1}$ must also be a strictly down-monotone solution. Letting $O' = O \cap E_{i-1}$ in~\eqref{eq:sub-opt}, and then taking the expectation over the set $E_{i - 1}$, completes the proof of our claim.
\end{proof}

\begin{lemma} \label{lem:basic_lower_bound_B_prime}
For every $i \in [\ell]$, $\bE[f(B'_i)] \geq \frac{1}{2} \cdot \bE[f(O \cap B_i)]$.
\end{lemma}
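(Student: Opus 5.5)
The plan is to condition on the randomness that determines $B_i$ (i.e., on the random shifts used in the first $i$ executions of Algorithm~\ref{alg:1} and the Double Greedy runs before iteration $i$), apply the guarantee of Double Greedy pointwise, and then take expectations.

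\textbf{Step 1.} Fix any realization of $B_i$. In Line~\ref{line:double_greedy} of Algorithm~\ref{alg:repeat}, Double Greedy is executed on the restriction $f|_{B_i}$. This function is non-negative and submodular, since it is a restriction of the non-negative submodular function $f$. The guarantee of~\cite{buchbinder2015tight} then gives
\[
	\bE[f(B'_i) \mid B_i] \geq \tfrac{1}{2} \cdot \max_{S \subseteq B_i} f(S)\enspace,
\]
where the expectation is over the internal randomness of Double Greedy only. That randomness is independent of everything that determined $B_i$.

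\textbf{Step 2.} The set $O \cap B_i$ is a subset of $B_i$, so it is a candidate in the maximum. Hence $\max_{S \subseteq B_i} f(S) \geq f(O \cap B_i)$, and therefore $\bE[f(B'_i) \mid B_i] \geq \frac{1}{2} f(O \cap B_i)$. This is exactly the argument sketched in the footnote accompanying the description of the framework.

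\textbf{Step 3.} Take expectation over $B_i$ and apply the law of total expectation to obtain $\bE[f(B'_i)] \geq \frac{1}{2}\bE[f(O \cap B_i)]$.

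The only point needing care is Step 1: the conditioning must be justified, which holds because Double Greedy's coins are fresh and independent of $B_i$. Everything else is immediate.
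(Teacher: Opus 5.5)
Your proposal is correct and matches the paper's proof: condition on $B_i$, apply the Double Greedy guarantee to $f|_{B_i}$ using $O \cap B_i \subseteq B_i$, and then take expectation over $B_i$. Your remark that Double Greedy's coins are independent of $B_i$ just makes explicit the conditioning step that the paper uses implicitly.
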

\begin{proof}
By the properties of the Double Greedy algorithm, conditioned on the set $B_i$, $\bE[f(B'_i)] \geq \frac{1}{2} \cdot f(O \cap B_i)$. The lemma follows by taking the expectation of both sides of this inequality over the set $B_i$.
\end{proof}
Combining the above results, we have the following. Recall that $B^*$ denotes the output set of Algorithm~\ref{alg:repeat}.
\begin{corollary} \label{cor:basic_lower_bounds}
For every $i \in [\ell]$ and $d \geq 2$, 
\[
	(k + d + 2d'(i - 1) + 1 + 2\eps) \cdot \bE[f(B^*)]
	\geq
	d' \cdot f(O) + d \cdot \expect [f(B_i \mid O)]
	\enspace,
\]
where $d' = \frac{1 - 1/d}{2 \ln 2} + \frac{d + 1}{2d}$.
\end{corollary}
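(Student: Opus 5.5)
The plan is to start from Lemma~\ref{lem:basic_lower_bound_B} for the index $i$. That lemma already has the term $d \cdot \bE[f(B_i \mid O)]$ on the right, but it has $d' \cdot \bE[f(O \cap E_{i-1})]$ where we want $d' \cdot f(O)$. So the task is to pay for the difference $f(O) - f(O \cap E_{i-1})$ using the sets $B'_1, \dotsc, B'_{i-1}$. Each of these costs $2d'$ copies of $\bE[f(B^*)]$, which explains the extra $2d'(i-1)$ in the coefficient.

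The key step is a lower bound on $f(O \cap E_{i-1})$ in terms of $f(O)$ and the pieces $f(O \cap B_j)$. Since $E_{i-1} = E \setminus \bigcup_{j<i} B_j$ and the $B_j$ are disjoint subsets of $E$, $O$ is the disjoint union of $O \cap E_{i-1}$ and the sets $O \cap B_j$ for $j < i$. Submodularity together with non-negativity of $f$ (subadditivity over disjoint pieces, using $f(\emptyset) \geq 0$) then gives
\[
	f(O) \leq f(O \cap E_{i-1}) + \sum_{j=1}^{i-1} f(O \cap B_j) \enspace.
\]
This holds pointwise for every realization, so it also holds in expectation.

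Next, Lemma~\ref{lem:basic_lower_bound_B_prime} gives $\bE[f(O \cap B_j)] \leq 2\bE[f(B'_j)] \leq 2\bE[f(B^*)]$, because $B^*$ is the maximum-value set among all the $B_j$ and $B'_j$. Hence $d' \cdot \bE[f(O \cap E_{i-1})] \geq d' \cdot f(O) - 2d'(i-1)\cdot\bE[f(B^*)]$. Plugging this into Lemma~\ref{lem:basic_lower_bound_B} and using $f(B_i) \leq f(B^*)$ pointwise, we get
\[
(k+d+1+2\eps)\cdot\bE[f(B^*)] \geq (k+d+1+2\eps)\cdot\bE[f(B_i)] \geq d' \cdot f(O) - 2d'(i-1)\cdot\bE[f(B^*)] + d\cdot\bE[f(B_i \mid O)] \enspace.
\]
The first inequality needs $k+d+1+2\eps > 0$, which is clear. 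Rearranging gives the corollary.

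No step is a real obstacle; the only care needed is the subadditivity step. It must invoke $f(\emptyset)\ge 0$ and disjointness properly, namely by writing $f(X \cup Y) \leq f(X) + f(Y) - f(\emptyset) \leq f(X)+f(Y)$ for disjoint $X, Y$ and iterating this over the $i$ pieces.
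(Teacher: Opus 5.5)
Your proposal is correct and follows the paper's proof essentially line for line. You start from Lemma~\ref{lem:basic_lower_bound_B} and lower bound $f(O \cap E_{i-1})$ by $f(O) - \sum_{j<i} f(O \cap B_j)$ using submodularity over the disjoint partition of $O$; your explicit appeal to $f(\emptyset) \geq 0$ is exactly what the paper's citation of submodularity implicitly relies on. You then pay for each $f(O \cap B_j)$ with $2\bE[f(B'_j)] \leq 2\bE[f(B^*)]$ via Lemma~\ref{lem:basic_lower_bound_B_prime} (which is valid since $d' > 0$) and rearrange, just as the paper does.
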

\begin{proof}
By the definition of $B^*$,
{\allowdisplaybreaks\begin{align*}
	(k + d + 1 + 2\eps) \cdot \bE[f(B^*)]
	\geq{} &
	(k + d + 1 + 2\eps) \cdot \bE[f(B_i)]\\
	\geq{} &
	d' \cdot \bE[f(O \cap E_{i - 1})] + d \cdot \expect [f(B_i \mid O)]\\
	\geq{} &
	d' \cdot \bE\bigg[f(O) - \sum_{j = 1}^{i - 1} f(O \cap B_j)\bigg] + d \cdot \expect [f(B_i \mid O)]\\
	={} &
	d' \cdot f(O) - d' \cdot \sum_{j = 1}^{i - 1} \bE[f(O \cap B_j)] + d \cdot \expect [f(B_i \mid O)]\\
	\geq{} &
	d' \cdot f(O) - 2d' \cdot \sum_{j = 1}^{i - 1} \bE[f(B'_j)] + d \cdot \expect [f(B_i \mid O)]\\
	\geq{} &
	d' \cdot f(O) - 2d'(i - 1) \cdot \bE[f(B^*)] + d \cdot \expect [f(B_i \mid O)]
	\enspace,
\end{align*}}%
where the second inequality holds by Lemma~\ref{lem:basic_lower_bound_B}, the third inequality follows from the submodularity of $f$ since $O \cap B_1, O \cap B_2, \dotsc, O \cap B_{i - 1}$ and $O \cap E_{i - 1}$ are a disjoint partition of $O$, and the penultimate inequality holds by Lemma~\ref{lem:basic_lower_bound_B_prime}. The corollary now follows by rearranging this inequality.
\end{proof}

We are now ready to prove Theorem~\ref{thm:non-monotone_result}, which we repeat here for convenience.
\thmNonMonotoneResult*
\begin{proof}
Summing the guarantee of Corollary~\ref{cor:basic_lower_bounds} over all $i \in [\ell]$, we get
\begin{align*}
	\ell \cdot (k + d + d'(\ell - 1) + 1 + 2\eps) \cdot \bE[f(B^*)]
	={} &
	\sum_{i = 1}^\ell (k + d + 2d'(i - 1) + 1 + 2\eps) \cdot \bE[f(B^*)]\\
	\geq{} &
	d'\ell \cdot f(O) + d \cdot \expect \bigg[\sum_{i = 1}^\ell f(B_i \mid O)\bigg]\\
	\geq{} &
	d'\ell \cdot f(O) + d \cdot \expect [f((\cup_{i = 1}^\ell B_i) \mid O)\bigg]\\
	\geq{} &
	(d'\ell - d) \cdot f(O)
	\enspace,
\end{align*}
where the second inequality follows from the submodularity of $f$, and the last inequality follows from $f$'s non-negativity. Recall that there must be some strictly down-montone set $O$ that is also an optimal solution. Thus, when $\ell \geq 2d$, the approximation ratio of Algorithm~\ref{alg:repeat} is at most
\begin{align*}
	\frac{\ell \cdot (k + d + d'(\ell - 1) + 1 + 2\eps)}{d'\ell - d}
	\leq{} &
	\ell + \frac{\ell \cdot (k + 2d + 1 + 2\eps)}{d'\ell - d}\\
	={} &
	\ell + \frac{k + 2d + 1 + 2\eps}{\frac{1 - 1/d}{2 \ln 2} + \frac{d + 1}{2d} - d/\ell}
	=
	\ell + \frac{k + 2d + 1 + 2\eps}{\frac{1 + \ln 2}{2 \ln 2} - d^{-1}\cdot \frac{1 - \ln 2}{2 \ln 2} - d/\ell}\\
	\leq{} &
	\ell + \frac{(k + 2d + 1 + 2\eps) \cdot \big(1 + 2d^{-1}\cdot \frac{1 - \ln 2}{1 + \ln 2} + \frac{2d}{\ell} \cdot \frac{2 \ln 2}{1 + \ln 2}\big)}{\frac{1 + \ln 2}{2 \ln 2}}
	\enspace,
\end{align*}
where the second inequality holds because for every two positive values $a, b$ obeying $2b \leq a$, it holds that $\frac{1}{a - b} \leq \frac{1 + 2b/a}{a}$.
Choosing now $\ell = \lceil 4k^{2/3} \rceil$ and $d = 2k^{1/3}$, and recalling that $\eps$ is assigned a constant value from $(0, 1)$, we get that the approximation ratio of Algorithm~\ref{alg:repeat} is at most $\frac{2k \ln 2}{1 + \ln 2} + O(k^{2/3})$.

It remains to bound the time complexity of Algorithm~\ref{alg:repeat}. The Double Greedy algorithm of~\cite{buchbinder2015tight} runs in $O(|E|)$ time, and in Appendix~\ref{app:time_complexity}, we show that it is possible to simulate Algorithm~\ref{alg:1} in $O(\eps^{-1}|E|^4)$ time. Since Algorithm~\ref{alg:repeat} invokes each one of these algorithms $\ell = \lceil 3k^{2/3} \rceil$ times, we get that Algorithm~\ref{alg:repeat} can be implemented to run in
\[
	O(k^{2/3} \eps^{-1} |E|^4) = O(k^{2/3} |E|^4) = \Poly(|E|, k)
\]
time. When $k \leq 2|E|$, this implies the theorem. When $k > 2|E|$, the theorem still holds since in this case the na\"{i}ve algorithm that simply returns the best feasible singleton from $E$ already obeys all the properties guaranteed by the theorem.
\end{proof}


\appendix
\section{Polynomial Time Implementation of Algorithm~\ref{alg:1}} \label{app:time_complexity}

In Section~\ref{sec:local-search-algor}, we proved only that Algorithm~\ref{alg:1} terminated. Here, we show that it is possible to efficiently implement Algorithm~\ref{alg:1} in polynomial time. 
Recall that $L$ is the number of iterations that Algorithm~\ref{alg:1} performs. In Section~\ref{sec:local-search-algor}, we have only shown (in Observation~\ref{obs:final-set-L}) that $L$ is finite. One can note, however, that the number of iterations $i$ constructing a non-empty set $A_i$ is limited to $|E|$ because the sets $A_i$ are disjoint. Moreover, immediately after each iteration that constructs a non-empty set $A_{i'}$, and before starting any further iteration, we can already calculate the index $i$ of the next iteration in which $A_{i}$ will be non-empty. Thus, it is possible to accelerate the algorithm by skipping directly to iteration $i$, rather then going through all the iterations $i'+1, i'+2, \dotsc, i - 1$, and generating an empty set in each one of them. Formally, the index $i$ to which we should skip after iteration $i'$ is given by 
\[
	i = \lceil \log_2 (W\tau) - \log_2 w \rceil
	\enspace,
\]
where $w$ is the maximum marginal contribution $f(e \mid A_{\leq i'})$ of any element $e$ with $A_{\leq i'} + e \in \cI$. This formula is well-defined only for $w > 0$, but this is not a problem since the termination condition of Algorithm~\ref{alg:1} is equivalent to terminating when $w \leq 0$. The accelerated version of Algorithm~\ref{alg:1} obtained by implementing this observation is shown in Algorithm~\ref{alg:efficient}.
\begin{algorithm}
  Let $W \gets \max_{e \in E} f(\{e\} \mid \emptyset)$.\;
  Let $\alpha$ be a uniformly random value from $(0,1]$, and let $\tau \gets 2^{\alpha}$.\;
  Define $m_i \triangleq W \tau 2^{-i}$ for every integer $i \geq 0$.\;
  \BlankLine
  Let $i \gets 0$ and $A_{\leq 0} \gets \emptyset$.\;
  \While{there exists $e \in E \setminus A_{\leq i}$ such that $f(e \mid A_{\leq i}) > 0$ and $A\leq i + e \in \cI$}
  {
    Let $i' \gets i$\Comment*{Store the value of $i$ from the previous iteration.}
    Let $E_{i'} \gets \{e \in E \setminus A_{\leq i'} \mid A_{\leq i'} + e \in \cI\}$.\;
    Let $w_{i'} \gets \max_{e \in E_{i'}} f(e \mid A_{\leq i'})$.\;
    Update $i \gets \lceil \log_2 (W\tau) - \log_2 w_{i'} \rceil$.\label{line:next_iteration}\;
    Let $A_i \gets \emptyset$.\;
    \While{there is some $(m_{i}, \eps)$-improvement $(S,N)$ for $A_{\leq i'} \cup A_i$ with $N \subseteq A_i$\label{line:eff-while}}
    {
      $A_i \gets (A_i \setminus N) \cup S$.\label{line:eff-whileend}\;
    }
    Let $A_{\leq i} \gets A_{\leq i'} \cup A_i$.\;
  }
  \Return{$A_{\leq i}$}

\caption{\textsc{Efficient Greedy/Local-Search Hybrid Algorithm} $(E, f, \cI,\eps)$}
\label{alg:efficient}
\end{algorithm}

Similarly to Algorithm~\ref{alg:1}, the condition on Line~\ref{line:eff-while} of  Algorithm~\ref{alg:efficient} implies that the set $A_{\leq i} = A_{\leq i'} \cup A_i$ finalized at the end of each iteration carried out by the algorithm must be locally optimal with respect to $(m_i, \eps)$-improvements. In particular, the following direct analogue of Observation~\ref{obs:local-optimaly} holds.
\begin{observation}
\label{obs:eff-local-optimality}
For every iteration $i$ in which Algorithm~\ref{alg:efficient} produces a set $A_i$, we have $f(e \mid A_{\leq i}) < m_i$ for all $e \in E \setminus A_{\leq i}$ that obey $A_{\leq i} + e \in \cI$ at the end of iteration $i$.
\end{observation}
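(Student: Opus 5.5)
This follows the proof of Observation~\ref{obs:local-optimaly} directly; the only difference is that the outer loop of Algorithm~\ref{alg:efficient} jumps from $i'$ to $i$. Fix an iteration of Algorithm~\ref{alg:efficient} in which the set $A_i$ is produced. Let $i'$ be the index stored at the start of that iteration, so the set improved during the inner loop on Line~\ref{line:eff-while} is $A_{\leq i'} \cup A_i$. When this inner loop terminates, the algorithm sets $A_{\leq i} = A_{\leq i'} \cup A_i$. The plan is to argue by contradiction from the termination condition of that inner loop.

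Suppose there is some $e \in E \setminus A_{\leq i}$ with $A_{\leq i} + e \in \cI$ and $f(e \mid A_{\leq i}) \geq m_i$. I check that the pair $(\{e\}, \emptyset)$ is an $(m_i,\eps)$-improvement of the first kind in Definition~\ref{def:improvement} for the set $A = A_{\leq i'} \cup A_i$:
\begin{compactitem}
\item $S = \{e\} \subseteq E \setminus A$, because $A = A_{\leq i}$ and $e \notin A_{\leq i}$;
\item $(A \cup S) \setminus N = A_{\leq i} + e \in \cI$;
\item $f(e \mid A) = f(e \mid A_{\leq i}) \geq m_i$.
\end{compactitem}
The removed set is $N = \emptyset$, which is trivially a subset of $A_i$. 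So this pair satisfies the condition on Line~\ref{line:eff-while}, and the inner loop could not have terminated. This contradicts the fact that $A_i$ was finalized.

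There is no real obstacle here. The only point needing care is bookkeeping: the previously finalized set in Algorithm~\ref{alg:efficient} is $A_{\leq i'}$ rather than $A_{\leq i-1}$, and the improvement condition refers to $A_{\leq i'} \cup A_i$. This union is exactly the set denoted $A_{\leq i}$ after the iteration ends. With that identification, the argument is word-for-word the one used for Observation~\ref{obs:local-optimaly}.
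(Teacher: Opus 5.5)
Your proposal is correct and matches the paper's reasoning: the paper justifies this observation directly from the termination condition on Line~\ref{line:eff-while}, exactly as in Observation~\ref{obs:local-optimaly}, by noting that $(\{e\},\emptyset)$ would otherwise be an $(m_i,\eps)$-improvement of the first kind for $A_{\leq i'} \cup A_i = A_{\leq i}$. Your identification of the improved set as $A_{\leq i'} \cup A_i$ rather than $A_{\leq i-1} \cup A_i$ is the only bookkeeping needed, and you handle it correctly.
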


Our next objective is to show that Algorithm~\ref{alg:efficient} correctly simulates Algorithm~\ref{alg:1}. We assume in the proof that the two algorithms use the same method for finding the next $(m_i, \eps)$-improvement in each iteration of the inner loop, and we fix the values of any random bits used by this method (if it is randomized), as well as the value of $\alpha$. Once these values are fixed, showing that Algorithm~\ref{alg:efficient} correctly simulates Algorithm~\ref{alg:1} boils down to proving that they both return the same output set. Intuitively, to prove this, we need to show that the ``fast forward'' operation done by Line~\ref{line:next_iteration} of Algorithm~\ref{alg:efficient} works as expected by the intuition given at the beginning of this appendix. The next lemma is the first step towards this goal. It proves the simpler claim that the fast forward operation always increases $i$. Notice that this lemma is necessary for guaranteeing that each iteration of Algorithm~\ref{alg:efficient} has a unique index $i$.
\begin{lemma} \label{lem:i_increase}
Every execution of Line~\ref{line:next_iteration} of Algorithm~\ref{alg:efficient} strictly increases the value of $i$. Moreover, after $i$ is increased, we have $m_i \leq w_{i'} < m_{i-1} = 2m_i$
\end{lemma}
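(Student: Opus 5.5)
The plan is to show two things separately: the chosen $i$ sits exactly where $w_{i'}$ falls in the threshold ladder, and this index is larger than $i'$.

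\textbf{The bracketing $m_i \leq w_{i'} < m_{i-1}$.} Write $x = \log_2(W\tau) - \log_2 w_{i'}$. This is well defined because the outer loop condition guarantees $w_{i'} > 0$. Line~\ref{line:next_iteration} sets $i = \lceil x \rceil$, so $x \leq i < x + 1$. Since $m_j = W\tau 2^{-j}$, we have $m_j \leq w_{i'}$ if and only if $j \geq x$. Hence $i \geq x$ gives $m_i \leq w_{i'}$, and $i - 1 < x$ gives $m_{i-1} > w_{i'}$. The identity $m_{i-1} = 2m_i$ is immediate from the definition of the thresholds.

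\textbf{Case $i' = 0$.} Here $A_{\leq 0} = \emptyset$. By submodularity and the definition of $W$, every element has $f(e \mid \emptyset) \leq W$, so $w_0 \leq W < W\tau = m_0$, using $\tau = 2^\alpha > 1$ since $\alpha > 0$. Therefore $x > 0$ and $i = \lceil x \rceil \geq 1 > i'$.

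\textbf{Case $i' \geq 1$.} Now $i'$ is an iteration that actually produced a set $A_{i'}$. Every $e \in E_{i'}$ satisfies $e \notin A_{\leq i'}$ and $A_{\leq i'} + e \in \cI$. Observation~\ref{obs:eff-local-optimality} then gives $f(e \mid A_{\leq i'}) < m_{i'}$ for all such $e$. Taking the maximum over $e \in E_{i'}$, which is finite and nonempty by the loop condition, yields $w_{i'} < m_{i'}$. Since $m_j \leq w_{i'}$ holds only for $j \geq x$, the strict inequality $m_{i'} > w_{i'}$ forces $i' < x \leq i$. So $i$ strictly increases.

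\textbf{Expected obstacle.} The only delicate point is the base case. There the strict inequality must come from $\tau > 1$ rather than from local optimality, because iteration $0$ runs no local search. This is why the convention $\alpha \in (0,1]$, rather than $[0,1)$, matters.
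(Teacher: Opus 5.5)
Your proof is correct and follows essentially the same route as the paper's. The first execution uses $\tau = 2^{\alpha} > 1$, later executions use Observation~\ref{obs:eff-local-optimality} to get $w_{i'} < m_{i'}$, and the bracketing $m_i \leq w_{i'} < m_{i-1}$ comes from $x \leq \lceil x \rceil < x + 1$. The one small difference is the base case: you use only $w_0 \leq W$, while the paper asserts $w_0 = W$ and hence that $i$ becomes exactly $1$. Your version is slightly more careful, since the singleton attaining $W$ need not be feasible, and only $i \geq 1$ is needed anyway.
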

\begin{proof}
In the first execution of Line~\ref{line:next_iteration}, $w_{0} = W$, which implies that $i$ is increased from $0$ to
\[
	\lceil \log_2 (W\tau) - \log_2 W \rceil
	=
	\lceil \log_2 \tau \rceil
	=
	\lceil \alpha \rceil
	=
	1
	\enspace.
\]

Consider now an arbitrary execution of Line~\ref{line:next_iteration} after the first. At the start of the iteration of this execution, Algorithm~\ref{alg:efficient} considers the set $E_{i'} = \{e \in E \setminus A_{\leq i'} \mid A_{\leq i'} + e \in \cI\}$, where $A_{\leq {i'}}$ is the set produced by the previous iteration. By Observation~\ref{obs:eff-local-optimality}, we must have $f(e \mid A_{\leq i'}) < m_{i'}$ for all $e \in E_{i'}$, and hence, $w_{i'}< m_{i'}$. Thus, 
\[
	\log_2 (W\tau) - \log_2 w_{i'}
	>
	\log_2 (W\tau) - \log_2 m_{i'}
	=
	\log_2 (W\tau) - \log_2 (W\tau2^{-i'})
	=
	i'
	\enspace.
\]
The new value assigned to $i$ by Line~\ref{line:next_iteration} is the ceiling of the term on the left, which must be at least $i' + 1$ since $i'$ is an integer.

For the second part of the claim, note that by definition of $m_i$, 
\[
	m_i
	=
	W\tau 2^{-i}
	=
	W\tau 2^{-\lceil \log_2 (W\tau) - \log_2 w_{i'} \rceil}
	\leq
	W\tau 2^{\log_2 w_{i'} - \log_2 (W\tau)}
	=
	w_{i'}
	\enspace.
\]
Similarly,
\[
	m_{i-1}
	=
	W\tau 2^{-(i-1)}
	=
	W\tau 2^{-(\lceil \log_2 (W\tau) - \log_2 w_{i'} \rceil - 1)}
	>
	W\tau 2^{\log_2 w_{i'} - \log_2 (W\tau)}
	=
	w_{i'}\enspace,
\]
where the strict inequality holds because Algorithm~\ref{alg:efficient} terminates without making a single iteration when $W \leq 0$.
\end{proof}

%

In Section~\ref{sec:local-search-algor}, we argue that each iteration of Algorithm~\ref{alg:1} must terminate. This argument applies also to Algorithm~\ref{alg:efficient}. Let $C$ be the set of all values from $[L]$ that are assigned to $i$ by Algorithm~\ref{alg:efficient} at some point. Clearly, Algorithm~\ref{alg:efficient} constructs sets $A_j$ and $A_{\leq j}$ for each $j \in C$, while Algorithm~\ref{alg:1} constructs sets $A_j$ and $A_{\leq j}$ for every $j \in [L]$. The following lemma gives additional properties of the set $C$.

\begin{lemma}\label{lem:simulation}
For all $j \in [L]$, $j \in C$ if and only if Algorithm~\ref{alg:1} produces a set $A_{j} \neq \emptyset$. Moreover, for all $j \in C \cup \{0\}$,  Algorithms~\ref{alg:repeat} and~\ref{alg:efficient} construct the same set $A_{\leq j}$.
\end{lemma}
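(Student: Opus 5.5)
We will prove both claims together by induction on the elements of $C \cup \{0\}$ taken in increasing order. (The statement says ``Algorithms~\ref{alg:repeat} and~\ref{alg:efficient}''; the intended comparison is between Algorithms~\ref{alg:1} and~\ref{alg:efficient}, and that is what we prove.) The base case $j = 0$ is trivial: both algorithms start with $A_{\leq 0} = \emptyset$, and $0 \notin [L]$.

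For the inductive step, let $i' \in C \cup \{0\}$ be such that both algorithms hold the same set $A_{\leq i'}$, and suppose Algorithm~\ref{alg:efficient} does not terminate after iteration $i'$. Let $i$ be the next index it computes on Line~\ref{line:next_iteration}.

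\textbf{Step 1: the intermediate iterations of Algorithm~\ref{alg:1} produce empty sets.} Fix an iteration $i''$ of Algorithm~\ref{alg:1} with $i' < i'' < i$, and assume inductively that $A_{\leq i''-1} = A_{\leq i'}$. By Lemma~\ref{lem:i_increase}, $w_{i'} < m_{i-1} \leq m_{i''}$. Hence every $e \notin A_{\leq i'}$ with $A_{\leq i'} + e \in \cI$ satisfies $f(e \mid A_{\leq i'}) < m_{i''}$, so no improvement of the first kind exists when $A_{i''} = \emptyset$. Improvements of the second and third kinds require $N \subseteq A_{i''} = \emptyset$, which is impossible since their $N$ is a singleton. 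So $A_{i''} = \emptyset$. Algorithm~\ref{alg:1} does reach these iterations rather than stopping: the outer-loop condition depends only on $A_{\leq i''-1} = A_{\leq i'}$, and it coincides with the condition under which Algorithm~\ref{alg:efficient} continued. The same reasoning shows that when Algorithm~\ref{alg:efficient} terminates, Algorithm~\ref{alg:1} also terminates with the same set.

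\textbf{Step 2: iteration $i$ runs identically.} At iteration $i$, Algorithm~\ref{alg:1} runs its inner loop on $A_{\leq i-1} \cup A_i = A_{\leq i'} \cup A_i$ with threshold $m_i$. This is exactly what Line~\ref{line:eff-while} of Algorithm~\ref{alg:efficient} does. Since we fixed a common deterministic rule (and common randomness) for selecting improvements, the two inner loops produce the same $A_i$, and hence the same $A_{\leq i}$.

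\textbf{Step 3: $A_i \neq \emptyset$.} By Lemma~\ref{lem:i_increase}, $w_{i'} \geq m_i$. So the element attaining $w_{i'}$ gives a first-kind $(m_i,\eps)$-improvement of the empty $A_i$. This guarantees $A_i \neq \emptyset$, because improvements never shrink $A_i$.

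Chaining these steps shows that the indices in $C$ are exactly the iterations in which Algorithm~\ref{alg:1} produces a nonempty set, and that the two algorithms agree on every $A_{\leq j}$ for $j \in C \cup \{0\}$. Since Algorithm~\ref{alg:1} stops after $L$ iterations with $A_L \neq \emptyset$ (Observation~\ref{obs:final-set-L}), it follows that $C \subseteq [L]$.

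The main obstacle is the bookkeeping in Step 1, which relies on the loose claim that no improvement exists when $A_{i''}$ is empty. This needs care because the second and third kinds of improvement require a nonempty $N \subseteq A_{i''}$, and the first kind needs $f(e \mid A_{\leq i'}) \geq m_{i''}$. Both requirements are excluded by the bound $w_{i'} < m_{i-1} \leq m_{i''}$.
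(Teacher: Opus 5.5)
Your proof is correct and takes essentially the paper's approach: you induct over consecutive indices of $C \cup \{0\}$, and the bounds $m_i \le w_{i'} < m_{i-1}$ from Lemma~\ref{lem:i_increase} show that the skipped iterations of Algorithm~\ref{alg:1} are empty and that iteration $i$ runs identically and yields a nonempty set. The differences are minor: you argue directly where the paper argues by contradiction, and you get $A_i \neq \emptyset$ from an improvement available at the start rather than from local optimality at the end (you are also right that ``Algorithm~\ref{alg:repeat}'' in the statement should read ``Algorithm~\ref{alg:1}'').
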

\begin{proof}
We proceed by induction on $j$. For $j = 0$, the lemma simply states that $A_{\leq 0}$ is identical under both Algorithm~\ref{alg:1} and Algorithm~\ref{alg:efficient}, which is the case since both algorithms set it to be the empty set.

Suppose $j \geq 1$, and let $j' < j$ be the next largest index in $C \cup \{0\}$. By the induction hypothesis, both algorithms agree on the value of $A_{\leq j'}$, and for any iteration $i$ with $j' < i < j$, Algorithm~\ref{alg:1} must have $A_i = \emptyset$. Thus, $A_{\leq j-1} = A_{\leq j'} \cup (\cup_{i = j'+1}^{j-1}A_i) = A_{\leq j'}$. We now need to consider two cases, depending on whether $j \in C$ or not. Assume first that $j \in C$. Then, both algorithms produce the same set $A_{j}$ because (i) $A_{\leq j - 1} = A_{\leq j'}$, (ii) we fixed the value of $\alpha$, and (iii) we assume that both algorithms use the same method for choosing $(m_i, \eps)$-improvements and we fixed the random bits used by this method (if any). This implies that both algorithms set $A_{\leq j}$ to the same value because Algorithm~\ref{alg:1} sets $A_{\leq j}$ to $A_j \cup A_{\leq j - 1}$ and Algorithm~\ref{alg:efficient} sets $A_{\leq j}$ to $A_j \cup A_{\leq j'}$. We now show that $A_j \neq \emptyset$. Since $j \in C$, Algorithm~\ref{alg:efficient} has $w_{j'} \geq m_{j}$ by Lemma~\ref{lem:i_increase}. Thus, there must be some $x \in E \setminus A_{\leq j'}$ with $A_{\leq j'} + x \in \cI$ and $f(x \mid A_{\leq j'}) \geq m_i$. If $A_j = \emptyset$, then $A_{\leq j} = A_{\leq j'} \cup \emptyset = A_{\leq j'}$, and hence, it also holds that $A_{\leq j} + x \in \cI$ and $f(x \mid A_{\leq j}) \geq m_j$, contradicting local optimality at the end of the iteration (Observation~\ref{obs:eff-local-optimality}). 

Consider now the case that $j \not \in C$, and assume towards a contradiction that Algorithm~\ref{alg:1} produces a set $A_{j} \neq \emptyset$. This means that Algorithm~\ref{alg:1} performed at least one $(m_j, \eps)$-improvement in iteration $j$, and thus, there exists an element $x \in E \setminus A_{\leq j - 1} = E \setminus A_{\leq j'}$ such that $A_{\leq j'} + x = A_{\leq j - 1} + x \in \cI$ and $f(x \mid A_{\leq j'}) = f(x \mid A_{\leq j - 1}) \geq m_j$. Let us denote now by $\hat{\jmath}$ the value that $i$ gets in Algorithm~\ref{alg:efficient} immediately after the value $j'$. The existence of $x$ guarantees that such a value $\hat{\jmath}$ exists, and by the definition of $j'$, $\hat{\jmath}$ must be at least $j + 1$. Thus, by Lemma~\ref{lem:i_increase},
\[
	w_{j'}
	<
	m_{\hat{j} - 1}
	\leq
	m_j
	\leq
	f(x \mid A_{\leq j'})
	\enspace,
\]
which is a contradiction since the definition $w_{j'}$ guarantees that $w_{j'} \geq f(x \mid A_{\leq j'})$.
\end{proof}

\begin{lemma}
\label{lem:final-set}
Algorithm~\ref{alg:efficient} returns the same set $A_{\leq L}$ as Algorithm~\ref{alg:1}, and the final value of $i$ in this algorithm is $L$.
\end{lemma}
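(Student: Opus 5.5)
We would derive this from Lemma~\ref{lem:simulation}, which says that for every $j \in C \cup \{0\}$ the two algorithms build the same set $A_{\leq j}$, and that $C$ is exactly the set of indices $j \in [L]$ at which Algorithm~\ref{alg:1} produces a non-empty $A_j$. (The lemma says ``Algorithms~\ref{alg:repeat} and~\ref{alg:efficient}'', but from its proof it plainly means Algorithm~\ref{alg:1}.) By Observation~\ref{obs:final-set-L}, if $L \neq 0$ then $A_L \neq \emptyset$, so $L \in C$. If $L = 0$, then $C = \emptyset$ and $L \in C \cup \{0\}$ trivially. In either case Lemma~\ref{lem:simulation} shows that both algorithms construct the same set $A_{\leq L}$.

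It remains to show that Algorithm~\ref{alg:efficient} stops exactly when its variable $i$ equals $L$. First, we check that it reaches $i = L$. Let $j$ be the largest element of $C \cup \{0\}$. If $j < L$, then $L \in C$ (when $L \ge 1$) contradicts maximality, so $j = L$. We also need that every value assigned to $i$ lies in $[L]$, which is how $C$ is defined. For this, suppose Algorithm~\ref{alg:efficient} is at a value $i'$ with $A_{\leq i'}$ equal to Algorithm~\ref{alg:1}'s set and $i' < L$. Then the outer loop condition holds. The next index $i$ is well defined and strictly larger than $i'$ by Lemma~\ref{lem:i_increase}, and by Lemma~\ref{lem:i_increase} again $w_{i'} \ge m_i$.

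We then argue that $i \le L$. All iterations of Algorithm~\ref{alg:1} strictly between $i'$ and its next non-empty iteration produce empty sets. If $i > L$, the element $e$ achieving $w_{i'}$ would satisfy $A_{\leq L} + e = A_{\leq i'} + e \in \cI$, and the improvement $(\{e\},\emptyset)$ would be available in iteration $i$. Since $L$ is the last non-empty iteration, that iteration ends with $A_{\leq L} = A_{\leq i'}$ while still admitting this improvement, a contradiction. The argument is essentially that of Lemma~\ref{lem:simulation}.

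Finally, at $i = L$ the common set $A_{\leq L}$ satisfies the termination condition by Observation~\ref{obs:final-set-L}: no $e \notin A_{\leq L}$ with $A_{\leq L} + e \in \cI$ has positive marginal. Hence the outer while loop of Algorithm~\ref{alg:efficient} exits, it returns $A_{\leq L}$, and its final $i$ is $L$. Conversely, the loop cannot exit earlier at some $i' < L$, because Algorithm~\ref{alg:1} later adds an element with marginal at least $m_L > 0$ that is feasible with respect to $A_{\leq i'}$, by down-closedness. The main care point is keeping the indices straight, namely that the value taken by $i$ after $j'$ is exactly the next element of $C$. That is what the contradiction argument above provides.
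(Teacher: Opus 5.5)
Your first and last paragraphs are exactly the paper's proof. When $L\ge 1$, Observation~\ref{obs:final-set-L} gives $A_L\neq\emptyset$, hence $L\in C$. Lemma~\ref{lem:simulation} (which, as you note, means Algorithm~\ref{alg:1}) then gives the common set $A_{\le L}$. Observation~\ref{obs:final-set-L} again shows that the outer loop of Algorithm~\ref{alg:efficient} exits right after the iteration producing $A_{\le L}$, or before the first iteration if $L=0$.

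The middle paragraph, which tries to show separately that every value assigned to $i$ is at most $L$, is not needed and is incorrect as written. It is not needed because $L\in C$ means $i$ actually takes the value $L$. By the strict increase of $i$ (Lemma~\ref{lem:i_increase}), all earlier values then lie in $[L]$ and the loop did not exit earlier; this also makes your ``conversely'' remark redundant.

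It is incorrect because under your hypothesis $i'<L$ the non-empty set $A_L$ is disjoint from $A_{\le i'}$, so the claimed equality $A_{\le L}=A_{\le i'}$ is false. Moreover, Algorithm~\ref{alg:1} has no iteration $i>L$ in which $(\{e\},\emptyset)$ could be applied. Assuming $i>L$, the maximizer $e$ of $w_{i'}$ only satisfies $f(e\mid A_{\le i'})=w_{i'}<m_{i-1}\le m_L$, so it yields no contradiction.

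A correct direct argument would instead use the element $x$ added by the first improvement of iteration $L$ of Algorithm~\ref{alg:1}. That improvement is necessarily of the first type, since $A_L$ is empty at that point. By down-closedness and submodularity, $x$ is a candidate in the definition of $w_{i'}$ with $f(x\mid A_{\le i'})\ge f(x\mid A_{\le L-1})\ge m_L$. This contradicts $w_{i'}<m_{i-1}\le m_L$ when $i>L$. The same combination of down-closedness and submodularity, not down-closedness alone, is also what your ``conversely'' remark needs.
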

\begin{proof}
We begin by proving that both algorithms produce the same set $A_{\leq L}$. If $L = 0$, then by definition, both algorithms set $A_{\leq L} = A_{\leq 0} = \emptyset$. Otherwise, if $L \geq 1$, then by Observation~\ref{obs:final-set-L}, Algorithm~\ref{alg:1} constructs a non-empty set $A_L$, and thus, by Lemma~\ref{lem:simulation}, Algorithm~\ref{alg:efficient} constructs the same set $A_{\leq L}$ as Algorithm~\ref{alg:1}.

Let us now explain why the lemma follows from the fact that both algorithms produce the same set $A_{\leq L}$.
Observation~\ref{obs:final-set-L} shows that $f(e \mid A_{\leq L}) \leq 0$ for all $e \in E \setminus A_{\leq L}$ with $A_{\leq L} + e \in \cI$. Thus, the main loop of Algorithm~\ref{alg:efficient} must terminate immediately after the iteration constructing $A_{\leq L}$ (or before the first iteration if $L = 0$), which makes this algorithm return $A_{\leq L}$.
\end{proof}

We now get to our final result regarding Algorithm~\ref{alg:efficient}

\begin{proposition}
Algorithm~\ref{alg:efficient} has the same output distribution as Algorithm~\ref{alg:1}, and it runs in $O(\eps^{-1}|E|^4)$ time.
\end{proposition}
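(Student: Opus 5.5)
The output-distribution part follows from Lemma~\ref{lem:final-set}. We fix $\alpha$ and the random bits used to select improvements. Lemma~\ref{lem:final-set} then says both algorithms return the same set $A_{\leq L}$. Since $\alpha$ has the same distribution in both algorithms, the two outputs have the same distribution. What remains is the running-time bound. We count value and independence oracle calls as unit-time operations.

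\emph{Number of outer iterations.} Each outer iteration of Algorithm~\ref{alg:efficient} produces a non-empty set $A_i$. This follows from Lemma~\ref{lem:simulation} together with Lemma~\ref{lem:final-set}, which identify the iterations with the indices in $C$. The sets $A_i$ are pairwise disjoint, so there are at most $|E|$ outer iterations. The overhead of one outer iteration is small. Computing $E_{i'}$ and $w_{i'}$ costs $O(|E|)$ oracle calls, and computing the new index $i$ is a single arithmetic step.

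\emph{Cost of finding one improvement.} Inside an iteration, a single test of the while-loop condition on Line~\ref{line:eff-while} enumerates all candidate pairs $(S,N)$. Here $|S|\le 2$ and $|N|\le 1$ with $N\subseteq A_i$, giving $O(|E|^2\cdot|E|)=O(|E|^3)$ candidates. Each candidate is checked with $O(1)$ oracle calls. So one test costs $O(|E|^3)$.

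\emph{Number of improvements per iteration.} This is the main obstacle. Fix an iteration $i$, and write $A=A_{\leq i'}\cup A_i$.
\begin{itemize}
\item Improvements of the first and third kinds increase $|A_i|$ by one. Improvements of the second kind keep $|A_i|$ fixed and increase $f(A)$ by at least $\eps m_i$.
\item I would bound the total possible increase of $f(A)$ during the iteration. Every element added has marginal value below $m_{i-1}=2m_i$. For a newly added $x$ this holds because $A_{\leq i'}+x\in\cI$ and $f(x\mid A)\le f(x\mid A_{\leq i'})\le w_{i'}<2m_i$ (Lemma~\ref{lem:i_increase}).
\item By submodularity, $f(A)-f(A_{\leq i'})\le \sum_{a\in A_i}f(a\mid A_{\leq i'})<2m_i|A_i|\le 2m_i|E|$. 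Since $f(A_{\leq i'}\cup A_i)$ is at least $f(A_{\leq i'})$ throughout, this quantity lies in $[0,2m_i|E|)$ at all times.
\item A type-1 or type-3 step adds one or two elements with marginals at least $m_i$ and removes at most one element. Using the weights of Lemma~\ref{lem:weights_properties}, the removed element has value below $2m_i$, so such a step can decrease $f$ by less than $2m_i$. I would use the potential $\Phi = f(A)/(\eps m_i)$. The type-2 steps raise $\Phi$ by at least $1$. The type-1 and type-3 steps number at most $|E|$, and each lowers $\Phi$ by $O(1/\eps)$. The range of $\Phi$ is $O(|E|/\eps)$.
\item Hence the number of type-2 steps is $O(|E|/\eps)$, and the total number of improvements in iteration $i$ is $O(|E|/\eps)$.
\end{itemize}
The delicate point is making the decrease bound for type-3 steps rigorous. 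If it proves awkward, I would instead bound $f(A)$ on $[f(A_{\leq i'}),f(A_{\leq i'})+2m_i|E|]$ phase by phase between size increases. That gives the same $O(|E|/\eps)$ count of type-2 steps per iteration.

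\emph{Putting it together.} Summing over iterations would give $|E|$ outer iterations, each with $O(|E|/\eps)$ improvements, each costing $O(|E|^3)$. That is $O(\eps^{-1}|E|^5)$, one factor of $|E|$ more than claimed. To reach $O(\eps^{-1}|E|^4)$ I would amortize the improvement count over all iterations rather than per iteration. The number of size-increasing steps overall is at most $|E|$, since the sets $A_i$ are disjoint and each element is finalized once. The type-2 steps in iteration $i$ are controlled by $|A_i|/\eps$, where $A_i$ is the final set, because $|A_i|$ never decreases within the iteration. Summing $|A_i|$ over iterations gives at most $|E|$. So there are $O(|E|/\eps)$ improvements in total, each costing $O(|E|^3)$, plus $O(|E|^2)$ outer overhead. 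This yields $O(\eps^{-1}|E|^4)$.
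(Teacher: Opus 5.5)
Your proposal is correct and is essentially the paper's argument: Lemma~\ref{lem:final-set} for the output distribution, then the two potentials $|A_i|$ and $f(A_{\leq i'} \cup A_i)$, amortized over the disjoint final sets $\hat{A}_i$, to get $M = O(\eps^{-1}|E|)$ improvements and $M + |E|$ searches of cost $O(|E|^3)$ each. (You left out the one failed search that ends each iteration, but those add only $O(|E|^4)$.) The paper's handling of your ``delicate point'' is cleaner: it shows type-3 steps never decrease $f$, since the gain is at least $2m_i - f(y \mid A_{\leq i'}) \geq 2m_i - w_{i'} > 0$; and the bound on the removed element $y$ should come from $y \in E_{i'}$, giving $f(y \mid A - y) \leq f(y \mid A_{\leq i'}) \leq w_{i'}$, not from Lemma~\ref{lem:weights_properties}, whose weights concern only elements of the final output.
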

\begin{proof}
The first part of the proposition holds since Lemma~\ref{lem:final-set} guarantees that the two algorithms produce the same output set when they draw the same random bits. Thus, in the rest of this proof, we concentrate on proving that Algorithm~\ref{alg:1} can be implemented to run in $O(\eps^{-1}|E|^4)$ time.

By Lemma~\ref{lem:final-set}, $i$ never reaches any value larger than $L$. Together with the definition of $C$, this implies that the number of iterations Algorithm~\ref{alg:efficient} performs is $|C|$. By Lemma~\ref{lem:simulation}, each value $i \in C$ corresponds to a non-empty set $A_i$ in Algorithm~\ref{alg:1}, and since these sets are disjoint by construction, $|C| \leq |E|$. Thus, Algorithm~\ref{alg:efficient} makes at most $|E|$ iterations. Each such iteration takes at most $O(|E|)$ time (the time required to compute $w_i$ and update $i$) plus the time required to search and apply the $(m_i, \eps)$ improvements. Thus, the running time of Algorithm~\ref{alg:efficient} is $O(|E|^2)$ plus the total time spent searching for and applying these improvements across all iterations. 

Let us denote by $M$ the number of improvements found throughout the algorithm. Notice that in each iteration of Algorithm~\ref{alg:efficient}, the number of times that the algorithms looks for an improvement is larger by exactly $1$ compared to the number of improvements found in this iteration. Thus, Algorithm~\ref{alg:efficient} looks for improvements at most $M + |E|$ times. Looking for an improvement can be done in $O(|E|^3)$ time by simply enumerating over all the possible options for a set $S \subseteq E \setminus (A_i \cup A_{\leq i'})$ of size at most $2$ and a set $N \subseteq A_i$ of size at most $1$. Once an improvement is found, applying it (i.e., replacing $A_i$ with $(A_i \setminus N) \cup S$) can be done in $O(1)$ time, and thus, by the above discussion, the time complexity of Algorithm~\ref{alg:efficient} is
\begin{multline} \label{eq:time_complexity}
	O(|E|^2 + \text{\{time to look for an improvement\}} \cdot (M + |E|) + \text{\{time to apply an improvment\}} \cdot M)\\
	=
	O(|E|^2 + |E|^3(M + |E|) + M) = O(|E|^4 + |E|^3M)
	\enspace.
\end{multline}

It remains to bound $M$. Consider the results of applying each of the three types of improvements in Definition~\ref{def:improvement} in some iteration $i$.
\begin{itemize}
\item Applying the first type of improvement increases $|A_{i}|$ by 1 and increases $f(A_{\leq i'}\cup A_i)$ by at least $\theta > 0$.
\item Applying the second type of improvement does not change $|A_i|$ and increases $f(A_{\leq i'})$ by at least $\eps \theta$.
\item Applying the third type of improvement increases $|A_i|$ by 1 and does not decrease the value of $A_{\leq i'} \cup A_i$. This latter claim follows  from the submodularity of $f$ and Lemma~\ref{lem:i_increase}, which together give
  \begin{multline*}
    f((A_{\leq i'} \cup A_i \cup S) \setminus N) - f(A_{\leq i'} \cup A_i) 
    \\
    \begin{aligned}
      &=
      f(x_1 \mid A_{\leq i'} \cup A_i - y) 
      + f(x_2 \mid A_{\leq i'} \cup A_i - y + x_1) 
      - f(y \mid A_{\leq i'} \cup A_i - y) 
      \\
      &\geq 
      f(x_1 \mid A_{\leq i'} \cup A_i) 
      + f(x_2 \mid A_{\leq i'} \cup A_i + x_1) 
      - f(y \mid A_{\leq i'})
      \\
      &\geq
      2m_i - w_{i'} > 0
    \end{aligned}
  \end{multline*}
\end{itemize}
(recall that $x_1$ and $x_2$ are labels given by Definition~\ref{def:improvement} for the two elements of $S$, and $y$ is the single element of $N$). Thus, if we consider $|A_i|$ and $f(A_{\leq i'} \cup A_i)$ as two potentials then no $(m_i,\eps)$-improvement ever decreases these potentials, and each such improvement either increases $|A_i|$ by $1$ or increases $f(A_{\leq i'} \cup A_i)$ by at least $\eps m_{i}$. Let us denote by $\hat{A}_i$ the final value of the set $A_i$ in iteration $i$. Then, the potential $|A_i|$ is non-negative and upper bounded by $|\hat{A}_i|$, and therefore at most $|\hat{A}_i|$ $(m_i, \eps)$-improvements can increase it by $1$. Similarly, by the submodularity of $f$, the potential $f(A_{\leq i'} \cup A_i)$ starts with a value of $f(A_{\leq i'})$ and ends up with the value
\[
	f(A_{\leq i'} \cup \hat{A}_i)
	\leq
	f(A_{\leq i'}) + \sum_{e \in \hat{A}_i} f(e \mid A_{\leq i'})
	\leq
	f(A_{\leq i'}) + |\hat{A}_i| \cdot w_{i'}
	\leq
	f(A_{\leq i'}) + 2|\hat{A}_i| \cdot m_i
	\enspace,
\]
where the last inequality employs Lemma~\ref{lem:i_increase}.
Therefore, at most $2\eps^{-1}|\hat{A}_i|$ improvements can increase this potential by $\eps m_{i}$. Thus, the total number of $(m_i, \eps)$-improvements of both kinds in iteration $i$ of Algorithm~\ref{alg:efficient} is upper bounded by $(1 + 2\eps^{-1})|\hat{A}_i|$. Since the sets $A_i$ are kept disjoint by Algorithm~\ref{alg:efficient}, summing the last expression over all iterations of the algorithm yields $M \leq (1 + 2\eps^{-1})|E| = O(\eps^{-1}|E|)$. The proposition now follows by plugging this bound on $M$ into \eqref{eq:time_complexity}.
\end{proof}


\section{Analyzing the Distribution of \texorpdfstring{$r_o$}{r\_o}} \label{apx:r_is_shift}

The following lemma analyzes the distribution of $r_o$ for an arbitrary element $o \in O$. Recall that $O$ is an arbitrary strictly down-monotone feasible set.

\begin{lemma} \label{lem:shift}
For each $o \in O$, the value $r_o$ defined in Section~\ref{sec:main-charging-scheme} is equal to $2^\beta$ for a random variable $\beta$ distributed uniformly at random in the range $[0,1)$.
\end{lemma}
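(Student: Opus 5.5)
The plan is to compute $r_o$ explicitly in terms of $\alpha$ and show it is a cyclic shift. Fix $o \in O$. The value $u(o)$ depends only on $f$ and the fixed ordering of $O$, not on the algorithm's randomness. The same holds for $W = \max_{e} f(e \mid \emptyset)$. By Observation~\ref{obs:positive_weights_diff} and submodularity, $0 < u(o) \leq W$, so the quantity $\gamma \triangleq \log_2(W/u(o)) \geq 0$ is a deterministic real number. The only randomness is $\alpha$, uniform on $(0,1]$.

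\textbf{Step 1: rewrite the thresholds.} Since $m_i = W 2^{\alpha - i}$, the condition $m_i \geq u(o)$ is equivalent to $\alpha - i \geq -\gamma$, i.e., $i \leq \gamma + \alpha$. Hence the smallest threshold that is at least $u(o)$ corresponds to the largest integer $i \geq 0$ with $i \leq \gamma + \alpha$. This is $i^* = \lfloor \gamma + \alpha \rfloor$, which is nonnegative because $\gamma + \alpha > 0$. It follows that
\[
	r_o = \frac{m_{i^*}}{u(o)} = 2^{\gamma + \alpha - \lfloor \gamma + \alpha\rfloor}
	\enspace,
\]
so $\beta \triangleq \{\gamma + \alpha\}$, the fractional part, lies in $[0,1)$.

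\textbf{Step 2: show $\beta$ is uniform on $[0,1)$.} Write $\gamma = n + g$ with $n \in \bZ$ and $g \in [0,1)$. Then $\beta = \{g + \alpha\}$. On the event $\alpha \in (0, 1-g)$ we have $\beta = g + \alpha$. On the event $\alpha \in [1-g, 1]$ we have $\beta = g + \alpha - 1$. The map $\alpha \mapsto \{g + \alpha\}$ from $(0,1]$ to $[0,1)$ is a bijection that preserves Lebesgue measure piecewise, since it is a translation on each of the two pieces. Therefore, for every interval $[a,b) \subseteq [0,1)$, we get $\Pr[\beta \in [a,b)] = b - a$, and $\beta$ is uniform on $[0,1)$.

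The only delicate point is the boundary bookkeeping. One must check that using $(0,1]$ rather than $[0,1)$ for $\alpha$ changes nothing. The measure-zero endpoint cases (e.g., $\gamma + \alpha$ an integer) give $\beta = 0$, which is consistent with $r_o = 1$ when $u(o)$ equals a threshold exactly. One must also check that $i^* \geq 0$, so that the minimum in the definition of $\mo$ is over a nonempty set. Both checks follow from $u(o) \leq W < m_0$ and from $\alpha > 0$.
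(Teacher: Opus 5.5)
Your proof is correct and follows essentially the same cyclic-shift route as the paper. The paper's $\alpha^*$ equals $1-g$ in your notation, and its two cases $\alpha \in (0,\alpha^*)$ and $\alpha \in [\alpha^*,1]$ are exactly your two translation pieces; the only differences are that the paper checks uniformity by computing the CDF rather than by appealing to measure preservation, and that writing $r_o$ as $2$ raised to the fractional part of $\gamma+\alpha$ is a more compact presentation of the same computation.
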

\begin{proof}
Recall that for all $i \geq 0$, $m_i = W \tau 2^{-i} = W 2^{\alpha - i}$, where $\alpha$ is chosen uniformly at random from $(0,1]$, and note that $r_o = \min\{m_i/u(o) \mid m_i \geq u(o)\}$. Let 
\begin{align*}
i^* &= \lfloor \log_2W - \log_2u(o) \rfloor + 1 \enspace, \quad \text{and} \\
\alpha^* &= i^* - (\log_2 W - \log_2 u(o))\enspace.
\end{align*}
Then, we have $\alpha^* \in (0,1]$, and $i^*$ is a positive integer because the submodularity of $f$ guarantees that $u(o) \leq f(o \mid \emptyset) \leq W$. Additionally, since Observation~\ref{obs:positive_weights_diff} guarantees that $u(o) > 0$, we get that $W$ must be positive as well. Therefore, it holds that $W2^{\alpha^* - i^*} = u(o)$, $W2^{-i^*} < u(o)$, and $W2^{-i^*+1} \geq u(o)$.

The above inequalities imply that, for all $\alpha \in (0, \alpha^*)$,
\begin{align*}
m_{i^*} &= W 2^{\alpha-i^*} < W 2^{\alpha^* - i^*} = u(o) \enspace, \quad \text{and}\\
m_{i^* - 1} &= W 2^{\alpha-i^* + 1} > W 2^{-i^* + 1}  \geq u(o) \enspace,
\end{align*}
and hence, $r_o = m_{i^* - 1}/u(o) = (W 2^{\alpha -i^* + 1})/(W2^{\alpha^*-i^*}) = 2^{\alpha- \alpha^*+1}$. Similarly, for all $\alpha \in [\alpha^*, 1]$, 
\begin{align*}
m_{i^*} &= W 2^{\alpha - i^*} \geq W 2^{\alpha^* - i^*} = u(o) \enspace, \quad \text{and} \\
m_{i^* + 1} &= W 2^{\alpha-i^* - 1} \leq W 2^{-i^*} < u(o) \enspace,
\end{align*}
and hence, $r_o = m_{i^*}/u(o) = (W 2^{\alpha-i^*})/(W2^{\alpha^* - i^*}) = 2^{\alpha - \alpha^*}$. Altogether, we then have $r_o = 2^\beta$, where
\begin{equation*}
\beta = 
\begin{cases}\alpha - \alpha^* + 1, &\alpha \in (0,\alpha^*) \\
\alpha - \alpha^*, &\alpha \in [\alpha^*, 1]\,.
\end{cases}
\end{equation*}

To complete the proof of the lemma, it only remains to show that $\beta$ is distributed uniformly at random in the range $[0,1)$. To show this, we note that for any value $\gamma \in [0,1-\alpha^*]$,
\begin{equation*}
\Pr[\beta \leq \gamma] = \Pr[\alpha^* \leq \alpha \leq \alpha^*+\gamma] = (\alpha^* + \gamma) - \alpha^* = \gamma\enspace,
\end{equation*}
and for any value $\gamma \in (1-\alpha^*, 1)$,
\begin{align*}
\Pr[\beta \leq \gamma] ={} & \Pr[\beta \leq 1-\alpha^*] + \Pr[1-\alpha^* < \beta \leq \gamma]\\
={} & 1 - \alpha^* + \Pr[0 < \alpha \leq \gamma - 1 + \alpha^*]
= 1 - \alpha^* + (\gamma - 1 + \alpha^*) = \gamma\enspace.
\qedhere
\end{align*}
\end{proof}


\bibliographystyle{plain}
\bibliography{MatroidIntersection}

\begin{thebibliography}{10}

\bibitem{arkin1998local}
Esther~M. Arkin and Refael Hassin.
\newblock On local search for weighted $k$-set packing.
\newblock {\em Mathematics of Operations Research}, 23(3):640--648, 1998.

\bibitem{barvinok1995new}
Alexander~I. Barvinok.
\newblock New algorithms for linear $k$-matroid intersection and matroid
  $k$-parity problems.
\newblock {\em Mathematical Programming}, 69:449--470, 1995.

\bibitem{berman2000approximation}
Piotr Berman.
\newblock A $d/2$ approximation for maximum weight independent set in $d$-claw
  free graphs.
\newblock {\em Nordic J. Computing}, 7(3):178--184, 2000.

\bibitem{berman2003optimizing}
Piotr Berman and Piotr Krysta.
\newblock Optimizing misdirection.
\newblock In {\em {ACM-SIAM} Symposium on Discrete Algorithms ({SODA})}, pages
  192--201, 2003.

\bibitem{buchbinder2015tight}
Niv Buchbinder, Moran Feldman, Joseph Naor, and Roy Schwartz.
\newblock A tight linear time (1/2)-approximation for unconstrained submodular
  maximization.
\newblock {\em {SIAM} J. Computing}, 44(5):1384--1402, 2015.

\bibitem{chakrabarti2015submodular}
Amit Chakrabarti and Sagar Kale.
\newblock Submodular maximization meets streaming: matchings, matroids, and
  more.
\newblock {\em Mathematical Programming}, 154(1-2):225--247, 2015.

\bibitem{chandra2001greedy}
Barun Chandra and Magn{\'{u}}s~M. Halld{\'{o}}rsson.
\newblock Greedy local improvement and weighted set packing approximation.
\newblock {\em J. Algorithms}, 39(2):223--240, 2001.

\bibitem{chekuri2015streaming}
Chandra Chekuri, Shalmoli Gupta, and Kent Quanrud.
\newblock Streaming algorithms for submodular function maximization.
\newblock In {\em International Colloquium on Automata, Languages, and
  Programming {ICALP}}, pages 318--330, 2015.

\bibitem{calinescu2011maximizing}
Gruia C{u{a}}linescu, Chandra Chekuri, Martin P{\'{a}}l, and Jan Vondr{\'{a}}k.
\newblock Maximizing a monotone submodular function subject to a matroid
  constraint.
\newblock {\em {SIAM} J. Computing}, 40(6):1740--1766, 2011.

\bibitem{cunningham1986improved}
William~H. Cunningham.
\newblock Improved bounds for matroid partition and intersection algorithms.
\newblock {\em {SIAM} J. Computing}, 15(4):948--957, 1986.

\bibitem{cygan2013improved}
Marek Cygan.
\newblock Improved approximation for 3-dimensional matching via bounded
  pathwidth local search.
\newblock In {\em {IEEE} Symposium on Foundations of Computer Science, {FOCS}},
  pages 509--518, 2013.

\bibitem{edmonds2003submodular}
Jack Edmonds.
\newblock Submodular functions, matroids, and certain polyhedra.
\newblock In Michael J{\"u}nger, Gerhard Reinelt, and Giovanni Rinaldi,
  editors, {\em Combinatorial Optimization --- Eureka, You Shrink!: Papers
  Dedicated to Jack Edmonds}, pages 11--26. Springer Berlin Heidelberg, Berlin,
  Heidelberg, 2003.

\bibitem{feldman2023how}
Moran Feldman, Christopher Harshaw, and Amin Karbasi.
\newblock How do you want your greedy: Simultaneous or repeated?
\newblock {\em J. Machine Learning Research}, 24:72:1--72:87, 2023.

\bibitem{feldman2011improved}
Moran Feldman, Joseph Naor, Roy Schwartz, and Justin Ward.
\newblock Improved approximations for $k$-exchange systems - (extended
  abstract).
\newblock In {\em European Symposium on Algorithms ({ESA})}, pages 784--798,
  2011.

\bibitem{fisher1978analysis}
Marshall~L. Fisher, George~L. Nemhauser, and Laurence~A. Wolsey.
\newblock An analysis of approximations for maximizing submodular set
  functions---{II}.
\newblock {\em Mathematical Programming Studies}, 8:73--87, 1978.

\bibitem{furer2014approximating}
Martin F{\"{u}}rer and Huiwen Yu.
\newblock Approximating the $k$-set packing problem by local improvements.
\newblock In {\em International Symposium on Combinatorial Optimization
  {ISCO}}, pages 408--420, 2014.

\bibitem{greene1975some}
Curtis Greene and Thomas~L. Magnanti.
\newblock Some abstract pivot algorithms.
\newblock {\em {SIAM} J. Applied Mathematics}, 29(3):530--539, 1975.

\bibitem{gupta2010constrained}
Anupam Gupta, Aaron Roth, Grant Schoenebeck, and Kunal Talwar.
\newblock Constrained non-monotone submodular maximization: Offline and
  secretary algorithms.
\newblock In {\em International Workshop on Internet and Network Economics
  ({WINE})}, pages 246--257, 2010.

\bibitem{hazan2006complexity}
Elad Hazan, Shmuel Safra, and Oded Schwartz.
\newblock On the complexity of approximating $k$-set packing.
\newblock {\em Computational Complexity}, 15(1):20--39, 2006.

\bibitem{huang2023matroid}
Chien{-}Chung Huang and Fran{\c{c}}ois Sellier.
\newblock Matroid-constrained vertex cover.
\newblock {\em Theoretical Computer Science}, 965:113977, 2023.

\bibitem{iwata2022weighted}
Satoru Iwata and Yusuke Kobayashi.
\newblock A weighted linear matroid parity algorithm.
\newblock {\em {SIAM} J. Computing}, 51(2):17--238, 2022.

\bibitem{jenkyns1976efficacy}
T.A. Jenkyns.
\newblock The efficacy of the "greedy" algorithm.
\newblock In {\em Southeastern Conf. on Combinatorics, Graph Theory and
  Computing}, pages 341--350, 1976.

\bibitem{jensen1982complexity}
Per~M. Jensen and Bernhard Korte.
\newblock Complexity of matroid property algorithms.
\newblock {\em {SIAM} J. Computing}, 11(1):184--190, 1982.

\bibitem{karp72reducibility}
Richard~M. Karp.
\newblock Reducibility among combinatorial problems.
\newblock In {\em Symposium on the Complexity of Computer Computations}, pages
  85--103, 1972.

\bibitem{korte1978greedy}
Bernhard Korte and Dirk Hausmann.
\newblock An analysis of the greedy heuristic for independence systems.
\newblock {\em Annals of Discrete Mathematics}, 2:65--74, 1978.

\bibitem{lawler1976combinatorial}
E.~Lawler.
\newblock {\em Combinatorial optimization - networks and matroids}.
\newblock Holt, Rinehart and Winston, New York, 1976.

\bibitem{lee2025asymptotically}
Euiwoong Lee, Ola Svensson, and Theophile Thiery.
\newblock Asymptotically optimal hardness for $k$-set packing and k-matroid
  intersection.
\newblock In {\em {ACM} Symposium on Theory of Computing ({STOC})}, pages
  54--61, 2025.

\bibitem{lee10submodular}
Jon Lee, Maxim Sviridenko, and Jan Vondr{\'{a}}k.
\newblock Submodular maximization over multiple matroids via generalized
  exchange properties.
\newblock {\em Mathematics of Operations Research}, 35(4):795--806, 2010.

\bibitem{lee2013matroid}
Jon Lee, Maxim Sviridenko, and Jan Vondr{\'{a}}k.
\newblock Matroid matching: The power of local search.
\newblock {\em {SIAM} J. Computing}, 42(1):357--379, 2013.

\bibitem{lovasz1980matroid}
L{\'{a}}szl{\'{o}} Lov{\'{a}}sz.
\newblock Matroid matching and some applications.
\newblock {\em J. of Combinatorial Theory, Series B}, 28(2):208--236, 1980.

\bibitem{lovasz1981matroid}
L{\'{a}}szl{\'{o}} Lov{\'{a}}sz.
\newblock The matroid matching problem.
\newblock In L.~Lov{\'{a}}sz and V.~T. S{\'{o}}s, editors, {\em Algebraic
  Methods in Graph Theory, Vol. II}, pages 495--517. North-Holland, Amsterdam,
  1981.

\bibitem{marx2009parameterized}
D{\'{a}}niel Marx.
\newblock A parameterized view on matroid optimization problems.
\newblock {\em Theoretical Computer Science}, 410(44):4471--4479, 2009.

\bibitem{mirzasoleiman2016fast}
Baharan Mirzasoleiman, Ashwinkumar Badanidiyuru, and Amin Karbasi.
\newblock Fast constrained submodular maximization: Personalized data
  summarization.
\newblock In {\em International Conference on Machine Learning ({ICML})}, pages
  1358--1367, 2016.

\bibitem{neuwohner2021improved}
Meike Neuwohner.
\newblock An improved approximation algorithm for the maximum weight
  independent set problem in $d$-claw free graphs.
\newblock In {\em International Symposium on Theoretical Aspects of Computer
  Science, {STACS}}, pages 53:1--53:20, 2021.

\bibitem{neuwohner2023passing}
Meike Neuwohner.
\newblock Passing the limits of pure local search for weighted $k$-set packing.
\newblock In {\em {ACM-SIAM} Symposium on Discrete Algorithms ({SODA})}, pages
  1090--1137, 2023.

\bibitem{neuwohner2024limits}
Meike Neuwohner.
\newblock The limits of local search for weighted $k$-set packing.
\newblock {\em Mathematical Programming}, 206(1):389--427, 2024.

\bibitem{schrijver2003combinatorial}
Alexander Schrijver.
\newblock {\em Combinatorial Optimization: Polyhedra and Efficiency}, volume~24
  of {\em Algorithms and Combinatorics}.
\newblock Springer-Verlag, Berlin Heidelberg, 2003.

\bibitem{singer2025better}
Neta Singer and Theophile Thiery.
\newblock Better approximation for weighted k-matroid intersection.
\newblock In {\em {ACM} Symposium on Theory of Computing ({STOC})}, pages
  1142--1153, 2025.

\bibitem{sviridenko2013large}
Maxim Sviridenko and Justin Ward.
\newblock Large neighborhood local search for the maximum set packing problem.
\newblock In {\em International Colloquium on Automata, Languages, and
  Programming ({ICALP})}, pages 792--803, 2013.

\bibitem{thiery2023thesis}
Theophile Thiery.
\newblock {\em Approximation Algorithms for Independence Systems}.
\newblock {Ph.D.} thesis, Queen Mary University of London, London, United
  Kingdom, 2023.

\bibitem{thiery2023improved}
Theophile Thiery and Justin Ward.
\newblock An improved approximation for maximum weighted $k$-set packing.
\newblock In {\em {ACM-SIAM} Symposium on Discrete Algorithms, {SODA}}, pages
  1138--1162, 2023.

\bibitem{ward2012approximation}
Justin Ward.
\newblock A $(k+3)/2$-approximation algorithm for monotone submodular k-set
  packing and general k-exchange systems.
\newblock In {\em International Symposium on Theoretical Aspects of Computer
  Science ({STACS})}, pages 42--53, 2012.

\end{thebibliography}

\end{document}